\newcommand{\blind}{1}
\let\cite\textcite
\OR\iffieldundef{postnote}\)}
\OR\iffieldundef{postnote}\)}
               \global\booltrue{cbx:parens}%
\AND\value{citetotal}=1\)}
             \global\boolfalse{cbx:parens}}
            \global\booltrue{cbx:parens}%
\AND\value{citetotal}=1\)}
             \global\boolfalse{cbx:parens}}
\global\boolfalse{cbx:parens}}
\definecolor{pathway1}{HTML}{E6194B}
\definecolor{pathway2}{HTML}{3CB44B}
\definecolor{pathway3}{HTML}{FFE119}
\definecolor{pathway4}{HTML}{4363D8}
\definecolor{pathway5}{HTML}{F58231}
\definecolor{pathway6}{HTML}{911EB4}
\definecolor{pathway7}{HTML}{46F0F0}
\definecolor{pathway8}{HTML}{F032E6}
\definecolor{pathway9}{HTML}{BCF60C}
\definecolor{pathway10}{HTML}{FABEBE}
\definecolor{pathway11}{HTML}{008080}
\definecolor{pathway12}{HTML}{E6BEFF}
\definecolor{darkblue}{rgb}{0.0,0.0,0.6}
\newcounter{mysfig}
\renewcommand\themysfig{(\alph{mysfig})}
\newcommand\Scaption[1]{%
\refstepcounter{mysfig}%
  \sbox\@tempboxa{\small\themysfig~#1}%
  \ifdim \wd\@tempboxa >\hsize
    \small\themysfig~#1\par
  \else
    \global \@minipagefalse
    \hb@xt@\hsize{\hfil\box\@tempboxa\hfil}%
  \fi
  }
\newtheorem{theorem}{Theorem}[section]
\newtheorem{lemma}[theorem]{Lemma}
\chardef\csname pre amssym.tex at\endcsname=\the\catcode`\@
\def\input#1 {\endgroup}\fi
\newcommand{\Evec}{\ensuremath{\bm{E}}}
\newcommand{\Vvec}{\ensuremath{\bm{V}}}
\newcommand{\Xvec}{\ensuremath{\bm{X}}}
\newcommand{\Yvec}{\ensuremath{\bm{Y}}}
\newcommand{\alphavec}{\ensuremath{{\bm{\alpha}}}}
\newcommand{\betavec}{\ensuremath{{\bm{\beta}}}}
\newcommand{\thetavec}{\ensuremath{{\bm{\theta}}}}
\newcommand{\xivec}{\ensuremath{{\bm{\xi}}}}
\newcommand{\Dmat}{\ensuremath{\bm{D}}}
\newcommand{\Imat}{\ensuremath{\bm{I}}}
\newcommand{\Qmat}{\ensuremath{\bm{Q}}}
\newcommand{\Umat}{\ensuremath{\bm{U}}}
\newcommand{\Xmat}{\ensuremath{\bm{X}}}
\newcommand{\Ymat}{\ensuremath{\bm{Y}}}
\newcommand{\be}{\begin{equation}}
\newcommand{\ee}{\end{equation}}
\newcommand{\beqa}{\begin{eqnarray*}}
\newcommand{\eeqa}{\end{eqnarray*}}
\newcommand{\beqn}{\begin{eqnarray}}
\newcommand{\eeqn}{\end{eqnarray}}
\newcommand{\ba}{\begin{array}}
\newcommand{\ea}{\end{array}}
\newcommand{\bc}{\begin{center}}
\newcommand{\ec}{\end{center}}
\newcommand{\btab}{\begin{tabular}}
\newcommand{\etab}{\end{tabular}}
\newcommand{\mb}{\makebox}
\newcommand{\st}{\stackrel}
\newcommand{\Ind}{1\!\mathrm{l}}
\newcommand{\ind}{\, \raise-2pt\hbox{$\st{\mb{\scriptsize ind}}{\sim}$}\, }
\newcommand{\iid}{\, \raise-2pt\hbox{$\st{\mb{\scriptsize iid}}{\sim}$}\,}
\newcommand{\bX}{\bm{X}}
\newcommand{\bY}{\bm{Y}}
\newcommand{\bgamma}{\bm{\gamma}}
\newcommand{\bSigma}{\bm{\Sigma}}
\mathchardef\given="626A
\long\def\beginskip#1\endskip{}
\def\endskip{}
\newtheorem{condition}[theorem]{Condition}
\numberwithin{theorem}{section}
\newcommand{\edge}[2]{\texttt{#1}$\leftarrow$\texttt{#2}}
\begin{document}

\def\spacingset#1{\renewcommand{\baselinestretch}%
{#1}\small\normalsize} \spacingset{1}


\if1\blind
{
  \title{\bf Bayesian Covariate-Dependent Quantile Directed Acyclic Graphical Models for Individualized Inference}
  \author{Ksheera Sagar \\ 
    Department of Statistics, Purdue University \\ 
    and\\
    Yang Ni\\
    Department of Statistics, Texas A\&M University\\
    and\\
    Veerabhadran Baladandayuthapani\\
    Department of Biostatistics, University of Michigan\\
    and\\
    Anindya Bhadra\thanks{Address for correspondence: 150 N. University St., West Lafayette, IN 47906. Email: bhadra@purdue.edu}\\
    Department of Statistics, Purdue University}
    \date{}
  \maketitle
} \fi

\if0\blind
{
  \bigskip
  \bigskip
  \bigskip
  \begin{center}
    {\LARGE\bf Bayesian Covariate-Dependent Quantile Directed Acyclic Graphical Models for Individualized Inference}
\end{center}
  \medskip
} \fi

\bigskip
\begin{abstract}
We propose an approach termed ``qDAGx'' for Bayesian covariate-dependent quantile directed acyclic graphs (DAGs) where these DAGs are individualized, in the sense that they depend on individual-specific covariates. The individualized DAG structure of the proposed approach  can be uniquely identified at any given quantile, based on purely observational data without strong assumptions such as a known topological ordering. To scale the proposed method to a large number of variables and covariates, we propose for the model parameters a novel parameter expanded horseshoe prior that affords a number of attractive theoretical and computational benefits to our approach. By modeling the conditional quantiles, qDAGx overcomes the common limitations of mean regression for DAGs, which can be sensitive to the choice of likelihood, e.g., an assumption of multivariate normality, as well as to the choice of priors. We demonstrate the performance of qDAGx through extensive numerical simulations and via an application in precision medicine, which infers patient-specific protein--protein interaction networks in lung cancer.
\end{abstract}

\noindent%
{\it Keywords:}  DAG identifiability, Global-local shrinkage priors, Precision medicine, Quantile regression, Varying sparsity model.
\vfill

\newpage
\doublespacing
\section{Introduction}
Graphs are one of the most common tools for studying associations between variables in multivariate data. A graph can be denoted as an ordered pair $\mathcal{G}=\left(\Vvec,\Evec\right)$, where $\Vvec$ denotes a set of nodes and $\Evec$ denotes the set of edges or associations between these nodes. If the data comprise of continuous variables, the literature on embedding a probabilistic model to study the graph structure is rich~\citep{lauritzen1996, jordan2004graphical, koller2009probabilistic}. Under a topological ordering of the nodes, the most popular among these probabilistic approaches is the Gaussian directed acyclic graphs (DAGs). Despite this popularity, the limitations of Gaussian DAGs include a specific assumption on the likelihood and the consequent lack of robustness to model misspecification. We address this by proposing a technique that circumvents the Gaussian likelihood assumption and can model association between variables at any given quantile level, $\tau\in(0,\,1)$. Further, typical approaches for DAG inference are global, in the sense that all samples are aggregated to produce a single DAG estimate at the population level, which could be a modeling limitation when \emph{individualized inference} is a central goal, for example, in  \emph{precision medicine}. Our approach remedies this situation by modeling the conditional quantiles as a function of individual-specific covariates, which then allows the inferred quantile DAGs to vary across samples. 

Using a topological sort, one can permute the nodes of a DAG, which then renders the corresponding adjacency matrix of $\mathcal{G}$ strictly upper triangular. This permutation, which is not necessarily unique, is generally known as an \emph{ordering} of the nodes. Given $p$ variables or nodes, $\Ymat =\left(\Yvec_1,\,\ldots,\Yvec_p\right)$ in a DAG, the ordering of the nodes implies, for every directed edge $\Yvec_{h}\leftarrow \Yvec_{j}$ $\forall\, h,\, j\in\{1,\ldots,p\} \text{ and }h\neq j$, $\Yvec_{h}$ appears \emph{before} $\Yvec_{j}$ in the order. In Gaussian models, \citet{wu2003nonparametric} and \citet{ huang2006covariance} have worked with longitudinal data and time-ordered random vectors respectively, in which the ordering of the nodes is natural. However, there are several other applications where a natural ordering may not exist, for example on a lattice on an isotropic random field. The usual remedy is to  impose a \emph{fixed} ordering of nodes in a multivariate Gaussian model to infer the structure of the underlying DAG~\citep{drton2008sinful, altomare2013objective, ni2019bayesian, makam2021symmetries}. However, an inference procedure relying on an imposed ordering can be sensitive to its mis-specification.

A generic way to infer Gaussian DAGs is to consider the joint likelihood $\pi(\Yvec) = \prod_{h=1}^{p}\pi(\Yvec_h \mid pa(h))$ and to estimate the coefficients $\beta_{hj}$ in the corresponding node conditional regression: $Y_{ih}=\sum_{j\in pa(h)}\beta_{hj}Y_{ij}+\varepsilon_{ih}$, where $\Yvec_h = (Y_{1h},\ldots,Y_{nh})^T$, $i\in\{1,\ldots,n\}$, $\varepsilon_{ih}\sim\mathcal{N}(0,\sigma_h^2)$ and $pa(h)$ denotes the parent set of $\Yvec_h$ i.e., the set of nodes $\Yvec_j$s for which there exists an edge $\Yvec_h\leftarrow\Yvec_j$. With these rudimentary notations in place, we are now
poised to summarize some attributes of a Gaussian DAG model more precisely; and point out some cases where these could be limitations. 
\begin{enumerate}
\item[(a)] An assumption of Gaussian likelihood, like any other likelihood based modeling framework, is susceptible to model mis-specification; in the sense that a mis-specified likelihood could lead to wrong inference or poor performance.

\item[(b)] The ordering  of nodes is typically assumed known, i.e., $pa(h)$ is constrained to be a subset of the nodes that appear later than $\Yvec_h$ in the order, which could be arbitrary in some applications. In the absence of a given ordering, although techniques such as the PC algorithm~\citep{spirtes2000causation} could be used to extract a (partial) ordering, the inferred DAG is not unique, and inference could become conditional on the specific extracted ordering.

\item[(c)] Yet another feature of a Gaussian likelihood assumption is that $\beta_{hj}$ cannot be estimated only at a given quantile of interest $\tau\in(0,1)$ without modeling the entire distribution. This could be a limitation when the primary objective is  modeling certain quantiles, typical for example in diseases such as cancer, which are often characterized by genomic, transcriptomic and proteomic changes relative to the control group~\citep{zhang1997gene} at higher or lower quantiles, while exhibiting similar behavior towards the center of the distribution. In such situations,  a quantile-based approach could be better suited to delineate certain parts of the distribution, without modeling the entire distribution.


\item[(d)]  Finally, for a Gaussian DAG, the coefficient ${\beta}_{hj}\neq 0$ signifies an edge $\Yvec_h\leftarrow\Yvec_j$ for all observations $i\in\{1,\ldots,n\}$, which does not allow for individualized inference on these coefficients. This aspect is unappealing for many modern applications, including \emph{precision medicine}, where the focus may be on inferring a protein--protein interaction network whose structure and strength depend on individual genotypes or other subject-specific covariates.
\end{enumerate}

There are some existing approaches that partially address some of these limitations. For example, the linear non-Gaussian acyclic model \citep{shimizu2006linear} addresses limitation (b) by taking advantage of the identifiability theory of independent component analysis. While this method considers non-Gaussian errors, it still proceeds by modeling the conditional expectations, which is only possible for certain classes of likelihood. Hence, this approach only partially addresses limitation (a), which is also true for many other identifiable DAG models \citep[e.g.,][]{hoyer2008nonlinear}. 
The varying coefficient model \citep{hastie1993varying} provides one approach for modeling individualized coefficients. Inspired by this approach, \citet{ni2019bayesian} have addressed limitation (d) in undirected Gaussian graphical models by modeling $\beta_{hj}$ as $\beta_{hj}(\Xvec_i)$, where $\Xvec_i$ corresponds to observation-specific external covariates. But limitations (a)--(c) listed above still 
remain.
To our knowledge, there do not exist DAG methods that simultaneously address all the aforementioned limitations.

Hence, to simultaneously overcome limitations (a)--(d), we propose a new model termed ``qDAGx" that learns quantile directed acyclic graphs (quantile-DAGs), $\mathcal{Q}\mathcal{G}_i^{(\tau)}=(\Vvec_i,\,\Evec_i^{(\tau)})$, at any given quantile $\tau\in(0,1)$, where the conditional quantile functions depend on individual-specific covariates ($\Xvec_i$). Our model is robust to likelihood mis-specification, does not assume a known ordering of the nodes, and infers individualized, quantile-specific DAGs. Figure~\ref{intitution_pic} provides some intuition for the modeling principle followed in qDAGx (with one covariate as an example), for which the DAG structure and edge strength vary smoothly as a function of an individual-specific covariate at several quantiles.

\begin{figure}[!t]
    \centering
    \includegraphics[width = \textwidth]{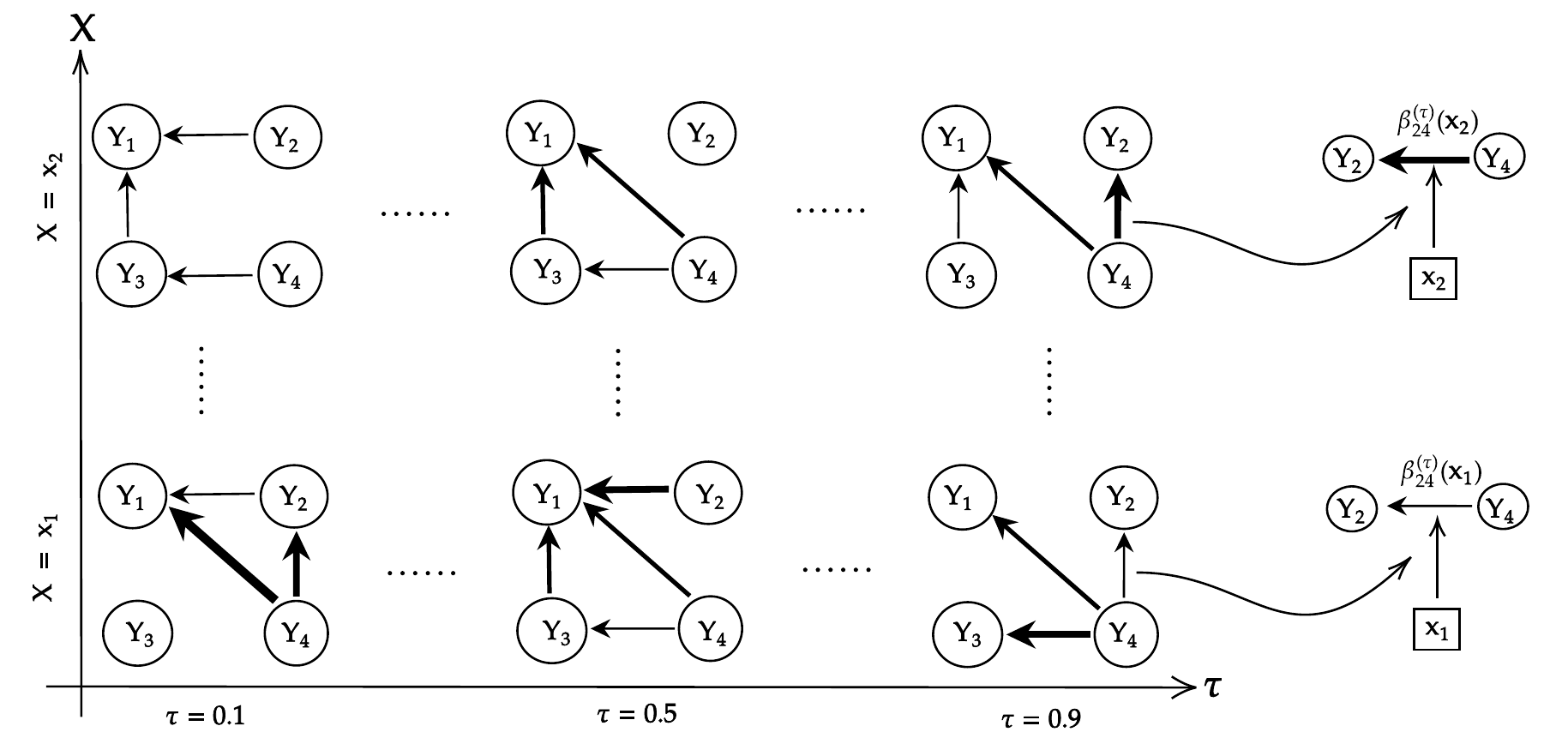}
    \caption{\label{intitution_pic} Schematic representation of qDAGx.  The directed acyclic graphs $\mathcal{Q}\mathcal{G}_i^{(\tau)}$, for $n=2$ observations, on $p=4$ vertices $\Yvec=(\Yvec_1,\,\Yvec_2,\Yvec_3,\,\Yvec_4)$ are presented, for univariate $\Xvec_i,\; i=1,\ldots,n;$ and for given quantile levels $\tau=0.1,\,0.5\text{ and }0.9$. Thickness of an edge $\Yvec_h\leftarrow\Yvec_j;\,h,\,j\in\{1,\ldots,p\}$ and $h\neq j$ represents the strength of  association between $\Yvec_h$ and $\Yvec_j$ and the direction denotes the influence of a parent ($\Yvec_j$) on its child ($\Yvec_h$). The figure further shows the individualized nature of qDAGx, i.e., each directed edge between the nodes is influenced by external covariates $\Xvec_i$.  If the coefficient $\beta_{24}^{(\tau)}(\Xvec_i)\neq 0$, then there exists an edge from $\Yvec_4$ to $\Yvec_2$ at quantile $\tau$ for individual $i$.
    }
\end{figure}

From an applied perspective,  learning DAGs provides crucial tools for understanding the topology of gene regulatory networks and protein--protein interaction networks \citep{segal2003module,friedman2004inferring, mallick2009bayesian}. However, their applications in the context of {personalized inference} that is robust to a {mis-specified likelihood} is limited, which we seek to address in the current work through an application of our methodology in precision medicine by inferring patient-specific protein--protein interaction networks in lung cancer. Moreover, it is worth mentioning at this juncture that apart from precision medicine, our proposed approach is applicable far more broadly to applications where similar individualized inference may be of interest. For example, in infectious disease epidemiology, an important problem is the study of contact networks, which are naturally individual-specific. Further, the edge strengths in such networks could be  modeled as a function of individual-specific covariates such as age or other individual risk factors  \citep{keeling2005networks}.

\subsection{Related works on quantile graphical models and foundations of qDAGx}
\label{foundations_qDAGx}
The inference for qDAGx proceeds via modeling the association between a node $\Yvec_h$ and its parents through a conditional quantile function. Introduced in the seminal work of \citet{koenker1978regression}, quantile regression (QR) has found applications in diverse domains, particularly in economics, management and quantitative finance \citep{chamberlain1994quantile, yu2003quantile, li2015moving, koenker2017quantile}. We refer the readers to  \cite{briollais2014application} for applications of QR in recent genetic and -omic studies. In the Bayesian setting, \citet{tsionas2003bayesian} demonstrated the equivalence of QR with location-scale mixture of normals leading to an asymmetric Laplace distribution. This representation enables a data augmented Gibbs sampler, facilitating the study  of penalized QR in various settings \citep[see][and references therein]{li2010bayesian}. \citet{sriram2013posterior} have studied the posterior consistency of Bayesian QR, based on a mis-specified asymmetric Laplace density. A recent work inspired by varying coefficient models in quantile regression is by \citet{QUANTICO}, but they do not consider graphical models.  Inferring undirected quantile graphical models with penalized QR has been studied by \citet{guha2020quantile}, including its consistency properties. Another work in the area of penalized quantile graphical models is by \citet{NIPS2016_537de305}, who estimate undirected quantile graphs at multiple quantile levels, using pseudolikelihood-based approach. But, their approach can neither infer DAGs nor provide individualized quantile graphs like the proposed qDAGx. Similarly, although \cite{guha2020quantile}  deal with quantile graphs, their model has the same drawbacks as \citet{NIPS2016_537de305}, when compared with qDAGx. 
\subsection{Summary of our contributions and organization of the article}
Our key contributions in this paper can be summarized as follows.
\begin{enumerate}
    \item[(a)]\textbf{Methodological:} We propose a technique for learning individual-specific DAGs at any quantile level $\tau\in(0,\,1)$, with no assumptions on the data likelihood or on the ordering of nodes. These features enable us to capture associations between variables for every observation at any quantile level $\tau$ (see Figure~\ref{intitution_pic}). Individualized inference in our modeling framework is possible by estimating functional forms for the edge associations (e.g., $\beta_{24}^{(\tau)}(\bX_i)$ in Figure~\ref{intitution_pic}), as specified explicitly later in Equations~\eqref{beta_hj} and~\eqref{theta_formulation}. Estimation proceeds via ``borrowing strength'' across all observations to infer population level parameters. Individual-specific inference is then possible by using covariate $\bX_i$ for the $i$th individual in the estimated functional forms, which are parameterized by these population-level parameters. 
    
    \item[(b)]\textbf{Theoretical:} We establish structural identifiability of the quantile-DAGs inferred by qDAGx by showing under mild conditions on the model parameters that the inferred quantile DAG is unique. We further prove a \emph{non-local} property of our prior, which aids in sparse quantile DAG discovery, and also prove posterior consistency of the fitted density at any node $\Yvec_h,\, h\in\{1,\ldots,p\}$.  
    
    \item[(c)]\textbf{Applied:} As an application of qDAGx to \emph{precision medicine}, we infer \emph{individual-specific} protein--protein interaction networks in patients with lung adenocarcinoma and lung squamous cell carcinoma. Although there exist population-level network inference, individualized inference similar to ours has been unexplored in the literature so far. We model the protein--protein association in each patient at a quantile level $\tau$, as a function of external covariates mRNA and methylation, which are known to play an important role in the disease. 
\end{enumerate}

A brief outline of the remainder of the paper is as follows. In Section~\ref{qDAGx_master_intro_section}, we introduce the proposed qDAGx model. Theoretical properties, including model identifiability, a  \emph{non-local} property of the prior, and posterior consistency are discussed in Section~\ref{theorectical_properties_qDAGx}. We provide numerical results in Section~\ref{simulation_results}, demonstrating the performance of qDAGx in learning individualized quantile-DAGs. Section~\ref{real_data_application_qDAGx} discusses the aforementioned application of qDAGx in precision medicine. We conclude in Section~\ref{conclusion_qDAGx} discussing some potential future directions.  

\section{qDAGx: Individualized Quantile Graphical Models with DAG Learning}
\label{qDAGx_master_intro_section}
Let the $p$ response variables be denoted as $\Yvec_1, \ldots,\Yvec_p$, where each $\Yvec_h,\,h\in\{1,\ldots,p\}$ is a $n$ dimensional vector consisting of $n$ observations for the $h^\mathrm{th}$ response variable; i.e., $\Yvec_h=\left(Y_{1h},\ldots,Y_{nh}\right)^T$. Let $\Xvec_1, \ldots,\Xvec_q$ be the $q$ covariates (e.g., observed external covariates or prognostic factors), which influence the association between the response variables. These associations are modeled  using a varying sparsity framework (Section~\ref{varying_coeff_form}), which enables variable selection of the response variables, for the parent set of a given node, and also aids in the variable selection of covariates, influencing the edges between a given node and its parents. Each $\Xvec_k$, $k\in\{1,\ldots,q\}$ is also a $n$ dimensional vector over the same observations for the $k^\mathrm{th}$ covariate; i.e., $\Xvec_k=\left(X_{1k},\ldots,X_{nk}\right)^T$. We define the quantile graph for  the $i^\mathrm{th}$ observation at quantile level $\tau\in(0,1)$ as $\mathcal{Q}\mathcal{G}^{(\tau)}_i=\left(\Vvec_i,\,\Evec_i^{(\tau)}\right)$ where the set of vertices $\Vvec_i$ represent $Y_{ih},\, h\in\{1,\ldots,p\}$ and the set of directed edges $\Evec_i^{(\tau)}$ between $Y_{ih}$ and $Y_{ij}$ for $h\neq j$ arise from the conditional quantile dependence of $Y_{ih}$ on $Y_{ij}$, at quantile level $\tau\in(0,1)$. With a slight abuse of notation, $\mathcal{Q}\mathcal{G}^{(\tau)}_i$ can be thought as an adjacency matrix, whose entries are either 1 or 0, based on presence or absence of the corresponding edge. Since in many applications such as protein networks it would be hard to interpret an edge having opposite directions in different DAGs, $\mathcal{Q}\mathcal{G}^{(\tau)}_i$, for $i=1,\dots,n$, we impose the following acyclic condition on the union of quantile graphs of all observations.
\begin{condition}
\label{union_DAG_Condition}
Let $ \mathcal{Q}\mathcal{G}^{(\tau)}_u=\bigcup_{i=1}^n \mathcal{Q}\mathcal{G}^{(\tau)}_i$ denote the union of quantile graphs at quantile $\tau$. We restrict $\mathcal{Q}\mathcal{G}^{(\tau)}_u$ to be a DAG.
\end{condition}
Under Condition~\ref{union_DAG_Condition}, each $\mathcal{Q}\mathcal{G}^{(\tau)}_i$ for $i\in\{1,\ldots,n\}$ is necessarily a DAG, since their union is a DAG. In other words, given an edge $Y_{ih}\leftarrow Y_{ij}$ $\forall\, h,\, j\in\{1,\ldots,p\},\,h\neq j$ at quantile level $\tau$ for some $i$, the edge $Y_{i'h}\rightarrow Y_{i'j}$ does not exist for any $i'$. This restriction is sensible in our motivating biological application of  patient-specific protein--protein interaction networks in lung cancer, where the direction of the edges arises from some shared biological phenomena. For example, in the
analysis of gene expression data, the strengths of regulatory links may vary across individuals, but in
general they do not change direction~\citep{wang2020high}. The union-DAG condition~\eqref{union_DAG_Condition} also makes qDAGx computationally tractable because one does not have to check the acyclicity of $\mathcal{Q}\mathcal{G}^{(\tau)}_i$ for all $i$; checking $\mathcal{Q}\mathcal{G}^{(\tau)}_u$ would suffice. 

Since all  $\mathcal{Q}\mathcal{G}^{(\tau)}_i$s are DAGs, there exists an ordering of the nodes such that for every directed edge $Y_{ih}\leftarrow Y_{ij}$, $Y_{ih}$ appears before $Y_{ij}$ in the ordering. For a given edge $Y_{ih}\leftarrow Y_{ij}$, the node $Y_{ih}$ is called the child and $Y_{ij}$ the parent. The collection of all the parents of $Y_{ih}$ is denoted by $pa_i(h)$. Denoting $\Xvec_{i\cdot}=(X_{i1}, \ldots, X_{iq})$, we write the varying sparsity model for conditional quantile of $Y_{ih}$ at quantile level $\tau\in(0,1)$, denoted as $Q_{Y_{ih}}(\tau \mid \cdot)$, as follows:
\begin{equation}
\label{patient_specific_model}
     Q_{Y_{ih}}(\tau\mid Y_{ij},\,\Xvec_{i\cdot}) =  \beta_{h0}^{(\tau)}(\Xvec_{i\cdot}) + \sum_{j\in pa_i(h)} Y_{ij}\beta_{hj}^{(\tau)}(\Xvec_{i\cdot}),
\end{equation}
where $\beta_{h0}^{(\tau)}(\cdot)\text{ and }\beta_{hj}^{(\tau)}(\cdot)$ are the coefficients whose functional forms remain the same for all $i$ at a given $\tau$. However, their values change depending on the covariates $\Xvec_{i\cdot}$, which are observation specific. We say there is an edge $Y_{ih}\leftarrow Y_{ij}$ at quantile level $\tau$, if $\beta_{hj}^{(\tau)}(\Xvec_{i\cdot})\neq 0$, or in other words, the association between $Y_{ih}$ and $Y_{ij}$, at quantile level $\tau$, is quantified by $\beta_{hj}^{(\tau)}(\Xvec_{i\cdot})$.
\subsection{Working likelihood of the proposed model}
\label{section_qDAGX_working_likelihood}

Minimizing the `check loss'~\citep{koenker1978regression} gives the optimal values for the coefficients in the varying sparsity model of~\eqref{patient_specific_model}, which further leads to a working likelihood \citep{yang2016posterior}. With the observation-specific model in~\eqref{patient_specific_model}, the check loss function can be obtained as, 
\begin{equation}
\label{loss_function}
    L(\tau) = \sum_{i=1}^n\sum_{h=1}^{p}\psi_\tau\left(Y_{ih} -  \beta_{h0}^{(\tau)}(\Xvec_{i\cdot})  - \sum_{j\in pa_{i}(h)} Y_{ij}\beta_{hj}^{(\tau)}(\Xvec_{i\cdot})\right),
\end{equation}
where $\psi_\tau(x) = \tau x\Ind(x\geq 0) -(1-\tau)x\Ind(x<0)$. 
The check loss in \eqref{loss_function} can also be viewed as a sum of negative log-likelihoods of an additive noise model with independently distributed asymmetric Laplace noises $u_{ih}$ with density $f(u_{ih}\mid \tau) = \tau(1-\tau)\exp(-\psi_\tau(u_{ih}))$ where $  u_{ih} =   Y_{ih} -  \beta_{h0}^{(\tau)}(\Xvec_{i\cdot}) -  \sum_{j\in pa_{i}(h)} Y_{ij}\beta_{hj}^{(\tau)}(\Xvec_{i\cdot})$. Hence, the working likelihood is:
\begin{equation}
\label{DAG_working_likelihood}
\begin{split}
 \pi(\Yvec \mid \Xvec,\,\tau,\,\betavec^{(\tau)}) & = \prod_{i=1}^n\prod_{h=1}^{p}\tau(1-\tau)\exp\left\{-\psi_\tau\left( Y_{ih} -  \beta_{h0}^{(\tau)}(\Xvec_{i\cdot})  - \sum_{j\neq h} Y_{ij}\beta_{hj}^{(\tau)}(\Xvec_{i\cdot}) \right)\right\}\\
 & \times \Ind\left(\mathcal{Q}\mathcal{G}^{(\tau)}_u \text{ is a DAG}\right).
\end{split}
\end{equation}
The index $j$ runs over $j\neq h$ in~\eqref{DAG_working_likelihood} instead of $j\in pa_{i}(h)$ as in~\eqref{loss_function} but they are equivalent because 
$\beta_{hj}^{(\tau)}(\Xmat_{i\cdot})=0$ for $j\notin pa_i(h)$ and hence, $\underset{j\neq h}{\sum}Y_{ij}\beta_{hj}^{(\tau)}(\Xmat_{i\cdot}) = \underset{j\in pa_{i}(h)}{\sum}Y_{ij}\beta_{hj}^{(\tau)}(\Xmat_{i\cdot})$. As a special case of~\eqref{DAG_working_likelihood}, the working likelihood of a quantile-DAG in the case of a given DAG can be written in a factorized form, $\pi\left(\Yvec \mid \Xvec,\,\tau,\,\betavec\right) = \prod_{h=1}^{p} \pi(\Yvec_h\mid \Xvec,\, pa(h),\, \tau)$. Without loss of generality, assuming the ordering of nodes as $\Yvec_1,\ldots,\Yvec_p$ in the case of known ordering, we can write the node conditional working likelihood at a node $\Yvec_h$ as,    
\begin{equation}
\label{factorised_likelihood_known_ordering}
     \pi\left(\Yvec_h\mid\cdot\right)  = \prod_{i=1}^n\tau(1-\tau)\exp\left\{-\psi_\tau\left( Y_{ih} -  \beta_{h0}^{(\tau)}(\Xvec_{i\cdot})  - \sum_{j=h+1}^{p} Y_{ij}\beta_{hj}^{(\tau)}(\Xvec_{i\cdot}) \right)\right\}.
\end{equation}
Our motivation to introduce the special case of known ordering in~\eqref{factorised_likelihood_known_ordering} is simply to use it as an `oracle' and to compare the  estimation results under this case, against the results of qDAGx, which does not assume a known ordering, a fact we reiterate. Though estimating coefficients $\beta_{hj}^{(\tau)}(\Xvec_{i\cdot})$ by optimizing \eqref{factorised_likelihood_known_ordering} is much simpler than doing the same in~\eqref{DAG_working_likelihood}, it may also be a bit unrealistic as the structure and the ordering of the DAG are unknown in most practical applications. Hence, in this paper, our focus is on estimating the model specified in~\eqref{DAG_working_likelihood}. Relevant numerical results are deferred to Section~\ref{simulation_results}.   

For a given response variable $\Yvec_h$, fitting the observation-specific model \eqref{patient_specific_model} for all observations, in the case of a given DAG, is equivalent to estimating the coefficients in \eqref{factorised_likelihood_known_ordering}. Learning the parameters of $\mathcal{Q}\mathcal{G}^{(\tau)}_i,\,\forall\,i$  consists of working independently with $p-1$ such densities:  $\pi\left(\Yvec_1\mid\cdot\right),\ldots,\pi\left(\Yvec_{p-1}\mid\cdot\right)$, where $\Yvec_{p}$ is understood to have no parents. However, when the DAG is unknown, a naive independent estimation of coefficients is no longer possible as the factorization depends on the unknown DAG and therefore the coefficients and DAG structure need to be estimated jointly. As the quantile-DAG,  $\mathcal{Q}\mathcal{G}_i^{(\tau)}$, is inferred based on the estimates of $\beta_{hj}^{(\tau)}(\Xvec_{i\cdot})$, establishing identifiability of the inferred DAGs remains a  key challenge. This is because, for a given observation $i$, it is unclear whether there are different quantile-DAG structures, all resulting in the same likelihood. We address this issue by proving structural identifiability of our model in Section~\ref{identifiability_section}.

\subsection{Functional form of the coefficients in varying sparsity model}
\label{varying_coeff_form}
As mentioned in the beginning of Section~\ref{qDAGx_master_intro_section}, the varying sparsity framework enables selection of the response variables and also aids in the selection of covariates influencing an edge between the response variables. In this section, we formally introduce the functional form of coefficients in this framework, $\beta_{hj}^{(\tau)}(\cdot)$, including the intercept terms $\beta_{h0}^{(\tau)}(\cdot)$ in~\eqref{patient_specific_model}, and call it the quantile conditional independence function (QCIF). For $h=1,\ldots,p$ and $j\in \{pa(h) \cup 0\}$, we write QCIF as a product of a smooth function and a hard thresholding operator as follows:  
\begin{align}
    \label{beta_hj}
        \beta_{hj}^{(\tau)}(\Xvec_{i\cdot}) = \theta_{hj}^{(\tau)}(\Xvec_{i\cdot})\cdot \Ind\big(|\theta_{hj}^{(\tau)}(\Xvec_{i\cdot})|>t_{hj}\big),\quad& \theta_{hj}^{(\tau)}(\Xvec_{i\cdot})=  \sum_{k=1}^q f_{hjk}^{(\tau)}(X_{ik}).
\end{align}
We choose to model  $f_{hjk}^{(\tau)}(\cdot)$ as a smooth semi-parametric function. These functions make the edge strengths,  $\theta_{hj}^{(\tau)}(\cdot)$, vary smoothly according to the values of  covariates. Specifically, we model  $f_{hjk}^{(\tau)}(\cdot)$ with cubic B-Splines \citep{cox1972numerical, de1972calculating}. Based on the recommendation by~\citet[Supp. material, Section A]{ni2019bayesian}, we consider B-splines with a large number of bases, $B$. The hard thresholding operator in \eqref{beta_hj} enables variable selection of the response variables, with the edge $Y_{ih}\leftarrow Y_{ij}$ being present if  $|\theta_{hj}^{(\tau)}(\Xvec_{i\cdot})|$ is larger than a certain threshold $t_{hj}$. Variable selection of the covariates is aided by the ability to compute the probability of posterior inclusion (Supplementary Section~\ref{Bayesian_estimation_and_variable_selection}), of the effect of $\Xvec_k,\,k\in\{1,\ldots,q\}$, (captured by the spline coefficients used to model $f_{hjk}^{(\tau)}(\cdot)$) on the edges $Y_{ih}\leftarrow Y_{ij},\,\forall\,i$.

\subsection{Prior formulation}
\label{prior_formulation_overall}
In \eqref{beta_hj}, we denote, $ \{\thetavec_{hj}^{(\tau)}(\Xmat_{i\cdot})\} =(\theta_{hj}^{(\tau)}(\Xvec_{1\cdot}),\ldots,\theta_{hj}^{(\tau)}(\Xvec_{n\cdot}))^T$, a $n$ dimensional vector containing $\theta_{hj}^{(\tau)}(\Xvec_{i\cdot})$ for all observations, $i=1,\ldots,n$. Let $\Xmat$ denote the set of all covariates. Using B-Splines to model  $\{f_{hjk}^{(\tau)}(X_{ik})\}=(f_{hjk}^{(\tau)}(X_{1k}),\ldots,f_{hjk}^{(\tau)}(X_{nk}))^T$, we can write, $\{f_{hjk}^{(\tau)}(X_{ik})\} = \widetilde{\Xvec_k}\alphavec_{hjk}$, where $\widetilde{\Xvec_k}$ is the design matrix of size $n\times B$, with $B$ denoting the number of basis functions, corresponding to the spline coefficients $\alphavec_{hjk}$. In order to avoid overfitting, we use penalized splines \citep{eilers1996flexible, lang2004bayesian} penalizing the second order differences between adjacent spline coefficients. The penalty can be written as $\alphavec_{hjk}^T \bSigma \alphavec_{hjk}$ where  $\bSigma$ is a fixed singular, positive semi-definite matrix. It is clear that the penalty is the negative of the logarithm (up-to additive constants) of a normal density,  $\alphavec_{hjk} \sim \mathcal{N}(0, \lambda_s \bSigma^{-})$, where $\lambda_s$ is the smoothness parameter (analogous to tuning parameter in ridge regression) and $\bSigma^{-}$ is a generalized matrix inverse of $\bSigma$ \citep{ruppert2003semiparametric}.

For computational convenience, we reparameterize $\alphavec_{hjk}$ following \citet{scheipl2012spike}, to obtain a proper normal density, which is proportional to the improper prior density on $\widetilde{\Xvec_k}\alphavec_{hjk}$. First, taking the spectral decomposition of covariance of $\widetilde{\Xvec_k}\alphavec_{hjk}$, we observe that,
\begin{equation*}
    \mathrm{cov}(\widetilde{\Xvec_k}\alphavec_{hjk}) = \lambda_s\widetilde{\Xvec_k}\bSigma^{-}\widetilde{\Xvec_k}^{T} = \lambda_s\begin{bmatrix}
    \Umat_k & *
    \end{bmatrix}\begin{bmatrix}
    \Dmat_k & 0\\
    0 & 0
    \end{bmatrix}\begin{bmatrix}
    \Umat_k & *
    \end{bmatrix}^{T},
\end{equation*}
where $\Umat_k$ is an orthonormal matrix of eigenvectors corresponding to the positive eigenvalues in the diagonal matrix $\Dmat_k$; with the other eigenvectors suppressed by *.  Note that the spline coefficients that correspond to the linear and constant terms are in the null space of $\bSigma$ and hence are not penalized. Defining $\widetilde{\Xmat_k}^{*} = \Umat_k \Dmat_k^{1/2}$ and $\alphavec_{hjk}^* \sim \mathcal{N}(0, \sigma_{hj}^2\Imat_{B_k^*})$, we can see $\widetilde{\Xmat_k}^{*}\alphavec_{hjk}^*$ admits a proper normal density, which is proportional to the improper prior density on $\widetilde{\Xvec_k}\alphavec_{hjk}$. Further, $\widetilde{\Xmat_k}^{*}$ is a matrix of dimension $n\times B_k^*$,\, $\alphavec_{hjk}^*$ is a vector of dimension $B_k^* \times 1$, and $\Imat_{B_k^*}$ is an identity matrix of dimension $B_k^* \times B_k^*$. For computational reasons, we set $B_k^*$ as the number of eigenvalues in $\Dmat_k$, sorted in decreasing order,  which explain at least $99.5\%$ of variability in $\{f_{hjk}^{(\tau)}(X_{ik})\}$. This dimension reduction approach is also followed by \citet{ni2019bayesian} and leads to efficient computation, as the number of spline coefficients is $\mathcal{O}(\text{max}(B_1^*,\ldots,B_q^*)qp)$ instead of $\mathcal{O}(Bqp)$, and $\text{max}(B_1^*,\ldots,B_q^*)$ is typically much smaller than $B$. In the numerical experiments described in Section~\ref{simulation_results}, the observed values for $B_k^*$ are $5$ or $6$, whereas we have $B=20$. Thus, reparameterizing the penalty on nonlinear spline coefficients and with unpenalized linear and intercept terms, we can write 
\begin{equation*}
     \{f_{hjk}^{(\tau)}(X_{ik})\} = \mu_{hjk}\bm{1}_{n} + \widetilde{\Xmat_k}^{*}\alphavec_{hjk}^* + \Xvec_k\alpha_{hjk}^0,  
\end{equation*}
where $\alpha_{hjk}^0,\,\mu_{hjk}$ are the unpenalized linear coefficient and intercept respectively and $\bm{1}_{n}$ is the $n$ dimensional unit vector. With this simplification of $\{f_{hjk}^{(\tau)}(X_{ik})\}$,
\begin{equation}
\label{theta_formulation}
     \{\thetavec_{hj}^{(\tau)}(\Xmat_{i\cdot})\} = \sum_{k=1}^q \{f_{hjk}^{(\tau)}(X_{ik})\} = \mu_{hj}\bm{1}_{n} + \sum_{k=1}^q \widetilde{\Xmat_k}^{*}\alphavec_{hjk}^* + \sum_{k=1}^q \Xvec_k\alpha_{hjk}^0,
\end{equation}
where we absorb all the intercept terms into $\mu_{hj}$. This Gaussian prior on $\alphavec_{hjk}^{*}$, together with our choice of priors for $\mu_{hj}\text{ and } \alpha^0_{hjk}$, almost completes the prior specification, pending one further detail. A Gaussian prior on the spline coefficients yields a ridge penalty, which does not provide strong enough shrinkage to zero in a sparse regime. The remedy is to use a Gaussian \emph{scale mixture} prior instead of just a Gaussian to simultaneously achieve a sharper pull towards zero and heavier tails in the marginal prior. Global-local shrinkage priors such as the horseshoe \citep{carvalho2010horseshoe} are known to outperform ridge estimates under sparse settings \citep{polson2012local, bhadra2021horseshoe, polson2019bayesian} and they all belong to the family of Gaussian scale mixtures. Thus, to achieve stronger shrinkage, we model the prior on $\alphavec_{hjk}^*$ using an appropriate Gaussian scale mixture, specified in the next subsection.  Computational performances under such priors have been studied by \citet{gelman2008using} and  \citet{scheipl2012spike}, who empirically observed good performances in terms of MCMC mixing.
\subsubsection{Induced prior on the structure of quantile-DAG and model fitting}
\label{text_pxHS_priors}
The \emph{parameter expanded normal mixtures of inverse gamma (peNMIG)} prior \citep{gelman2008using, scheipl2012spike}, a special case of scale mixtures of normals, has been used on the spline coefficients (analogous to $\mu_{hj},\, \alphavec_{hjk}^*\text{ and }\alpha_{hjk}^0$) in varying sparsity Bayesian quantile regression \citep{QUANTICO} and in DAG inference using node conditional  varying sparsity model \citep{ni2019bayesian}. Motivated by the peNMIG, we propose the  \emph{parameter expanded horseshoe (pxHS)} prior in this paper; where the prior on the mixing scale variable is half Cauchy. Choosing priors for $\mu_{hj},\, \alphavec_{hjk}^*,\, \alpha_{hjk}^0$ in \eqref{theta_formulation} and $t_{hj}$ in \eqref{beta_hj}, completes the prior specification for all parameters as follows:
\begin{align}
\label{pxHS_priors}
    \mathrm{pxHS: }& \begin{cases}
         \alphavec_{hjk}^* = \eta_{hjk}\xivec_{hjk}\,,\,\eta_{hjk} \sim \mathcal{N}(0, T_{hj}^2 L_{hjk}^2)\,,\,
          & 
          \xivec_{hjk} = \left(\xi_{hjk}^{(1)},\ldots,\, \xi_{hjk}^{(B_k^*)}\right)^T,\\
        \xi_{hjk}^{(l)}  \sim \mathcal{N}(m_{hjk}^{(l)}, \sigma_m^2)\,,\,\text{for }l\in\{1,\ldots,B_k^*\},\; & m_{hjk}^{(l)}  \sim 0.5\cdot \delta_1 (m_{hjk}^{(l)}) +0.5\cdot \delta_{-1} (m_{hjk}^{(l)}),\\
         T_{hj} \sim \mathcal{C}^+(0,1),\;& L_{hjk} \sim \mathcal{C}^+(0,1),
    \end{cases}\nonumber\\
    \alpha_{hjk}^0 &\sim \text{pxHS prior analogous to } \alphavec_{hjk}^*,\\
    \mu_{hj}&\sim \mathcal{N}(0,\,\sigma_\mu^2)\text{ and }t_{hj}\sim\text{Gamma}(\text{shape = }a,\,\text{rate = }b),\, 1\leq h, j\leq p . \nonumber
\end{align}
Taking a closer look at the pxHS prior in \eqref{pxHS_priors}, we see that the nonlinear spline coefficient $\alphavec_{hjk}^*$ is written as a product of two random variables, a scalar $\eta_{hjk}$ and a vector $\xivec_{hjk}$. 
The scalar $\eta_{hjk}$ is sampled from a horseshoe density \citep{carvalho2010horseshoe} that belongs to the class of global-local shrinkage priors which control sparsity at two levels: global and local. Here $T_{hj}$ serves as the global scale parameter, regulating sparsity among nonlinear spline coefficients $\alphavec_{hj1}^*,\ldots, \alphavec_{hjq}^*$ and $L_{hjk}$ serves as the local scale parameter, regulating sparsity among $B_k^*$ number of entries in each $\alphavec_{hjk}^*$. Further, the horseshoe prior on $\eta_{hjk}$ is scaled to all $B_k^*$ spline coefficients by a mixture-normal random variable $\xivec_{hjk}$. As elements of $\xivec_{hjk}$ are concentrated around $1\text{ and }-1$, they discourage small values of $\alphavec_{hjk}^*$. In the gamma prior on the thresholds, $t_{hj}$, the values $a, b$ are chosen such that the prior mean is equal to the average expected edge strength and $\sigma_\mu$ in the prior on $\mu_{hj}$ is fixed. Note that the thresholds $t_{hj}$ remain the same for all observations $i\in\{1,\ldots,n\}$.  Additional details on the hyperparameters, posterior sampling via MCMC and variable selection procedures are outlined in Supplementary Sections~\ref{complete_Gibbs_samplers}--\ref{Bayesian_estimation_and_variable_selection}. 
\section{Theoretical Properties}
\label{theorectical_properties_qDAGx}
The main theoretical properties of our approach are outlined in this section. We begin by showing the identifiability of qDAGx. We then demonstrate the `non-local' property of our prior that aids estimation and inference in a sparse regime \citep{rossell2017nonlocal}. We conclude by establishing the posterior consistency of the node conditional fitted densities, lending strong theoretical support to the proposed methodology.

\subsection{Identifiability of qDAGx}
\label{identifiability_section}
DAGs are generally only identifiable up to Markov equivalence classes. Within each Markov equivalence class, DAGs encode the same conditional independence relationships. The practical implication is that one cannot hope to identify the true data generating DAG even with an infinite amount of data. While this is generally the case, we prove that the proposed quantile-DAG, $\mathcal{Q}\mathcal{G}^{(\tau)}$, inferred by qDAGx,  is identifiable at any quantile level and at any covariate value. That is, there do not exist two distinct DAGs that lead to the same likelihood function. The following theorem formalizes the claim.  
\begin{theorem}
\label{identifiability_theorem}
There do not exist $\betavec^{(\tau)'}\neq\betavec^{(\tau)}$ such that $\pi(\bY \mid \bX,\,\tau,\,\betavec^{(\tau)})\equiv \pi(\bY \mid \bX,\,\tau,\,\betavec^{(\tau)'})$. 
\end{theorem}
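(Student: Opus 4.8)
The plan is to exploit two features of the working likelihood \eqref{DAG_working_likelihood}: the asymmetric Laplace density is non-differentiable exactly at its mode, so the likelihood ``remembers'' the set of node-regression hyperplanes; and the union-DAG constraint of Condition~\ref{union_DAG_Condition} forbids precisely the edge-reversal reparametrizations that make Gaussian DAGs non-identifiable. So I would argue by contradiction: assume $\betavec^{(\tau)'}\neq\betavec^{(\tau)}$ yet $\pi(\bY\mid\bX,\tau,\betavec^{(\tau)})\equiv\pi(\bY\mid\bX,\tau,\betavec^{(\tau)'})$ as functions of $\bY$, and derive a contradiction by recovering the coefficients from the likelihood.

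First I would reduce to a single observation. Since every asymmetric Laplace factor in \eqref{DAG_working_likelihood} is strictly positive and the acyclicity indicator does not depend on $\bY$, the log-likelihood splits as a sum over $i$ of terms $\log g_i$, where $g_i$ is a density depending only on the block $(Y_{i1},\dots,Y_{ip})$, with $g_i(y)\propto\prod_{h=1}^p\exp\{-\psi_\tau(y_h-m_{ih}(y))\}$ and $m_{ih}(y)=\beta_{h0}^{(\tau)}(\Xvec_{i\cdot})+\sum_{j\neq h}y_j\,\beta_{hj}^{(\tau)}(\Xvec_{i\cdot})$. Varying the $i$-th block of coordinates while holding the others fixed shows that $\log g_i-\log g_i'$ must be constant for each $i$, so it suffices to show that $g_i$, up to a positive multiplicative constant, determines the numbers $\{\beta_{hj}^{(\tau)}(\Xvec_{i\cdot})\}_{h,j}$. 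Fix $i$ and write $\beta_{hj}=\beta_{hj}^{(\tau)}(\Xvec_{i\cdot})$; by Condition~\ref{union_DAG_Condition} the edge set $\{(h,j):h\neq j,\ \beta_{hj}\neq 0\}$ is acyclic, and likewise for the primed coefficients.

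Next I would identify the non-smooth locus of $g_i$. Each map $y\mapsto y_h-m_{ih}(y)=\langle v_h,y\rangle-\beta_{h0}$ is affine with nonzero gradient $v_h=e_h-\sum_{j\neq h}\beta_{hj}e_j$ (entry $1$ in coordinate $h$), and $\psi_\tau$ is affine away from the origin with distinct one-sided slopes $\tau$ and $-(1-\tau)$ there. Hence $g_i$ is smooth away from the hyperplane arrangement $\bigcup_{h=1}^p H_h$, where $H_h=\{y:\langle v_h,y\rangle=\beta_{h0}\}$, and fails to be $C^1$ on a dense (open) subset of each $H_h$, because at a point of $H_h$ lying on no other $H_{h'}$ the kink of the $h$-th factor is not cancelled by the locally affine remaining factors. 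Therefore the closed non-smooth locus of $g_i$ equals $\bigcup_h H_h$, which determines the set of hyperplanes appearing in it. These $H_h$ are pairwise distinct: $H_h=H_{h'}$ for $h\neq h'$ would force $v_h\parallel v_{h'}$, and comparing the entries in coordinates $h$ and $h'$ gives $\beta_{hh'}\beta_{h'h}=1$, so both are nonzero, contradicting acyclicity (no $2$-cycle). Thus $g_i$ determines the set $\{H_1,\dots,H_p\}$ of $p$ distinct hyperplanes, and likewise $g_i'$ determines $\{H_1',\dots,H_p'\}$; equality of the two yields a permutation $\rho$ with $H_h'=H_{\rho(h)}$ for all $h$.

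Finally — and this is the crux — I would show $\rho$ is the identity. If $\rho(h)\neq h$, then $v_h'\parallel v_{\rho(h)}$; comparing the entries in coordinates $h$ and $\rho(h)$ (each normal carries a $1$ in its ``own'' slot) yields $\beta_{h,\rho(h)}'\,\beta_{\rho(h),h}=1$, so in particular $\beta_{h,\rho(h)}'\neq 0$. By the edge convention this means $\rho(h)$ is a parent of $h$ in the DAG encoded by $\betavec^{(\tau)'}$ for this $i$, so $h$ precedes $\rho(h)$ in every topological ordering of that DAG. Running this around any cycle $(h_1,\dots,h_k)$ of $\rho$ with $k\geq 2$ produces $h_1\prec h_2\prec\cdots\prec h_k\prec h_1$ in such an ordering, which is impossible; hence $\rho=\mathrm{id}$. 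Then $H_h'=H_h$ forces $v_h'\parallel v_h$, and since both have entry $1$ in coordinate $h$ the constant of proportionality is $1$; matching offsets as well gives $\beta_{hj}'=\beta_{hj}$ for all $j$ (including $j=0$). Since this holds for every $i$, we conclude $\betavec^{(\tau)'}=\betavec^{(\tau)}$ (in particular the induced quantile-DAGs $\mathcal{Q}\mathcal{G}_i^{(\tau)}$ coincide), contradicting the assumption. The main obstacle is exactly this last step: the likelihood a priori determines only an \emph{unordered} arrangement of hyperplanes, and one must invoke acyclicity to rule out the reparametrizations that ``reverse'' edges — precisely the mechanism by which identifiability fails in the Gaussian case.
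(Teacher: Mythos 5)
Your proof is correct, and it takes a genuinely different route from the paper's. The paper argues recursively: it topologically sorts the unprimed DAG, differentiates the log-likelihood with respect to the childless node $Y_1$, evaluates the resulting subgradient identity in the two extreme regions ($Y_1$ very large and very small) to pin the values $\tau$ versus $\tau-1$, deduces $\sum_{k}\beta'_{k1}=0$ split by sign so that $ch'(1)=\emptyset$, matches the node-$1$ parents and coefficients, then marginalizes $Y_1$ out and recurses over the remaining $p-1$ nodes. You instead recover the entire affine arrangement in one shot: the non-differentiability locus of $\log g_i$ is the union of the $p$ hyperplanes $H_h=\{y:\langle v_h,y\rangle=\beta_{h0}\}$, distinctness of the $H_h$ and identity of the induced matching $\rho$ both follow from acyclicity (no $2$-cycles, respectively no directed cycles), and equal hyperplanes with normals normalized to have a $1$ in their own coordinate force equality of all coefficients. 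Both arguments exploit the same two ingredients---the kink of $\psi_\tau$ and the union-DAG constraint ruling out edge reversals---but your version is non-recursive, is carried out separately for each observation $i$ (so it explicitly identifies the covariate-dependent $\beta_{hj}^{(\tau)}(\Xvec_{i\cdot})$, where the paper suppresses $i$ and works at the population level), and replaces the paper's somewhat informal differentiation of a non-smooth log-likelihood by a clean statement about its non-smooth locus; the paper's recursion, on the other hand, is more elementary, manipulating one coordinate at a time and never needing the geometric fact that a finite union of distinct hyperplanes determines its components. One small point worth making explicit if you write this up: the factor-by-factor reduction $g_i=Cg_i'$ presumes both parameter vectors satisfy the union-DAG indicator (otherwise both likelihoods are identically zero), which is the same implicit restriction of the parameter space the paper makes when it takes both $\mathcal{Q}\mathcal{G}^{(\tau)}$ and $\mathcal{Q}\mathcal{G}^{'(\tau)}$ to be DAGs.
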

The proof is provided in Supplementary Section~\ref{proof_of_identifiability}. Note that since $\mathcal{Q}\mathcal{G}^{(\tau)}$ is induced by $\betavec^{(\tau)}$, Theorem \ref{identifiability_theorem} implies that no two observation-specific quantile-DAGs have the same likelihood. Therefore, it is possible to identify the true data generating $\mathcal{Q}\mathcal{G}^{(\tau)}$. 
\subsection{Non-localness of the marginal prior on QCIFs}
\label{non_local_proof}
Non-local priors, introduced by~\citet{johnson2010use}, are a class of priors which have zero mass at the null value of the parameter and are proven to \emph{ameliorate the imbalance}~\citep{johnson2010use} in rates of convergence and accumulation of evidence in favor of the true hypothesis. Such priors have been used recently for high-dimensional estimation~\citep{rossell2017nonlocal, shin2018scalable}, with attractive results in terms of lower estimation errors and false discovery rates, when compared to penalized likelihoods approaches such as the lasso or SCAD. In this section, we prove that the marginal prior on QCIFs is a mixture of point mass at zero and a non-local prior, which have been termed  `mass nonlocal' priors~\citep{shi2019model} and have been found to inherit the advantages of a spike-slab prior, where the spike is a point mass and the slab component is non-local. Considering the functional form of $\theta_{hj}^{(\tau)}(\Xvec_{i\cdot})$ in \eqref{theta_formulation} and the priors defined on the constant, linear and non-linear spline coefficients, we need a convolution of horseshoe priors with a normal prior, followed by a truncation, to obtain the marginal prior on the QCIFs,   $\beta_{hj}^{(\tau)}(\Xvec_{i\cdot})$. Unfortunately, the convolution of horseshoe priors is not analytically tractable. Hence, we prove the non-localness result in a simpler case of $\Xvec$ being a scalar quantity and $\sigma_m^2\rightarrow 0$ (in~\eqref{pxHS_priors}). With this simplification, we can write the QCIFs as  $\beta_{hj}^{(\tau)}$. As this result is true for any $\beta_{hj}^{(\tau)}$, we suppress the edge-specific and quantile-specific notation and state the result for a general QCIF $\beta$ where $\beta = \theta\cdot\Ind(|\theta|>t)$ and a horseshoe prior is imposed on $\theta$. Denoting the $\mathrm{Gamma}(a,\,b)$ prior on threshold as $\mathcal{P}_t(t)$ and horseshoe prior on $\theta$ as $\mathcal{P}_\theta(\cdot)$, our non-localness result on the marginal prior on $\beta$, is as follows.
\begin{lemma}
\label{NLP_lemma}
The marginal prior $\mathcal{P}(\beta)$ is a mixture of point mass at 0, $\delta_0(\cdot)$, and a non-local prior $\pi(\beta)$, $\mathcal{P}(\beta)=w\delta_0(\beta)+(1-w)\pi(\beta)$ 
where $w=E_t\{Pr(|\theta|\leq t\mid t)\}$ and
 $\pi(\beta)=\mathcal{P}_\theta(\beta)\frac{Pr(t<|\beta|\mid \beta)}{E_t\{Pr(|\theta|> t\mid t)\}}\rightarrow 0$ as $\beta\rightarrow 0$.
\end{lemma}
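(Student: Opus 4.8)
The plan is to treat $\beta$ as the image of the pair $(\theta,t)$ under the map $(\theta,t)\mapsto\theta\,\Ind(|\theta|>t)$, write down the resulting mixed (atom-plus-density) law by a direct computation, read off the atom weight $w$ and the renormalized continuous part $\pi$, and then show $\pi$ vanishes at the origin. Throughout I use that $\theta$ and $t$ are a priori independent — immediate from~\eqref{pxHS_priors}, where $t_{hj}$ is drawn separately from all spline coefficients — and, as in the statement of the lemma, that under the simplifications $\Xvec$ scalar and $\sigma_m^2\to 0$ the coefficient $\theta$ carries a horseshoe prior. These simplifications are precisely what avoids the analytically intractable convolution of several half-Cauchy mixing scales that the general QCIF in~\eqref{theta_formulation} would entail, leaving a single horseshoe variate to work with.

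Step 1 (the atom). Since $\{\beta=0\}=\{|\theta|\le t\}$ — the boundary event $|\theta|=t$ having probability $0$ as both $\theta$ and $t$ are continuous — we get $\Pr(\beta=0)=\Pr(|\theta|\le t)=E_t\{\Pr(|\theta|\le t\mid t)\}=:w$, and here $\Pr(|\theta|\le t\mid t)$ is a genuine conditional probability by independence. Step 2 (the continuous part). For any Borel set $A$ with $0\notin A$, independence and Fubini give
\[
\Pr(\beta\in A)=\Pr(\theta\in A,\ |\theta|>t)=\int_A\mathcal{P}_\theta(b)\Big(\int_0^{|b|}\mathcal{P}_t(t)\,dt\Big)db=\int_A\mathcal{P}_\theta(b)\,\Pr(t<|b|)\,db,
\]
so the absolutely continuous component of the law of $\beta$ has (unnormalized) density $b\mapsto\mathcal{P}_\theta(b)\Pr(t<|b|)$, whose total mass is $E_\theta\{\Pr(t<|\theta|\mid\theta)\}=\Pr(t<|\theta|)=1-w=E_t\{\Pr(|\theta|>t\mid t)\}$ by Fubini once more. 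Normalizing by $1-w$ yields $\mathcal{P}(\beta)=w\,\delta_0(\beta)+(1-w)\,\pi(\beta)$ with $\pi(\beta)=\mathcal{P}_\theta(\beta)\,\Pr(t<|\beta|)\big/E_t\{\Pr(|\theta|>t\mid t)\}$, which is the stated form (here $\Pr(t<|\beta|)$ is the $\mathrm{Gamma}(a,b)$ c.d.f.\ evaluated at $|\beta|$, written $\Pr(t<|\beta|\mid\beta)$ in the statement by abuse of notation).

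Step 3 (non-localness). It remains to show $\mathcal{P}_\theta(\beta)\Pr(t<|\beta|)\to 0$ as $\beta\to 0$. The horseshoe density has only a logarithmic singularity at the origin: $\mathcal{P}_\theta(\beta)\le K\log(1+2/\beta^2)$ with $K=(2\pi^3)^{-1/2}$ by~\citet[Thm.~1]{carvalho2010horseshoe}, so $\mathcal{P}_\theta(\beta)=O(\log(1/|\beta|))$. On the other hand, since the lower incomplete gamma function satisfies $\gamma(a,x)\sim x^a/a$ as $x\downarrow0$, we have $\Pr(t<|\beta|)=\gamma(a,b|\beta|)/\Gamma(a)\sim b^a|\beta|^a/\Gamma(a+1)$, i.e.\ $\Pr(t<|\beta|)=O(|\beta|^a)$ with $a>0$. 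Multiplying, $\pi(\beta)=O(|\beta|^a\log(1/|\beta|))\to0$, so $\pi$ vanishes at the null value and is therefore a (mass) non-local prior.

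The one genuinely delicate point is the rate interplay in Step 3: one must know that the horseshoe blow-up at $0$ is at most logarithmic while the Gamma c.d.f.\ decays like a strictly positive power of $|\beta|$, so that the power factor dominates — this is exactly where the spike-at-zero, heavy-tailed nature of the horseshoe and the positivity of the threshold hyperparameter $a$ combine to force non-local behaviour. Steps 1--2 are routine bookkeeping with a mixed-type distribution; the only thing to watch is that the conditional-probability notation in the statement is legitimate precisely because $\theta\perp t$ under the prior.
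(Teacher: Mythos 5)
Your proof is correct and follows essentially the same route as the paper: the non-localness step is identical (logarithmic horseshoe singularity at zero, from the bounds of \citet{carvalho2010horseshoe}, dominated by the $O(|\beta|^{a})$ decay of the Gamma c.d.f.). The only difference is bookkeeping: where the paper obtains the mixture decomposition by citing Proposition~1 of \citet{ni2019bayesian} and defers boundedness of $[E_t\{\Pr(|\theta|>t\mid t)\}]^{-1}$ to explicit bounds in the supplement, you derive the atom-plus-density decomposition directly via Fubini and handle the denominator by observing it equals the fixed positive constant $1-w$ — both treatments are adequate.
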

\begin{proof}
Following \citet[Proposition 1]{ni2019bayesian}, it is straightforward that $\mathcal{P}(\beta)$ is a mixture distribution as stated in the statement of this lemma. What remains to prove is that $\pi(\beta)\to 0$ as $\beta\to0$. We will first show $\mathcal{P}_\theta(\beta)Pr(t<|\beta|\mid \beta)\rightarrow 0$ as $\beta\rightarrow 0$ and then prove that $[E_t\{Pr(|\theta|> t\mid t)\}]^{-1}$ is bounded. From the properties of horseshoe density \citep{carvalho2010horseshoe}, we know that when $\beta\rightarrow 0$, $\mathcal{P}_\theta(\beta)\approx -\log\beta$. And from the CDF of gamma distribution, we have $Pr(t<|\beta|\mid \beta)\sim O(\beta^{a})$ when $\beta\rightarrow 0$. Thus, $\mathcal{P}_\theta(\beta)Pr(t<|\beta|\mid \beta)\rightarrow 0$ when $\beta\rightarrow 0$. We prove that  $[E_t\{Pr(|\theta|> t\mid t)\}]^{-1}$ is bounded in the supplementary Section.~\ref{predictive_consistency_proof}; for explicit bounds see the displays \eqref{UB_of_horseshoe_required} and \eqref{LB_of_horseshoe_required}. 
\end{proof}
The result of Lemma~\ref{NLP_lemma} explains the motivation behind the hard thresholding operator for automatic variable selection of the response variables. This choice of prior also aids in sparse quantile-DAG discovery, as mass nonlocal priors shrink small effects to zero and allow the selection of only meaningful edges $\Yvec_h\leftarrow\Yvec_j$ in individualized quantile-DAGs.
\subsection{Posterior consistency of the node conditional fitted densities}
\label{sec_predictive_consistency}
In this section, we prove the consistency of the node-conditional fitted density for a given node at any quantile level $\tau\in(0,\,1)$. For reasons mentioned in Section~\ref{non_local_proof}, we prove the consistency result when $\Xvec$ is a unit matrix i.e., the case of no covariates and $\sigma_m^2\rightarrow 0$ (in~\eqref{pxHS_priors}). Following \eqref{patient_specific_model}, the simplified model in the case of unknown ordering at a given node $\bY_h$ can be written as, 
\begin{equation}
\label{simplified_general_model}
     Q_{Y_{ih}}(\tau\mid Y_{ij}) =  \beta_{h0}^{(\tau)}+ \sum_{j\neq h} Y_{ij}\beta_{hj}^{(\tau)},
\end{equation}
where $\beta_{hj}^{(\tau)} = \theta_{hj}^{(\tau)}\Ind(|\beta_{hj}^{(\tau)}|>t_{h})$ and independent horseshoe priors are imposed on $\theta_{hj}^{(\tau)}$. The horseshoe prior on $\theta_{hj}^{(\tau)}$ with a global scale parameter $c$ (fixed) can be written as a half Cauchy scale mixture of normal density as follows,
\begin{equation}
    \label{simplified_prior_equation}
    \theta_{hj}^{(\tau)} \mid c,u_{hj} \sim \mathcal{N}(0,\, u_{hj}^2 c^2),\quad u_{hj}\sim \mathcal{C}^+(0,1),\,\quad c>0.
\end{equation}
Let $\bgamma_h=\{\gamma_{hj}\}$ be a vector of zeros and ones denoting the absence and presence of $\bY_{j},\, j\neq h$ in the model~\eqref{simplified_general_model} respectively i.e., if $\gamma_{hj}=0$, then $\beta_{hj}^{(\tau)}=0$ and $\bY_{j},\, j\neq h$ is not in the model and vice-versa. Also, let $\betavec_{\bgamma_{h}}$ denote the vector of non-zero QCIFs specified by the model $\bgamma_{h}$. Let $p_n$ denote the number of response variables as a function of $n$ i.e.,  $\bY_1,\ldots,\bY_{p_n}$ and let $r_n = p_n\pi_n$ where $\gamma_{hj} \sim \text{Bernoulli}(\pi_n)$. Here, $\pi_n$ is analogous to $w$ in Lemma~\ref{NLP_lemma}, and as $w$ is a function of shape and rate parameters of the gamma prior on the threshold, $a,\,b$ and the global scale parameter $c$. Let $\epsilon_n$ denote a sequence of positive numbers decreasing to zero and $1\prec n\epsilon_n^2$ where  $u_n \prec v_n$ means $\lim_{n\to\infty}u_n/v_n = 0$. Under the true data generating model as given in \eqref{simplified_general_model}, define $\Delta_h(r_n) = \underset{|\bgamma| = r_n}{\mathrm{inf}}\sum_{j\notin\bgamma,\,j\neq h}|\beta_{hj}^{*(\tau)}|$, where $\betavec_{h}^*=\{\beta_{hj}^{*(\tau)}\}$ denotes the vector of true QCIFs. Here $\Delta_h(r_n)$ captures the residual effect i.e., sum of absolute values of all true QCIFs which are absent in the model $\bgamma_{h}$. Now restricting the total model size $\sum_{j\neq h}\gamma_{hj}\leq \bar{r}_n$, we introduce Conditions A1--A7, which are required to prove the consistency of the node conditional fitted density. 
\vspace{1em}

    \begin{tabular}{lll}
        $\text{A1. }~\bar{r}_n\log p_n \prec n\epsilon_n^2,$ & $\text{A2. }~\bar{r}_n \log\left(\frac{1}{\epsilon_n^2}\right)\prec n\epsilon_n^2,$ &  $\text{A3. }~1\leq r_n\leq \bar{r}_n\leq p_n,$ \\
        $\text{A4. }~\sum_{j\neq h}|\beta_{hj}^{*(\tau)}|<\infty$ and $|\beta_{hj}^{*(\tau)}|>0,$ &  $\text{A5. }~1\prec r_n\prec p_n < n^\alpha,\, \alpha >0,$ & $\text{A6. }~p_n\Delta_h(r_n)\prec \epsilon_n^2,$
    \end{tabular}
    
    \begin{tabular}{l}
         $\text{A7. }~a>3,\, a\in\mathrm{Z}^+$ and $b,\,c>0$  such that $c^2b<3/2.$ 
    \end{tabular}
    \vspace{1em}
    
Conditions A1--A6 can also be found in \citet{jiang2007bayesian}, which are used to establish consistency of the fitted densities in Bayesian high dimensional variable selection. As described in~\citet{guha2020quantile}, Condition A1 is used to bound the metric entropy of the carefully chosen sieve in the space of prior densities. Condition A2 is required to ensure sufficient prior probability in the Kullback-Leibler (KL) neighborhood of the true model. Conditions A3 and A5 give the growth rates of model dimensions as a function of $n$, and Condition A4 guarantees that all the true QCIFs are finite and the absolute values are bounded away from zero. Condition A6 ensures that the residual effect is small and Condition A7 is used to derive bounds on $E_t[Pr(|\theta|>t\mid t)]$ (Lemma~\ref{NLP_lemma}), which can further be used to derive bounds for the prior concentration rate of KL $\epsilon_n^2$ neighborhoods and the prior probability of the complement of the chosen sieve. With these conditions in place, let  $H_h$ denote the Hellinger distance between the node conditional fitted density and the true node conditional density of $\Yvec_h$, given by: 
\begin{equation*}
    H_h^2 = {\int_{\bY}^{} \left(\sqrt{\pi(\bY_h\mid \bY_{j\neq h};\,\bgamma,\,\betavec_{\bgamma_{h}})} - \sqrt{\pi^*(\bY_h\mid \bY_{j\neq h};\,\betavec^*_h)} \right)^2\pi^*(\bY_{j\neq h})d\Ymat},
\end{equation*}
where $\pi^*(\cdot)$ denotes the true conditional density. Let $\pi_\tau(\cdot\mid\Ymat)$ denote the posterior probability  under the model and prior, as defined in \eqref{simplified_general_model} and \eqref{simplified_prior_equation} respectively. Then, we have the following theorem.
\begin{theorem}
\label{predective_consistency}
Let  $\mathrm{sup}_jE|\bY_j| = M^*<\infty$. Then under Conditions A1--A7, for some $c_1'>0$ and for $n^\delta\prec p_n\prec n^\alpha$, $\alpha>\delta>0$, and under the true data generating model \eqref{simplified_general_model} for some given quantile level $\tau$, the following holds for $n \to \infty$:
\begin{equation*}
    P^*\left\{\pi_\tau\left(H_h \leq \epsilon_n \mid \bY\right)>1-\exp(-c_1'n\epsilon_n^2)\right\}\rightarrow 1,
\end{equation*}
where $P^*(\cdot)$ denotes the probability under true data generating density. 
\end{theorem}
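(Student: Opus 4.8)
The plan is to verify the three sufficient conditions of the classical route to posterior contraction — a prior-mass lower bound on a Kullback--Leibler (KL) neighborhood of the truth, a metric-entropy bound on a sieve $\mathcal{F}_n$, and an exponentially small prior mass on $\mathcal{F}_n^c$ \citep{jiang2007bayesian} — and then to assemble them, via the usual construction of exponentially consistent tests together with an evidence lower bound, into the exponential-probability statement claimed. Because under \eqref{simplified_general_model}--\eqref{simplified_prior_equation} the node-conditional working density at $\bY_h$ is a product of asymmetric Laplace densities in the residuals $u_{ih}=Y_{ih}-\beta_{h0}^{(\tau)}-\sum_{j\neq h}Y_{ij}\beta_{hj}^{(\tau)}$, it is a location family in the linear predictor; the first step is to record that for two parameter vectors $\betavec$ and $\betavec'$ both the KL divergence and the squared Hellinger distance between the associated node-conditional densities are bounded by $E^*$ of an explicit function of the linear-predictor difference $\sum_{j\neq h}(\beta_{hj}^{(\tau)}-(\beta_{hj}^{(\tau)})')Y_{ij}+(\beta_{h0}^{(\tau)}-(\beta_{h0}^{(\tau)})')$. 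Using the piecewise (locally quadratic, globally linear) form of the check-loss--based KL and the moment bound $\sup_j E|\bY_j|=M^*<\infty$, this is $\lesssim M^{*2}\big(\sum_{j}|\beta_{hj}^{(\tau)}-(\beta_{hj}^{(\tau)})'|\big)^2$ up to a lower-order linear term; Condition~A6 enters here, to absorb — after multiplication by $p_n$ — the residual tail $\Delta_h(r_n)$ of the true coefficients lying outside any size-$r_n$ model, while Condition~A4 guarantees the true density is well defined with all true QCIFs finite and bounded away from zero.

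For the prior-mass condition I would take the ``oracle'' model $\bgamma_h^*$ to be the $r_n$ largest-in-magnitude true QCIFs, and lower bound the prior probability of drawing a model that contains $\bgamma_h^*$ and has total size at most $\bar r_n$; this factor is $\exp\{-O(r_n\log p_n)\}$ and is controlled through Conditions~A1 and~A3 together with the two-sided bounds on the effective inclusion probability $\pi_n=E_t\{\Pr(|\theta|>t\mid t)\}$ that Condition~A7 supplies — the very bounds used to prove Lemma~\ref{NLP_lemma}, displayed there as \eqref{UB_of_horseshoe_required} and \eqref{LB_of_horseshoe_required}. Conditionally on such a model, I would lower bound the probability that the horseshoe slab places every active coefficient within an $O(\epsilon_n/\sqrt{r_n})$-box around its true value; since the horseshoe density is strictly positive and bounded on compacta away from the origin (again A4), this conditional probability is $\exp\{-O(\bar r_n\log(1/\epsilon_n^2))\}$, which is $\prec \exp(n\epsilon_n^2)$ by Condition~A2, yielding $\Pi(\{\mathrm{KL}<\epsilon_n^2\})\ge \exp(-c_1 n\epsilon_n^2)$. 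For the entropy, the sieve $\mathcal{F}_n$ will be the node-conditional densities whose parameter has at most $\bar r_n$ active coefficients, each (and the intercept) bounded in absolute value by a slowly growing $E_n$; there are at most $p_n^{\bar r_n}$ such supports, and covering each $\bar r_n$-dimensional coefficient box at Hellinger scale $\epsilon_n$ — through the Lipschitz relation of the first step — costs $O(\bar r_n\log(E_n/\epsilon_n))$, so $\log N(\epsilon_n,\mathcal{F}_n,H)\lesssim \bar r_n\log p_n+\bar r_n\log(E_n/\epsilon_n)\lesssim n\epsilon_n^2$ for a suitable $E_n$ by Conditions~A1 and~A2.

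It remains to bound $\Pi(\mathcal{F}_n^c)$, which I would split into the event that the model size exceeds $\bar r_n$ — a Bernoulli$(\pi_n)$ tail, made smaller than $\exp(-c_3 n\epsilon_n^2)$ by Conditions~A1, A3 and the A7 control of $\pi_n$ — and the event that some included coefficient exceeds $E_n$ in magnitude, which is at most $\bar r_n\,\Pr(|\theta|>E_n)$ under the horseshoe. This last term is \emph{the main obstacle}: the horseshoe has only polynomially decaying tails, so making $\bar r_n\,\Pr(|\theta|>E_n)$ exponentially small appears to force $E_n$ so large that $\bar r_n\log E_n$ would overwhelm the entropy budget $n\epsilon_n^2$. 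I expect to resolve this as in \citet{guha2020quantile}: a density with an over-large coefficient is strongly Hellinger-separated from the truth, so such parameters can be handled by the \emph{testing} step rather than by the prior-mass-of-complement step, allowing $E_n$ to be only polynomial in $n$, and Condition~A7 — in particular $c^2 b<3/2$ — sharpens the tail control on the \emph{thresholded} horseshoe enough to close the residual gap. With the three conditions in hand, the theorem follows by the standard argument: the entropy bound yields tests $\phi_n$ with $E^*\phi_n$ exponentially small and $\sup\{E_f(1-\phi_n):f\in\mathcal{F}_n,\,H_h(f)>\epsilon_n\}\le \exp(-c_4 n\epsilon_n^2)$; a Markov inequality bounds $E^*[\int_{\mathcal{F}_n^c}\prod_i(f/f^*)\,d\Pi]$ by $\Pi(\mathcal{F}_n^c)$; and the evidence lower bound $\int\prod_i(f/f^*)\,d\Pi\ge \exp(-c_1 n\epsilon_n^2)$ holds off a set of $P^*$-probability tending to zero. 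Choosing the constants so that $c_3$ and $c_4$ dominate $c_1$ gives $\pi_\tau(H_h>\epsilon_n\mid\bY)\le \exp(-c_1' n\epsilon_n^2)$ outside a $P^*$-negligible set, which is the assertion.
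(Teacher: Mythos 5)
Your proposal follows the same overall skeleton as the paper's proof: both verify the three conditions of the Ghosal--Ghosh--van der Vaart/\citet{jiang2007bayesian} program (prior mass of KL neighborhoods, metric entropy of a sieve, exponentially small prior mass of the sieve complement), and your treatment of the first two steps essentially matches the paper's, which reduces the KL neighborhood to coefficient boxes $|\beta_j-\beta_j^{*}|\le \eta\epsilon_n^2/r_n$ via \citet{guha2020quantile}, lower-bounds the \emph{marginal thresholded} slab $\pi(\beta_j\mid\gamma_j=1)=\mathcal{P}_\theta(\beta_j)\Pr(t<|\beta_j|\mid\beta_j)/E_t\{\Pr(|\theta_j|>t\mid t)\}$ locally by tangent lines (using the upper bound \eqref{UB_of_horseshoe_required} for the normalizer and Conditions A1, A2, A4), and cites \citet{jiang2007bayesian} for the entropy of the sieve $\{\max_j|\beta_j|<D,\ \text{model size}\le \bar r_n\}$ with $D\sim n\epsilon_n^2$. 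One minor repair: your $O(\epsilon_n/\sqrt{r_n})$ boxes presuppose a quadratic KL expansion that the check-loss working likelihood with only the first-moment condition $\sup_j E|\bY_j|<\infty$ does not furnish; the paper therefore works with the smaller $\eta\epsilon_n^2/r_n$ boxes, and the resulting exponent is still absorbed by Condition A2.

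The genuine gap is in the sieve-complement step, which you flag as ``the main obstacle'' but do not resolve. Two specific problems. First, the fix you sketch is not available as stated: hard thresholding at $t$ removes mass near zero and does nothing to the upper tail, so the ``thresholded horseshoe'' is exactly as heavy-tailed as the horseshoe; and within the testing framework you invoke, the complement of the sieve is controlled by its \emph{prior mass} (via Fubini/Markov), while tests are constructed only over the sieve, so deferring over-large coefficients to the testing step would require a genuinely different argument with entropy control over that unbounded alternative set, which you do not supply. Moreover, Condition A7 ($a>3$, $c^2b<3/2$) does not ``sharpen tail control''; in the paper it serves the opposite purpose of bounding the normalizing constant $E_t\{\Pr(|\theta_j|>t\mid t)\}$ from below, as in \eqref{LB_of_horseshoe_required}. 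Second, the paper's actual resolution is a computation your proposal is missing: writing the gamma CDF as $\Pr(t<|\beta_j|\mid\beta_j)=e^{-b|\beta_j|}\sum_{k\ge a}(b|\beta_j|)^k/k!$ in \eqref{full_prior_expression} and using the Taylor-remainder identity to recast the slab in the form \eqref{our_simplified_NLP}, it bounds $\int_{|\beta_j|>D}\pi(\beta_j\mid\gamma_j=1)\,d\beta_j$ by $\exp(-c_3D)\sim\exp(-c_4 n\epsilon_n^2)$ in \eqref{complement_prob_sieve}; that is, the exponential factor contributed by the $\mathrm{Gamma}(a,b)$ prior on the threshold (not the horseshoe itself, and not tests) is what makes the choice $D\sim n\epsilon_n^2$ compatible with the entropy budget, with the model-size tail handled as in \citet{jiang2007bayesian}. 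This step is delicate (the remainder constant $K$ there depends on $\beta_j$), but it is the load-bearing ingredient of the paper's argument; without it, or a fully worked-out alternative for large coefficients, your proof does not close.
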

In the statement of the above theorem we have $n^\delta\prec p_n\prec n^\alpha$; $\alpha>\delta>0$. Now in particular if $\bar{r}_n\prec n^b$ with $b=\mathrm{min}\{\xi, \delta\}$ with $\xi \in (0,1)$, we have the rate of convergence $\epsilon_n = n^{-(1-\xi)/2}$ and the decaying rate is of the order $\exp\left(-n^\xi\right)$. This decay rate follows from Remark 1 of \citet{guha2020quantile}. The proof of Theorem~\ref{predective_consistency} is provided in the Supplementary Section~\ref{predictive_consistency_proof}.

The posterior consistency established in Theorem~\ref{predective_consistency} is interesting because we operate with a `pseudo' likelihood based on a loss function. This is because the quantile based conditional distributions may not correspond to a valid joint distribution \citep{guha2020quantile}. Nevertheless, we are able to establish posterior consistency of the node conditional fitted densities. Related ideas on posterior concentration properties for loss-based or Gibbs posterior inference for quantile regression have recently been explored by \citet{bhattacharya2022gibbs}. 
\section{Numerical Experiments}
\label{simulation_results}
The goal of this section is to compare the quantile-DAGs inferred on synthetic data sets by qDAGx and the oracle, where the quantile-DAGs are estimated with a known ordering of the nodes. We do this comparison for 12 settings of $(n,\,p,\,q)$, where $n\in\{100,250\},\, p\in\{25,50,100\}\text{ and } q\in\{2,5\}$. Before going into the details of simulation results, we present how we generate the synthetic data, in the following five steps.
\begin{enumerate}
\setlength\itemsep{-0.1em}
    \item[(a)] Assuming the true ordering of nodes to be $\{\Yvec_1,\ldots,\Yvec_p\}$, we randomly select $\text{max}\Big\{1, \lfloor\frac{p-h}{5}\rfloor\Big\}$ number of nodes in  $\{\Yvec_{h+1},\ldots,\Yvec_p\}$, as the parents of the node $\Yvec_h$, thus keeping the true DAG 80\% sparse. 
    
    \item[(b)] The covariates $\Xvec_1, \ldots, \Xvec_q$ are generated as $n$ i.i.d samples from a multivariate normal, $\mathcal{N}(0,\Imat_q)$. We set each $\theta_{hj}^{(\tau)}(\cdot)$ as a function of a subset of $\Xvec$, thus introducing varying sparsity. To do this, we choose $q^*
    \in\{0,\,1,\,2\}$ when $q=2$ and  $q^*\in\{0,\,1,\,2,\,3\}$ when $q=5$, and randomly choose $q^*$ number of covariates from $\Xvec$ in computing each  $\theta_{hj}^{(\tau)}(\cdot)$. For different values of $q^*$, the functional forms of  $\theta_{hj}^{(\tau)}(\cdot)$ are as follows:
    \begin{enumerate}
        \item[$(\mathrm{i})$] For $q^*=0$, \; $\{\theta_{hj}^{(\tau)}(\Xmat_{i\cdot})\} =(1+\tau^2)\bm{1}_{n}\,$, where $\bm{1}_{n}\text{ is the unit vector of dimension }n$. 
        \item[$(\mathrm{ii})$] For $q^*=1$, \; $\{\theta_{hj}^{(\tau)}(\Xmat_{i\cdot})\} =\Xvec_{k_{1}}^2 + \log ((1+\tau^2)\bm{1}_{n}$), where $k_1\text{ is randomly chosen from }\{1,\ldots,q\}$.
        \item[$(\mathrm{iii})$] For $q^*=2$, \; $\{\theta_{hj}^{(\tau)}(\Xmat_{i\cdot})\} =\Xvec_{k_{1}}^2 + \log ((1+\tau^2)\bm{1}_{n}) + \exp(\Xvec_{k_{2}})$, where $k_1,\,k_2$ are distinct and randomly chosen from $\{1,\ldots,q\}$.
        \item[$(\mathrm{iv})$] For $q^*=3$, \; $\{\theta_{hj}^{(\tau)}(\Xmat_{i\cdot})\} =\Xvec_{k_{1}}^2 + \log ((1+\tau^2)\bm{1}_{n}) + \exp(\Xvec_{k_{2}}) + \log|\Xvec_{k_{3}}|$, where $k_1,k_2, k_3$ are distinct and randomly chosen from $\{1,\ldots,q\}$.
    \end{enumerate}
    \item[(c)] We fix all thresholds$(t_{hj})$ equal to 0.5 when $q=2$ and equal to 1 when $q=5$ and set $\{\beta_{hj}^{(\tau)}(\Xmat_{i\cdot})\} = \{\theta_{hj}^{(\tau)}(\Xvec_{i\cdot})\}\cdot\Ind(|\{\theta_{hj}^{(\tau)}(\Xvec_{i\cdot\}})|>t_{hj})$. With the parents of each nodes selected in step (a) and from the functional form of coefficients in step (b), we plug in $n$ i.i.d samples of  $\tau\sim\mathcal{U}(0,1)$ in our observation specific model~\eqref{patient_specific_model}, to generate $n$ samples of $\Yvec_h$, noting that the quantile and distribution functions are inverses of each other.  
    
    \item[(d)] After data for all nodes $\Yvec_h,\,h\in\{1,\ldots,p\}$ are generated, we compute the values of $\theta_{hj}^{(\tau)}(\cdot),\,\beta_{hj}^{(\tau)}(\cdot)$ and $Q_{Y_{ih}}(\tau \mid \cdot)$ at nine equally spaced quantile levels $\tau\in\{0.1,\ldots,0.9\}$. We store those these values as matrices $\thetavec_h^{\tau,\text{ true}},\,\betavec_h^{\tau,\text{ true}}\text{ and }\Qmat_{\Yvec_{h},\tau}^{\text{true}}$ respectively, at $\tau\in\{0.1,\ldots,0.9\}$ and use them to compute the estimation norms (Supplementary Equations~\eqref{estimation_norms} and~\eqref{MSE_known_ordering}).
    
    \item[(e)] Given a setting of $(n,p,q)$, we repeat steps (b)--(d) to simulate 25 different data sets. We compute mean and standard deviation of the results over these data sets. 
\end{enumerate}
We estimate the model parameters using MCMC sampling (Supplementary Section~\ref{complete_Gibbs_samplers}) and infer the quantile-DAG structure in three scenarios:  $(a)$ qDAGx with known ordering (denoted as $\mathrm{qDAGx}_{0}$ and referred to as the \emph{oracle} in this paper) $(b)$ qGADx with unknown ordering (denoted as qDAGx itself), and $(c)$ qDAGx with a misspecified ordering (denoted as $\mathrm{qDAGx}_{\mathrm{m}}$). The last case is considered to check the robustness of our procedure i.e., by checking for lower false positive rates in variable selection of the response variables and covariates. Just as in the case of known ordering, misspecified ordering also admits a factorization of likelihood (a wrong one in fact) and enjoys parallel and independent estimation of parameters. To mis-specify the order, we use Kendall's rank correlation coefficient or Kendall's $\mathrm{T}$ \citep{kendall1938new} and use two levels of misspecification, with $\mathrm{T} = 0.25\text{ and }0.5$. For example, a misspecified ordering with $\mathrm{T}=0.25$, is a randomly chosen permutation of $\Yvec_1,\ldots,\Yvec_p$ such that the rank correlation between the permuted and true orderings is 0.25.  

At each setting of $(n,\,p,\,q)$, we infer the quantile-DAGs at nine equally spaced quantile levels $\tau\in\{0.1,\ldots,0.9\}$. The inferred quantile-DAGs from the three scenarios are compared against each other using nine different performance metrics:  true positive rate ($\mathrm{TPR}_{\mathrm{Y}}^\tau,\,\mathrm{TPR}_{\mathrm{X}}^\tau$), false positive rate ($\mathrm{FPR}_{\mathrm{Y}}^\tau,\,\mathrm{FPR}_{\mathrm{X}}^\tau$) and area under receiver operating characteristic curve ($\mathrm{AUC}_{\mathrm{Y}}^\tau,\,\mathrm{AUC}_{\mathrm{X}}^\tau$) of variable selection of (response variables, covariates) respectively; estimation norms $ \Delta_F\betavec^\tau,\,\Delta_F\thetavec^\tau$, and lastly the adjusted mean squared error in quantile estimation,  $\text{MSE}^\tau$. A detailed description of how these metrics are computed, is given in the Supplementary Section~\ref{extra_simulation_results}. For two representative settings of $(n,\,p,\,q)$, we compare the quantile-DAGs inferred in all the three scenarios in Fig.~\ref{all_3_models_comparison}. Similar results were observed in all the other numerical experiments performed; the details of which are deferred to Supplementary Section~\ref{extra_simulation_results}. 
\begin{figure}[!bp]
\centering
    \begin{minipage}{\textwidth}
        \includegraphics[width=0.97\textwidth]{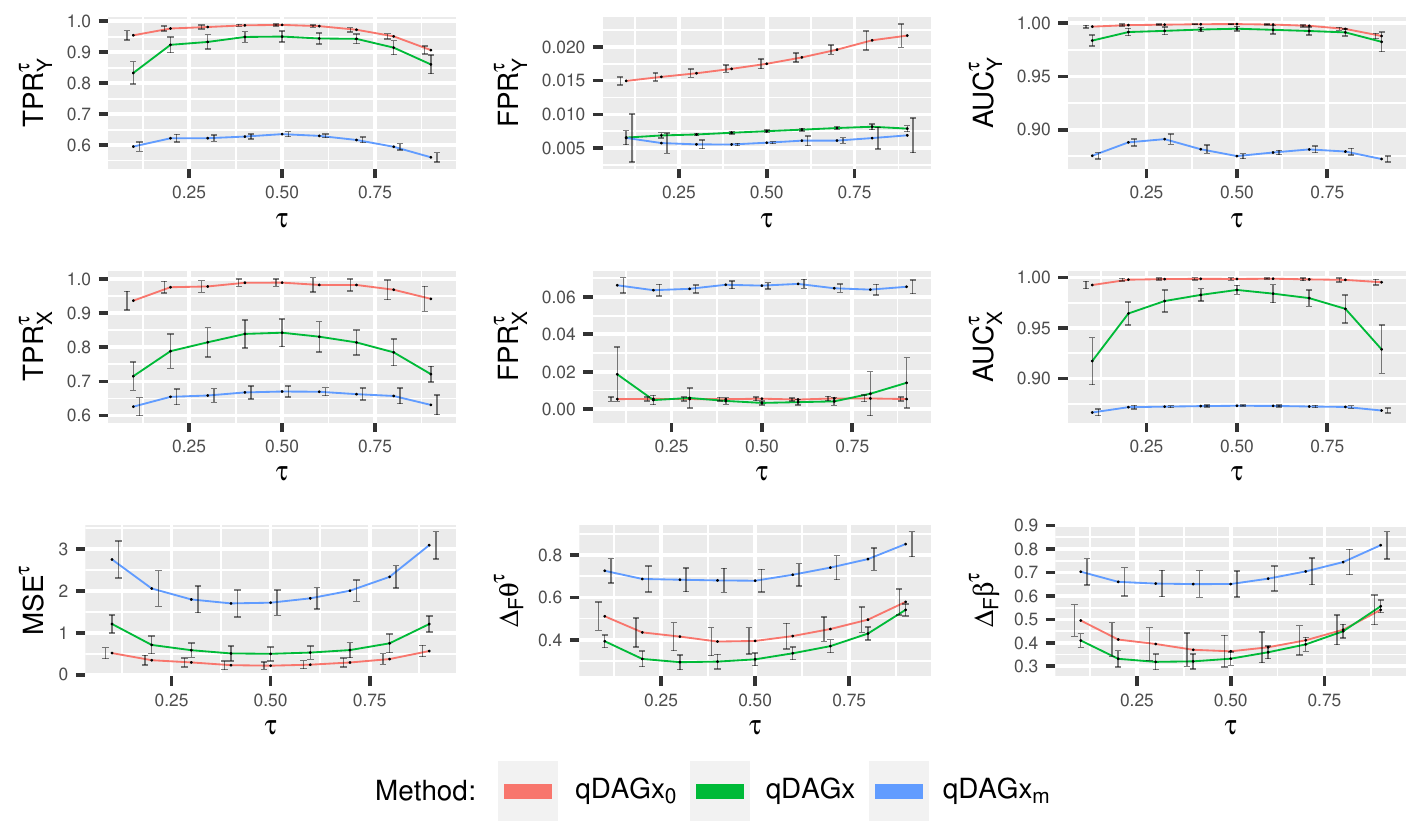}
        \Scaption{\label{all_3_models_comparison_a} $p=25, q=5, n=250$. Kendall's' $\mathrm{T}$ for the misspecified sequence is 0.5}
    \end{minipage}\hfill
    \medskip
    \begin{minipage}{\textwidth}
        \includegraphics[width=0.97\textwidth]{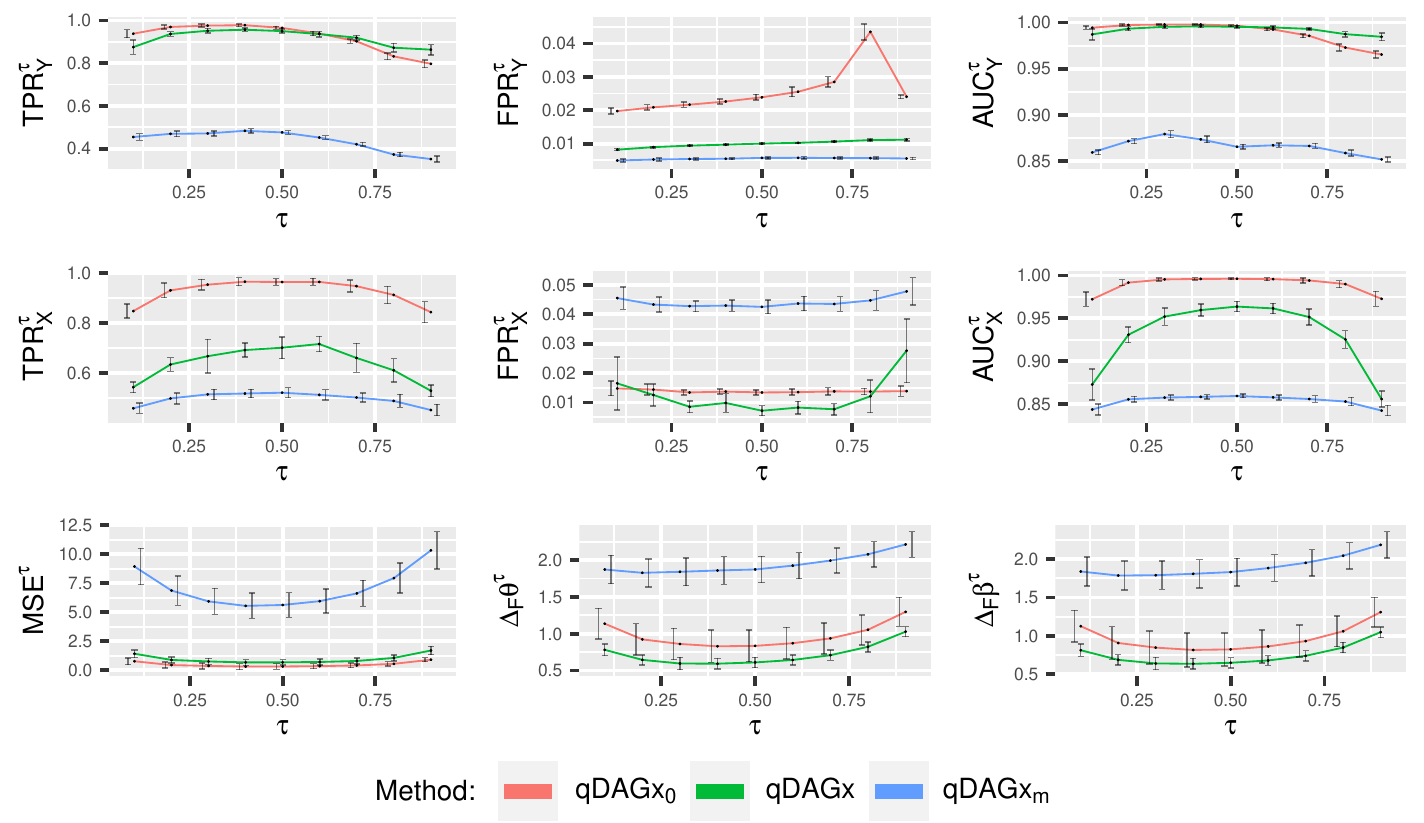}
        \Scaption{\label{all_3_models_comparison_b} $p=50, q=2, n=250$. Kendall's' $\mathrm{T}$ for the misspecified sequence is 0.25}
    \end{minipage}\hfill
    \caption{\label{all_3_models_comparison}Simulation results for two representative settings comparing the nine performance metrics between the quantile-DAG estimates of $\mathrm{qDAGx}_{0}$, qDAGx and $\mathrm{qDAGx}_\mathrm{m}$.}
\end{figure}

 In each of the nine panels in Fig.~\ref{all_3_models_comparison}\phantom{ }\ref{all_3_models_comparison_a} and Fig.~\ref{all_3_models_comparison}\phantom{ }\ref{all_3_models_comparison_b}, the mean of the corresponding performance metrics from 25 data sets is plotted as a black dot, with error bars covering one standard deviation on either sides of the mean.  From Fig.~\ref{all_3_models_comparison}\phantom{ }\ref{all_3_models_comparison_a}, we can see that the performance of qDAGx is very competitive to the oracle $\mathrm{qDAGx}_{0}$, in terms of variable selection at the response variables. In fact, it has true positive rates and area under the curve matching the oracle. In terms of variable selection of covariates, the true positive rate of qDAGx is lower that that of the oracle, but much higher than the misspecified model $\mathrm{qDAGx}_\mathrm{m}$; and the area under the ROC curve is comparable to the oracle. Similar trends are observed  in  Fig.~\ref{all_3_models_comparison}\phantom{ }\ref{all_3_models_comparison_b}. Coming to estimation norms, we can see that in both Fig.~\ref{all_3_models_comparison}\phantom{ }\ref{all_3_models_comparison_a} and Fig.~\ref{all_3_models_comparison}\phantom{ }\ref{all_3_models_comparison_b}, qDAGx has the lowest $\Delta_F\betavec^\tau\text{ and }\Delta_F\thetavec^\tau$. The adjusted mean squared error in quantile estimation, $\mathrm{MSE}^\tau$, is also very competitive to the oracle. As can be expected, the model with the misspecified ordering of nodes, performs the worst in all performance indicators. With these simulation results (including results in Supplementary Section~\ref{extra_simulation_results}), we establish that qDAGx produces results which are often comparable to the oracle, sometimes outperforming it. Though there are no exact competing procedures that do two level (response variable and covariate) quantile graphical modeling as qDAGx, we compare the results of qDAGx with that of the quantile graphs inferred by lasso penalized quantile regression \citep{wu2009variable}, `lasso-QR'. The results are presented in Supplementary Section~\ref{supplement_comparision_Lasso_QR}, where it can be seen that lasso-QR has a comparable performance only in variable selection of response variables and performs poorly in all other metrics.

\section{Application of qDAGx for Individualized Inference in Lung Cancer}
\label{real_data_application_qDAGx}

We demonstrate the practical applicability of qDAGx  by inferring individualized (patient-specific) protein--protein interaction networks in two sub-types of non-small cell lung cancer: lung adenocarcinoma (LUAD) and lung squamous cell carcinoma (LUSC). It is well-established that cancer is caused  by complex changes at molecular and genetic levels, which primarily arise from aberrations in protein or gene regulatory networks or signaling pathways  \citep{boehm2011towards, vogelstein2004cancer}. Emphasis on individual-specific, or genotypic drivers of gene networks or pathways, in case of lung cancer, is well laid out in~\citet{beer2002gene, shen2019precision}, indicating a need for individualized inference on these networks. The data we examine comes from The Cancer Genome Atlas (TCGA) consortium \citep{weinstein2013cancer}, which has collated  proteomic, genomic, and clinical  data from over 7700 patients across 32 different cancer types.    From the TCGA database, we focus our analyses on a subset of patients with LUAD and  LUSC \citep{cancer2012comprehensive, cancer2014comprehensive, campbell2016distinct}. We consider the proteomic data measured using Reverse Phase Protein Array (RPPA) technology, which is further streamlined and processed by \citet{ha2018personalized} into functional pathways. Pathways are sets of proteins that are collectively responsible for cellular functions such as apoptosis, cell cycle, DNA damage response, that moderate different oncological processes. By studying the network topology of these pathways, our goal is a deeper understanding of the key \emph{individualized} functional changes that may induce carcinogenesis \citep{bandyopadhyay2010rewiring}. 

In our study, we consider messenger-RNA  (mRNA) and methylation as  two observed external covariates (i.e, $q=2$), and set, $\Xmat=(\Xvec_1,\, \Xvec_2)=(\text{mRNA, methylation})$. Our goal is to integrate information from transcriptomic (mRNA) and DNA (methylation) data to decipher proteomic networks that might be disrupted during the oncogenic process. The scientific motivation stems from the fact that mRNA carries instructions from the DNA into ribosome and hence plays a crucial role in the protein synthesis, whereas {methylation is known to play an important role in the regulation of protein--protein interactions, transcriptions and other biological processes}~\citep[Summary, Chapter 11]{raju2019co}, both with clinical significance in cancer ~\citep{hargrove1989role, phillips2008role}. We use protein expression data for  $p=67$ proteins across 12 pathways, for $n=306$ patients with LUAD and $n=278$ patients with LUSC.  We infer patient-specific quantile-DAGs at $\tau\in\{0.1,\ldots,0.9\}$ using qDAGx. The names of these 67 proteins are presented in Supplementary Section~\ref{extra_real_data_analysis}, Table~\ref{protein_number_mapping}. 

\subsection{Biological interpretations at the  individual level}
\label{real_data_individual_inference}

Once the posterior samples for qDAGx are available for every patient via MCMC for $\tau\in\{0.1,\ldots,0.9\}$, we choose that posterior sample (quantile-DAG) for every patient at a quantile level $\tau$, which is closest to the posterior mean of the corresponding patient. Quantile-DAGs of a randomly chosen patient with LUAD, at $\tau=\{0.1,\,0.5,\,0.9\}$ is presented in Figure~\ref{rand_patient_LUAD_tau_159}. It can be seen from the figure that the set of nodes with high out-degrees i.e., the set of influential proteins for each pathway, changes across different quantile levels. This reemphasizes our objective to study quantile-DAGs, instead of Gaussian DAGs, to achieve a better understanding of the extremal dependence.
\begin{figure}[!htb]
    \centering
    \begin{minipage}{0.33\textwidth}
        \centering
        \includegraphics[scale=0.69]{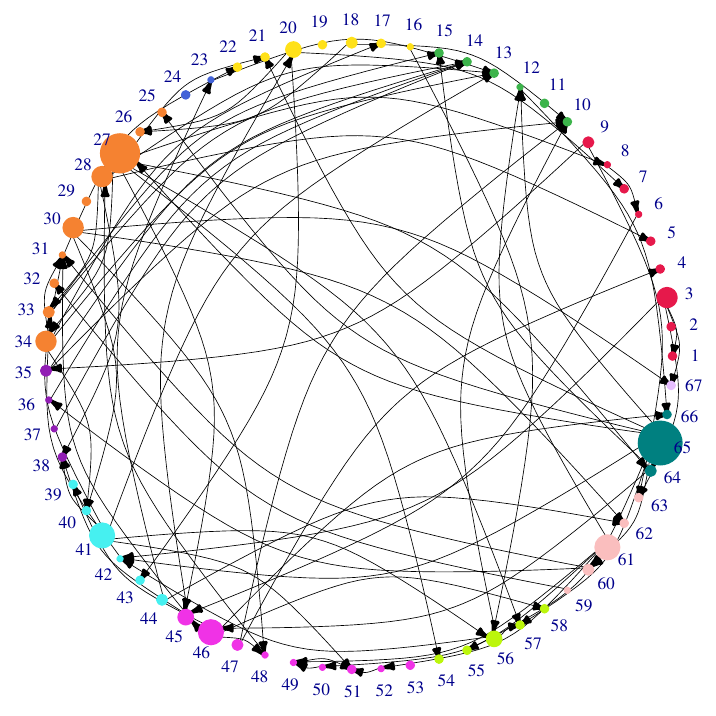}
        \Scaption{\label{rand_patient_LUAD_tau_1}}
    \end{minipage}\hfill
    \begin{minipage}{0.33\textwidth}
        \centering
         \includegraphics[scale =0.69]{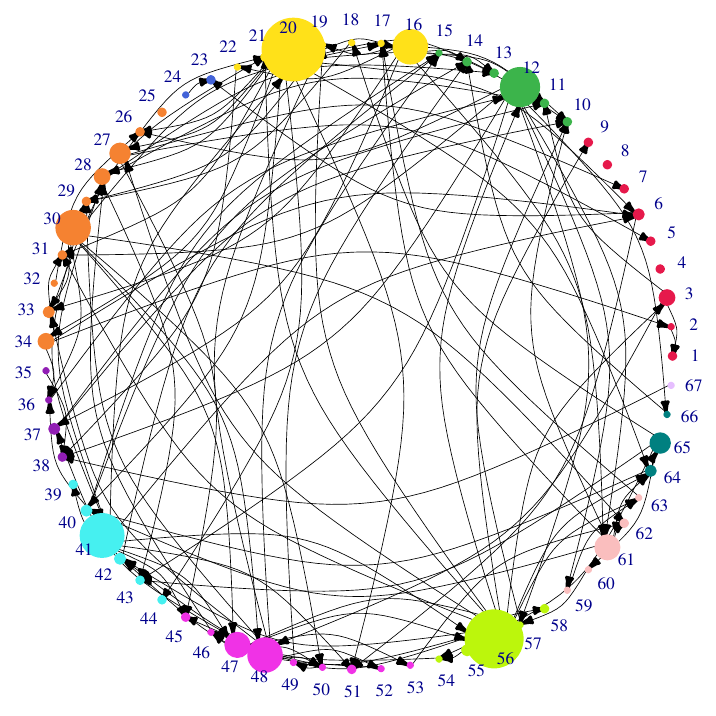}
        \Scaption{\label{rand_patient_LUAD_tau_5}}
    \end{minipage}
        \begin{minipage}{0.33\textwidth}
        \centering
        \includegraphics[scale=0.69]{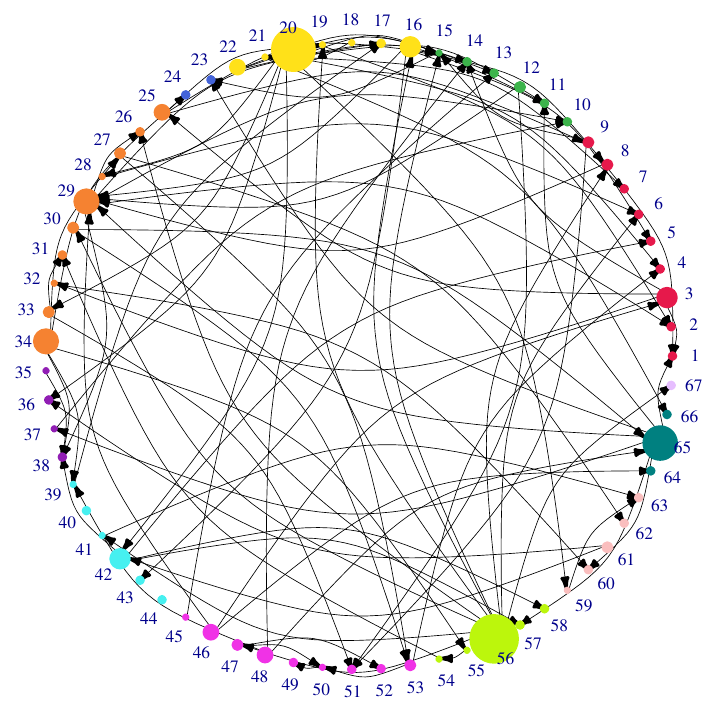}
        \Scaption{\label{rand_patient_LUAD_tau_9}}
    \end{minipage}\hfill
    \caption{\label{rand_patient_LUAD_tau_159}Panels (a), (b) and (c) are visualizations of quantile-DAGs at $\tau=0.1,\,0.5\text{ and }0.9$ respectively, for a randomly selected patient with LUAD. Node sizes are proportional to out-degree of nodes and nodes are colored according to the pathway to which they belong. The map between node colors and pathway names is given in Supplementary Table~\ref{pathway_color_mapping}. Note that there are some proteins which belong to multiple pathways; such proteins are just assigned to one of the pathways for the sake of clear visualization.}
\end{figure}

While we infer quantile-DAGs for each patient, for further interpretation and illustration, we present results for those edges in the inferred DAGs which are present in at least 50\% of the patients and across five different quantile levels. These protein--protein associations for LUAD and LUSC are presented in Table~\ref{LUAD_LUSC_min_50p_5q}. In what follows, we discuss the main implications of our findings as summarized in Table~\ref{LUAD_LUSC_min_50p_5q}, and their connections with previous literature. For the edge, \edge{BAK1}{BID}, \citet{sarosiek2013bid} have identified that in cancers including the lung, \texttt{BID} preferentially activates \texttt{BAK1}, with implications on chemotherapy response. Studies performed by \citet{sasaki2006egfr, li2012lung, zhang2013prognostic} are among the many works in lung cancer, which have studied mutations in \texttt{EGFR} and \texttt{ERBB2}; and found biological evidence for the effect of \texttt{ERBB2} (also known as \texttt{HRE2}) on \texttt{EGFR} (\edge{EGFR}{ERBB2}). As for the edge \edge{PCNA}{CHEK1}, \citet{yang2008chk1} have shown that  \texttt{CHEK1} (also known as \texttt{CHK1}) regulates the DNA damage-induced degradation of the protein \texttt{PCNA} (via Ubiquitination), which facilitates the continuous replication of damaged DNA.  \citet{becker2018flap} corroborate the same finding in several other cancers. For the edge \edge{BAD}{AKTS1}, the evidence is indirect. \citet{wang2012pi3k} have identified that mRNAs of \texttt{PTEN}, \texttt{ATKS1} and \texttt{BAD} are significantly down-regulated in lung cancer cells, which are resistant towards large-dose and short-period radiation therapies. \citet{knuppel2018fk506} have
studied the up and down regulations of \texttt{COL6A1} and \texttt{CAV1} in presence of
a protein which regulates
lung fibroblast migration, which in turn is known to enhance the migration of cancer cells~\citep{camci2015fibroblasts}. Further, a visual representation of prevalence, using two representative edges \edge{CAV1}{COL6A1} and \edge{MYH11}{COL6A1} from Table~\ref{LUAD_LUSC_min_50p_5q} across both the cancers, is presented in Figure~\ref{COL6A1_both_cancers}. It can be seen from the figure that the prevalence of the edges is not uniform across different quantile levels, strengthening our argument for a quantile based inference.  An overall summary of the effect of mRNA and methylation on the protein--protein interactions is presented in Table~\ref{percentage_second_level_covar_cancer}. It is instructive to note from the table that a strikingly high percentage of the edges is influenced by both mRNA and methylation; and further, methylation alone influences about twice the number of edges when compared to mRNA alone.

    \begin{table}[h]
    \caption{\label{LUAD_LUSC_min_50p_5q} Directed edges in quantile-DAG estimates which are present in at least 50\% of patients and across five out of nine quantile levels, $\tau\in\{0.1,\ldots,0.9\}$. Common edges in LUAD and LUSC are in bold.\\}
    \centering
    \resizebox{\textwidth}{!}{
    \begin{tabular}{|ccc|ccc|}
    \hline
    \multicolumn{3}{|c|}{Lung adenocarcinoma (LUAD)} & \multicolumn{3}{c|}{Lung squamous cell carcinoma (LUSC)} \\ \hline
    \textbf{\edge{BAK1}{BID}}  &    \edge{BAD}{ATK1S1}         & \edge{BID}{ERBB3}    & \textbf{\edge{BAK1}{BID}}    & \edge{AKT1, AKT2, AKT3}{AKT1S1} & \edge{CAV1}{PGR}     \\
    \textbf{\edge{CAV1}{COL6A1}} & \textbf{\edge{EGFR}{ERBB2}}         & \edge{GAPDH}{CDH2}   & \textbf{\edge{CAV1}{COL6A1}} & \textbf{\edge{EGFR}{ERBB2}}           & \edge{CCNB1}{COL6A1} \\
    \edge{JUN}{ERBB3}   & \textbf{\edge{MAPK1, MAPK3}{MAP2K1}} & \textbf{\edge{MYH11}{COL6A1}} & \edge{MTOR}{PGR}    & \textbf{\edge{MAPK1, MAPK3}{MAP2K1}}     & \textbf{\edge{MYH11}{COL6A1}} \\
    \edge{PCNA}{CHEK1}  & \textbf{\edge{RPS6KB1}{PGR} }        &                                         & \edge{MYH11}{FOXM1} &  \textbf{\edge{RPS6KB1}{PGR}}              &  \edge{RAD51}{PGR}\\
    \hline
    \end{tabular}
    }
    \end{table}

\begin{figure}[!t]
    \centering
    \includegraphics[width = \textwidth]{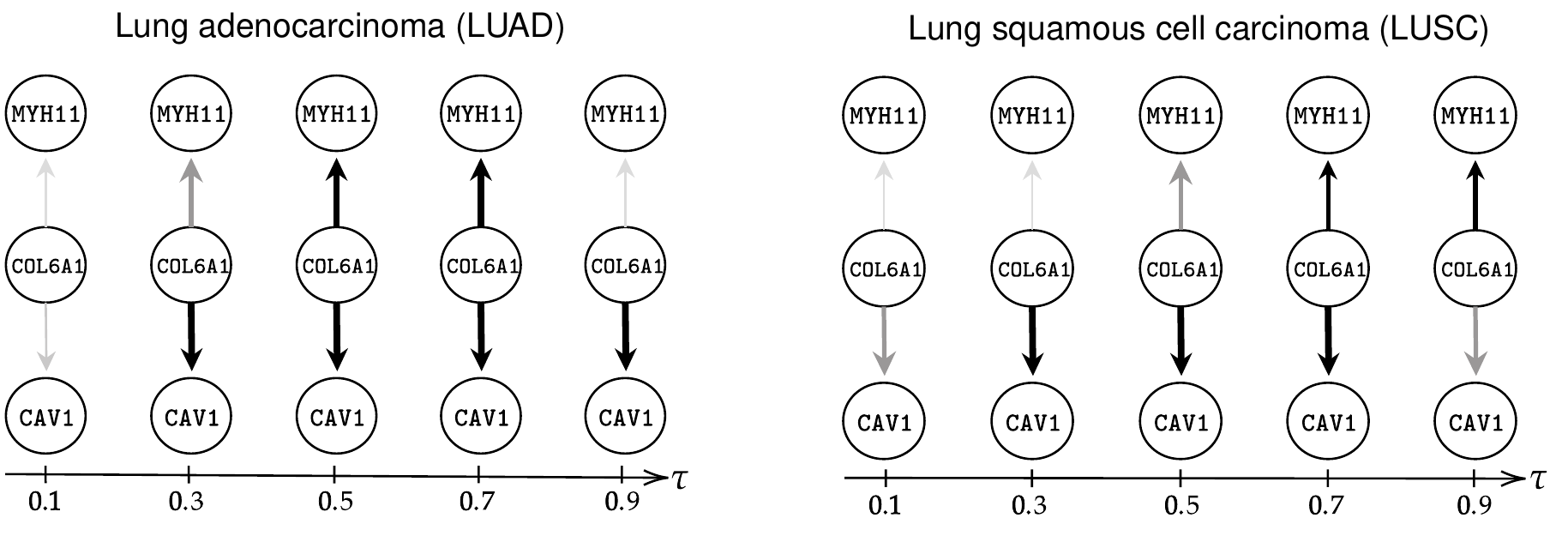}
    \caption{\label{COL6A1_both_cancers}Prevalence of \edge{CAV1}{COL6A1} and \edge{MYH11}{COL6A1} in LUAD and LUSC. Boldness of the edge is proportional to the number of patients in whom the edge was inferred at the specific quantile level $\tau$.}
\end{figure}

    \begin{table}[!htb]
    \caption{\label{percentage_second_level_covar_cancer} Mean (sd) for the percentage of edges influenced by  covariates (only mRNA, only methylation, both mRNA and methylation) in quantile-DAG estimates of all patients, across the quantiles $\tau\in\{0.1,\ldots,0.9\}$. The covariate $\bX_k\in\{\text{mRNA, methylation}\}$ is counted as an influence on the edge $\Yvec_h\leftarrow\Yvec_j$ if the pseudo probability of posterior inclusion (Supplementary Section~\ref{Bayesian_estimation_and_variable_selection}) is greater than 0.5.\\}
    \centering
    \begin{tabular}{|c|c|c|c|}
    \hline
         & only mRNA    & only methylation & both         \\
         \hline
    LUAD & 13.7 (0.78) & 28 (0.86)    & 58.3 (1.55) \\
    LUSC & 13.6 (0.66) & 28.1 (0.61)     & 58.3 (0.85)\\
    \hline
    \end{tabular}
    \end{table}

\subsection{Biological interpretations at the population level}
\label{real_data_population_inference}

To draw inferences at the population level, we aggregate the quantile-DAGs of all patients over posterior samples, at every quantile level, following \citet{chowdhury2020dagbagm}. The aggregated quantile-DAG is set as the mean of all the patient-specific quantile-DAGs, which in-turn are chosen as the closest DAG structure to the respective MCMC posterior means. The aggregated quantile-DAG at quantile level $\tau$ is denoted by $\Evec^{(\tau)}_{\text{LUAD}}$ and $\Evec^{(\tau)}_{\text{LUSC}}$ for the cancers LUAD and LUSC respectively. We present $\Evec^{(\tau)}_{\text{LUAD}}$ for $\tau\in\{0.1,\, 0.5,\, 0.9\}$ where the node size is proportional to in-degree of nodes, respectively in Fig.~\ref{LUAD_tau_1_5_9_results}~ \ref{LUAD_tau_1}, \ref{LUAD_tau_5} and \ref{LUAD_tau_9}.  In Fig.~\ref{LUAD_tau_1_5_9_results}\ref{LUAD_out_deg_tau_5},  $\Evec^{(0.5)}_{\text{LUAD}}$ is visualized when  the node size is proportional to out-degree of nodes, as it is easier to interpret. Similar aggregated quantile-DAGs for Lung squamous cell carcinoma (LUSC) are presented in Supplementary Fig.~\ref{LUSC_tau_1_5_9_results}.

\begin{figure}[bp!]
    \centering
    \begin{minipage}{0.5\textwidth}
        \centering
        \includegraphics[scale=1]{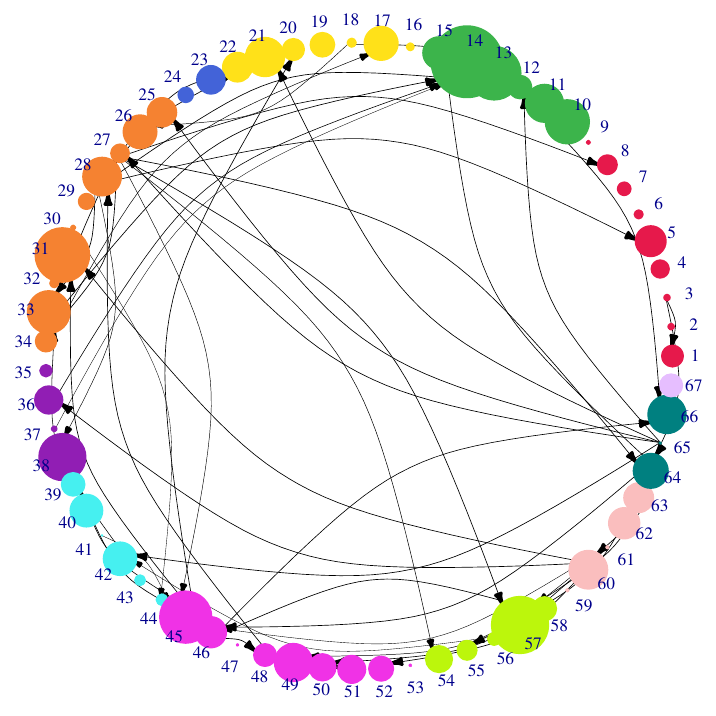}
        \Scaption{\label{LUAD_tau_1}}
    \end{minipage}\hfill
    \begin{minipage}{0.5\textwidth}
        \centering
         \includegraphics[scale =1]{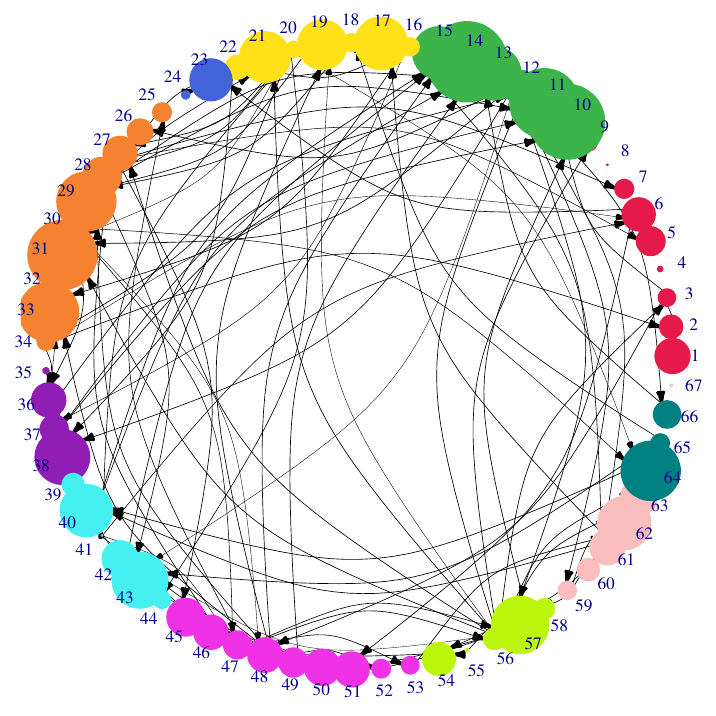}
        \Scaption{\label{LUAD_tau_5}}
    \end{minipage}
        \begin{minipage}{0.5\textwidth}
        \centering
        \includegraphics[scale=1]{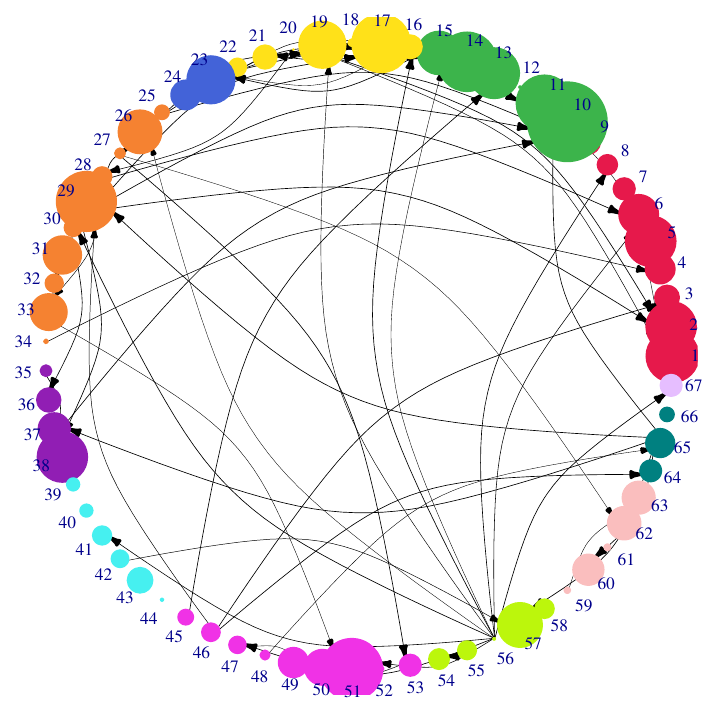}
        \Scaption{\label{LUAD_tau_9}}
    \end{minipage}\hfill
     \begin{minipage}{0.5\textwidth}
        \centering
        \includegraphics[scale=1]{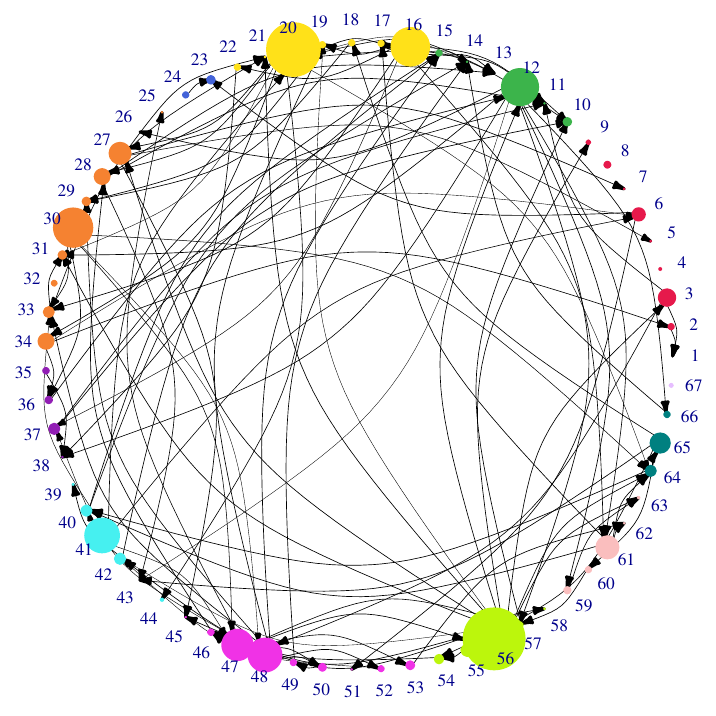}
        \Scaption{\label{LUAD_out_deg_tau_5}}
    \end{minipage}
    \caption{\label{LUAD_tau_1_5_9_results} Panels (a), (b), (c) show aggregated quantile-DAGs, $\Evec^{(\tau)}_{\text{LUAD}}$, for $\tau=0.1,\,0.5\text{ and }0.9$ respectively. Panel (d) shows $\Evec^{(0.5)}_{\text{LUAD}}\,$,  when the node size is proportional to out-degree of nodes. In all panels, nodes are colored according to the pathway to which they belong. The map between node colors and pathway names is given in Supplementary Table~\ref{pathway_color_mapping}. Note that there are some proteins which belong to multiple pathways; such proteins are just assigned to one of the pathways for the sake of clear visualization.}
\end{figure}

To identify the hub nodes, we rank the proteins in a descending order by their in-degrees with respect to the aggregated quantile-DAGs estimated at $\tau\in\{0.1\ldots,0.9\}$; and then pick those proteins which appear in the top-3 positions in at least four different aggregated quantile-DAGs. Doing so, the identified `key' proteins (in decreasing order) in LUAD are: \texttt{CAV1, GAPDH, SHC1} and in LUSC are: \texttt{GAPDH, CAV1, SHC1}. We summarize the main biological implications of these findings. The controversial role played by the protein \texttt{CAV1} in LUAD and LUSC is discussed in \citet{fu2017different}; where it is mentioned that \texttt{CAV1} plays a  tumor-inhibitory role in LUAD but a tumor-promoting role in LUSC. Relation between proliferation of cancer cell growth and the protein \texttt{GAPDH} in LUAD and LUSC is discussed in~\citet{tokunaga1987enhanced} and~\citet{hao2015elevated} respectively. Up-regulation of the protein \texttt{SHC1} in both the lung cancers was observed in a recent study conducted by~\citet{liang2021increased}.

Identifying the `key' proteins when the ranking is by out-degree instead of in-degree, we identify \texttt{ERBB3, CHEK1, MRE11A} as the hub nodes (in decreasing order) in LUAD and \texttt{ERBB3, RAB11A-RAB11B, PGR} in LUSC. The rank of \texttt{ERBB3} in the case of LUAD is interesting because, it ranked first in 7 out of 9 aggregated quantile-DAG estimates. Similar observation was made by \citet{sithanandam2003cell}, where the authors found that \texttt{ERBB3} was present at high levels in five of seven human lung adenocarcinoma cell lines examined. In a recent study by \citet{tan2022chek1}, the authors noted that \texttt{CHEK1} is a `hub' gene which corresponds to poor prognosis for lung adenocarcinoma. The role of \texttt{ERBB3} and its therapeutic targeting in various cancers including LUSC, is discussed by \citet{hafeez2020new}. Also in a general study of non-small cell lung cancers, which includes both LUAD and LUSC, \citet{dong2017rab11a} observed that \texttt{RAB11A} promotes the proliferation and invasion of cancer through the regulation of yes-associated protein (\texttt{YAP}). The role of progesterone receptors (\texttt{PGR}) in the tumourigenesis of non-small cell lung cancers has been recognized in \citet{su1996expression}. Overall, our findings agree with these previous studies, and unravel several other directed edges potentially of interest in lung cancer.
\section{Conclusions and Future Works}
\label{conclusion_qDAGx}
We introduce a novel quantile-DAG learning framework, termed qDAGx, which learns the association between variables at a given quantile with an unknown ordering of the  nodes. The key distinguishing feature of our approach is it provides \emph{individualized inference}, achieved via a varying sparsity framework, which is of interest in many modern applications, including precision medicine. Our framework also overcomes the major drawbacks of existing approaches such as the Gaussian DAG learning algorithms with known ordering, and models with a parametric likelihood. Our demonstration that the protein--protein interaction network varies at different quantile levels and across individuals in patients with LUAD and LUSC illustrate the usefulness of qDAGx in precision medicine. Our findings are corroborated by previous studies and potentially unravel new directional associations in studies of protein--protein interaction in lung cancer that are driven by upstream mRNA and methylation factors.

Several future directions could naturally follow from our work. First, theoretical guarantees of estimating the quantile-DAG structure remain  relatively unexplored. Recently, \citet{cao2019posterior} have proved the DAG estimation consistency in the posterior for Gaussian graphical models. Whether similar approaches are adequate for  proving the estimation consistency of quantile-DAGs remains an open question that is separate from the node conditional consistency results we establish. A second promising direction is to estimate the quantile-DAGs where strict conditions are imposed to preserve the increasing nature of quantiles \citep{NIPS2016_537de305, yang2017joint,das2018bayesian}. Recent works by \citet{yang2016posterior} and \citet{wu2021bayesian} have shown that the naive multiple quantile estimation does not give valid posterior summaries such as credible intervals or posterior means and have proposed adjustment methods. Individualized inference at multiple quantile levels while addressing the problem of \emph{quantile crossing} is challenging, both inferentially and computationally; and should be considered a future area of investigation.
\medskip
\begin{center}
{\large\bf SUPPLEMENTARY MATERIAL}
\end{center}
\medskip
(a) \emph{Supplementary Text:} contains proofs and additional results from simulations.\\ (b) \emph{Supplementary Code:} contains computer code archive along with a README file.
\setstretch{0.97}
\printbibliography
\clearpage\pagebreak\newpage
\doublespacing
\begin{center}
	{\LARGE{\bf Supplementary Material to\\
	{\it Bayesian Covariate-Dependent Quantile Directed
Acyclic Graphical Models for Individualized
Inference}}
	}
\end{center}
\setcounter{equation}{0}
\setcounter{page}{1}
\setcounter{table}{0}
\setcounter{section}{0}
\setcounter{subsection}{0}
\setcounter{figure}{0}
\renewcommand{\theequation}{S.\arabic{equation}}
\renewcommand{\thesection}{S.\arabic{section}}
\renewcommand{\thepage}{S.\arabic{page}}
\renewcommand{\thetable}{S.\arabic{table}}
\renewcommand{\thefigure}{S.\arabic{figure}}
\section{Complete Posterior Inference for qDAGx}
\label{complete_Gibbs_samplers}
In this section we detail the posterior updates of all parameters in qDAGx. A Gibbs update is performed when the full conditional is available, else the update proceeds using a random walk Metropolis step. The union-DAG condition~\eqref{union_DAG_Condition} needs to be obeyed at every MCMC iteration for qDAGx, while this condition is automatically met when sampling parameters in the case of known ordering (or oracle). In the sampling procedure, the pxHS prior on $\alpha_{hjk}^0$ mirrors the pxHS prior on $\alphavec_{hjk}^*$. So it is implicit that the updates of parameters in $\alpha_{hjk}^{0}$ mirror the updates of corresponding parameters in $\alphavec_{hjk}^*$. Also, we exploit the Inverse-Gamma (IG) parameter expansion technique \citep{makalic2015simple}, for half-Cauchy densities for the pxHS prior. With this technique, the half-Cauchy prior on $ T_{hj},\, L_{hjk}$ in $\alphavec_{hjk}^*$ can be written as,
\begin{equation*}
\begin{split}
    T_{hj}^2\mid c_{hj} \sim \text{IG}(1/2, 1/c_{hj}),&\,\, c_{hj}\sim \text{IG}(1/2,1),\\  
    L_{hjk}^2\mid \zeta_{hjk}\sim \text{IG}(1/2, 1/\zeta_{hjk}),&\,\, \zeta_{hjk}\sim \text{IG}(1/2,1).
\end{split}
\end{equation*}
The above hierarchy also helps us to compute the pseudo probabilities in variable selection of the covariates, which is described in the next subsection.  With the prior hierarchy complete, updates related to all parameters in $\beta_{hjk}^{(\tau)}(\Xvec_i)$ can be enumerated as follows:
\begin{enumerate}
    \item[(a)] Update $\eta_{hjk}$ by random walk metropolis with a normal proposal $\mathcal{N}(\eta_{hjk}^{(\text{old})},\sigma_\eta^2)$.
    \item[(b)] Update $T_{hj}^2$ by Gibbs, where $T_{hj}^2\sim IG\left(\frac{1+q}{2},\, \frac{1}{c_{hj}}+\frac{1}{2}\sum_{k=1}^{q}\left(\frac{\eta_{hjk}}{L_{hjk}}\right)^2\right)$.
    \item[(c)] Update $c_{hj}$ by Gibbs, where $c_{hj}\sim IG\left(1,\, 1+\frac{1}{T_{hj}^2}\right)$.
    \item[(d)] Update $L_{hjk}^2$ by Gibbs, where $L_{hjk}^2\sim IG\left(1,\, \frac{1}{\zeta_{hjk}}+\frac{1}{2}\left(\frac{\eta_{hjk}}{T_{hj}}\right)^2\right)$.
    \item[(e)] Update $\zeta_{hjk}$ by Gibbs, where $\zeta_{hjk}\sim IG\left(1,\, 1+\frac{1}{L_{hjk}^2}\right)$.
    \item[(f)] Update $\mu_{hj}$ by random walk metropolis with a normal proposal $\mathcal{N}(\mu_{hj}^{(\text{old})},\sigma_\mu^2)$.
    \item[(g)] Update $t_{hj}$ by random walk metropolis with a normal proposal $\mathcal{N}(t_{hj}^{(\text{old})},\sigma_t^2)$.
    \item[(h)] Update $m_{hjk}^{(l)}$ by Gibbs, where $m_{hjk}^{(l)}\sim 2\times\text{Ber}\left(\frac{1}{1+\xi_{hjk}^{(l)}}\right)$-1.
    \item[(i)] Update $\xi_{hjk}$ by random walk metropolis with a normal proposal $\mathcal{N}(\xi_{hjk}^{(\text{old})},\sigma_\xi^2)$.
\end{enumerate}

The random walk step sizes $\sigma_\eta,\, \sigma_\xi$ are fixed at 0.1, $\sigma_\mu$ is fixed at 0.5 and $\sigma_t$ is set dynamically (based on acceptances of updated thresholds before burn-in) to $0.1\times 2^{z}$, where $z$ is an integer in $[-4,\,4]$. Values of $a,\,b$ in the gamma prior $\mathrm{Gamma}(a,\,b)$ for thresholds are set at 10, 10 respectively. Both gamma and inverse-gamma distributions, throughout this paper, are in shape-rate parameterization. For qDAGx with known or misspecified ordering, we use $2\times 10^4$ iterations with a burn-in of $10^4$ samples and save every $10^\mathrm{th}$ sample after the burn-in. For qDAGx with unknown ordering and also in real data application, we use 5000 MCMC samples with a burn-in of 2500 and save every $10^\text{th}$ sample after the burn-in. In all the three cases, the minimum number of MCMC samples are chosen such that they give desired convergence results in the least amount of time. Representative MCMC diagnostic plots for the three procedures we compare, $\mathrm{qDAGx}_0,\,\mathrm{qDAGx}_m$ and qDAGx, at the node $\Yvec_1$ and quantile level $\tau=0.5$, is presented in Fig.~\ref{trace_plot_label}. Similar plots were observed across all nodes at different quantile levels and problem dimensions.
\begin{figure}[!t]
    \centering
    \includegraphics[width = \textwidth]{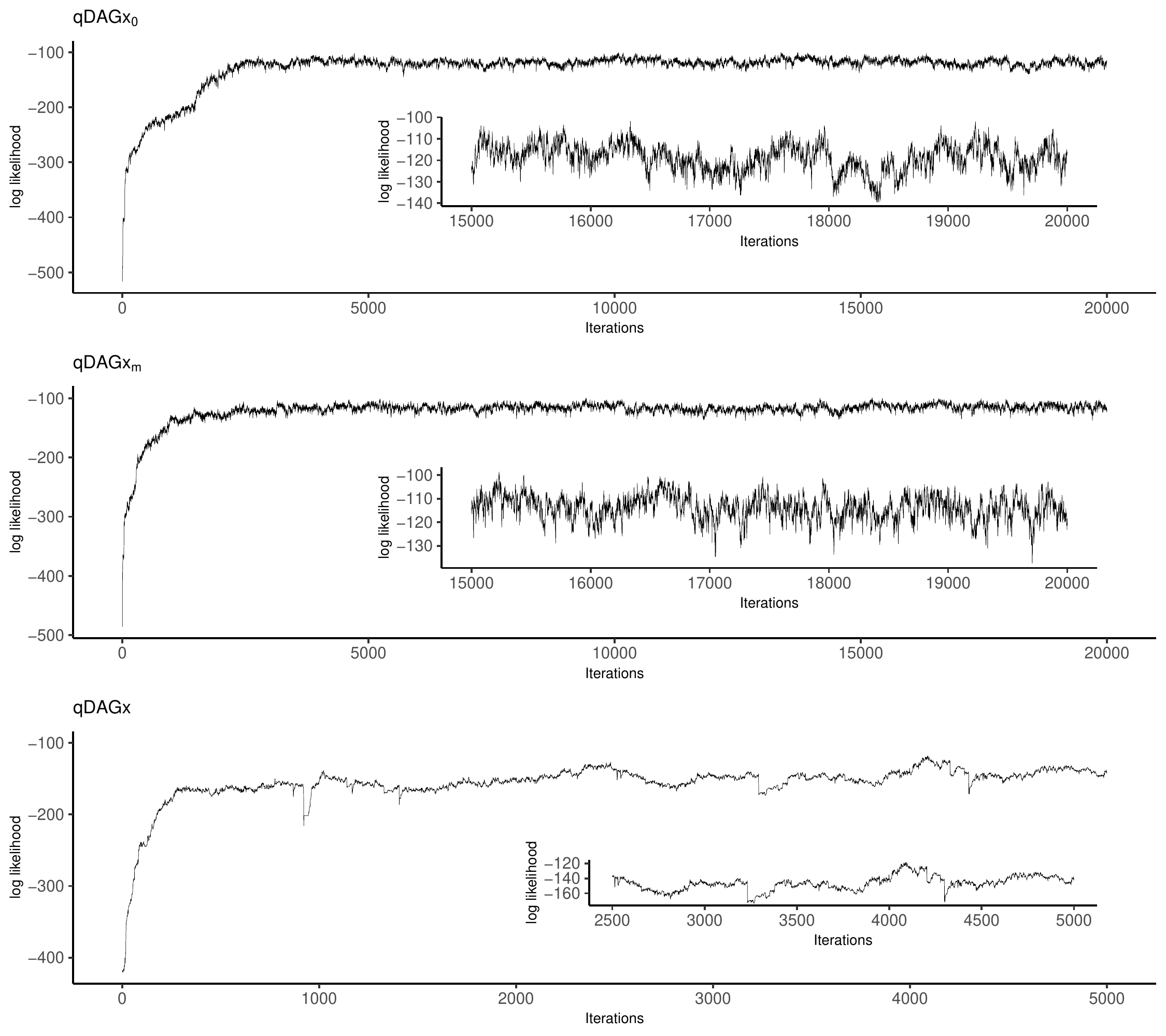}
    \caption{\label{trace_plot_label}
    Trace plot of log likelihood at the node $\Yvec_1$ vs. Iterations, for the three procedures compared in the simulations (Section~\ref{simulation_results}). Problem dimensions: $p=25,\,q=2,\, n=250$ and Kendall's' $\mathrm{T}$ for the misspecified sequence is 0.5.
    }
\end{figure}

\section{Variable Selection}
\label{Bayesian_estimation_and_variable_selection}
As outlined in Section~\ref{qDAGx_master_intro_section},  the edge $Y_{ih}\leftarrow Y_{ij}$ at quantile level $\tau$ exists, if the estimate of $\beta_{hj}^{(\tau)}(\Xvec_{i\cdot})$ is not zero. This information also translates into $(h,\, j)^\mathrm{th}$ entry in the adjacency matrix of $\mathcal{Q}\mathcal{G}_{i}^{(\tau)}$ being equal to $1$. 

First, when estimating the parameters in the case of `oracle', we have seen that the working likelihood can be factorized and the factored likelihood at a given node $\Yvec_h$ is obtained as in \eqref{factorised_likelihood_known_ordering}. Hence we can get the posterior samples of all parameters in the model, independently, for every $h\in\{1,\ldots,p\}$ and at every $\tau\in(0,\,1)$. This is not the case when the ordering of nodes in unknown as in~\eqref{DAG_working_likelihood}. In this case,  we start the sampling procedure with an empty graph $\mathcal{Q}\mathcal{G}^{(\tau)}_u$  i.e., with all entries in each $\mathcal{Q}\mathcal{G}^{(\tau)}_i$ set to zero, and then update the entries of the union-DAG based on the samples of $\beta_{hj}^{(\tau)}(\Xvec_{i\cdot})$ at every iteration. 

After sampling from the posterior at a given quantile level $\tau$, we detail the approach for variable selection used in this paper. First, depending on whether $\beta_{hj}^{(\tau)}(\cdot)$ is zero, we select the response variables. For the covariates, exact zeros do not arise under continuous shrinkage priors such as the horseshoe. Hence, we perform variable selection using the `pseudo-probabilities' of posterior inclusion \citep{carvalho2010horseshoe, datta2013asymptotic}, computed as: $1-1/(1+T_{hj}^2L_{hjk}^2) \in (0,1)$, where $T_{hj},\, L_{hjk}$ are the scale parameters in the prior of $\alphavec_{hjk}^*$, and the computed probability gives the `non-linear rate'. Similarly, computing the inclusion probability using scale parameters in the prior of $\alpha_{hjk}^0$, gives the `linear rate'. Taking a maximum of these two rates (linear and non-linear), gives the pseudo-probability of posterior inclusion of the effect of $\Xvec_k$, on the edge  $\Yvec_h\leftarrow\Yvec_j$.  

We use false discovery rate (FDR) control at 10\% in selecting the edges (response variables) and the covariates. First, we form Boolean matrices of appropriate dimensions whose entries are one at a posterior sample if the estimate of $\beta_{hj}^{(\tau)}(\Xvec_{i\cdot})\neq 0$, else zero; for a node $\Yvec_h$ at the given quantile level $\tau$. Comparing this Boolean matrix with the truth for true and false positives, at every posterior sample, gives two more Boolean matrices. And taking the average of these true and false positive Boolean matrices, over the dimension corresponding to the posterior samples, gives us the respective posterior probabilities of true and false positives, of the edges $\Yvec_h\leftarrow\Yvec_j,\,j\in\{1,\ldots,p\}\setminus\{h\}$ for a given node $\Yvec_h$ at the  quantile level $\tau$, in all the observations. Now consider a fine grid of thresholds $\in(0,\,1)$ and for every threshold in this grid, count the number of posterior probabilities (corresponding to true and false positives) greater than the threshold (which also corresponds to variable selection at this threshold); followed by the computation of FDR. Pick that threshold in the grid which has the FDR closest to 10\% and use it for the variable selection of the edges $\Yvec_h\leftarrow\Yvec_j,\,j\in\{1,\ldots,p\}\setminus\{h\}$, for a given node $\Yvec_h$ at the quantile level $\tau$. Repeating the same procedure with `pseudo probabilities' of posterior inclusion instead of the Boolean matrices, gives the variable selection results of the covariates, i.e., the influence of $\Xvec_k,\,k\in\{1,\ldots,q\}$ on the edges $\Yvec_h\leftarrow\Yvec_j,\,j\in\{1,\ldots,p\}\setminus\{h\}$, for a given node $\Yvec_h$ at the quantile level $\tau$.

\section{Proof of Theorem~\ref{identifiability_theorem}}
\label{proof_of_identifiability}

Let $\mathcal{Q}\mathcal{G}^{(\tau)}$ and $\mathcal{Q}\mathcal{G}^{'(\tau)}$ be two distinct quantile-DAGs parameterized by $\betavec^{(\tau)}$ and $\betavec^{(\tau)'}$, respectively. As we establish the identifiability for the population, we suppress the notation $i$, with respect to observations. The likelihood of quantile-DAG $\mathcal{Q}\mathcal{G}^{(\tau)}$ from Equation~\eqref{DAG_working_likelihood} can be written as:
\small
\begin{equation*}
\begin{split}
 \pi(\Yvec \mid \Xvec,\,\tau,\,\betavec^{(\tau)}) &= \prod_{h=1}^{p}\tau(1-\tau)\exp\left\{-\psi_\tau\left( Y_{h} -  \beta_{h0}^{(\tau)}(\Xvec)  - \sum_{j\neq h} Y_{j}\beta_{hj}^{(\tau)}(\Xvec)\right)\right\}\times \Ind(\mathcal{Q}\mathcal{G}^{(\tau)}\text{ is a DAG})\\
 & = \prod_{h=1}^{p}\tau(1-\tau)\exp\left\{-\psi_\tau\left( Y_{h} -  \beta_{h0}^{(\tau)}(\Xvec)  - \sum_{j\in pa(h)} Y_{j}\beta_{hj}^{(\tau)}(\Xvec)\right)\right\},
\end{split}
\end{equation*}
\normalsize

\noindent where $\psi_\tau(x) = \tau x\Ind(x\geq 0) +(\tau-1)x\Ind(x<0)$. Denote the likelihood of quantile-DAG parameterized by $\betavec^{(\tau)'}$ as $ \pi(\Yvec \mid \Xvec,\,\tau,\,\betavec^{(\tau)'})$. We prove that $ \pi(\Yvec \mid \Xvec,\,\tau,\,\betavec^{(\tau)})\neq  \pi(\Yvec \mid \Xvec,\,\tau,\,\betavec^{(\tau)'})$ by contradiction. Assume, 
\begin{equation}\label{eq:eql}
    \pi(\Yvec \mid \Xvec,\,\tau,\,\betavec^{(\tau)}) =  \pi(\Yvec \mid \Xvec,\,\tau,\,\betavec^{(\tau)'}),\,\forall\,\Ymat.
\end{equation}

Without loss of generality, assume that  $\mathcal{Q}\mathcal{G}^{(\tau)}$ is topologically sorted, i.e, whenever $Y_h\leftarrow Y_j$, we have $h<j$. Hence, the node $Y_1$ has no children, $ch(1)=\emptyset$. Also, for the sake of simplicity, we suppress $\Xmat,\,\tau$ while writing the coefficients $\beta_{hj}^{(\tau)}(\Xmat)$. Taking logarithms on both sides of \eqref{eq:eql} and differentiating it with respect to $Y_1$, we get,
\begin{equation}
\label{first_eq_iden}
    g_1(\tau) = g_1'(\tau) - \underset{k\in ch'(1)}{\sum} g_k'(\tau)\beta_{k1}',
\end{equation}
where,
\begin{equation*}
    \begin{matrix}
    g_h(\tau) = \begin{cases} \tau & \text{ if }Y_h \geq \beta_{h0}+\underset{j\in pa(h)}{\sum} Y_{j}\beta_{hj}\\
    \tau-1 & \text{ if }Y_h < \beta_{h0}+\underset{j\in pa(h)}{\sum} Y_{j}\beta_{hj}\end{cases},
 &   g_h'(\tau) = \begin{cases} \tau & \text{ if }Y_h \geq \beta_{h0}'+\underset{j\in pa'(h)}{\sum} Y_{j}\beta_{hj}'\\
    \tau-1 & \text{ if }Y_h < \beta_{h0}'+\underset{j\in pa'(h)}{\sum} Y_{j}\beta_{hj}'\end{cases}
        \end{matrix}.
\end{equation*}
Define two sets, 
\begin{equation*}
    A  =\{k\in ch'(1)\mid \beta_{k1}'\leq 0\}\text{ and } B  =\{k\in ch'(1)\mid \beta_{k1}'> 0\}.
\end{equation*}
Then if,
\small
\begin{equation*}
    Y_1\geq \text{max}\Bigg[\beta_{10}+\underset{j\in pa(1)}{\sum} Y_{j}\beta_{1j},\,\, \beta_{10}'+\underset{j\in pa'(1)}{\sum} Y_{j}\beta_{1j}',\,\,\underset{k\in ch'(1)}{\text{max}}\left\{\frac{1}{\beta_{k1}'}\left(Y_k - \beta_{k0}' - \underset{j\in pa'(k)\setminus\{1\}}{\sum}Y_k \beta_{kj}'\right)\right\}\Bigg],
\end{equation*} 
\normalsize
we have:
\begin{align}
\label{case_1}
    &g_k'(\tau) =\tau\text{, for }k\in A,\nonumber\\
    &g_k'(\tau) =\tau-1\text{, for }k\in B,\\
    &g_1(\tau) =g_1'(\tau)=\tau.\nonumber
\end{align}
Similarly, if,
\small
\begin{equation*}
    Y_1< \text{min}\Bigg[\beta_{10}+\underset{j\in pa(1)}{\sum} Y_{j}\beta_{1j},\,\, \beta_{10}'+\underset{j\in pa'(1)}{\sum} Y_{j}\beta_{1j}',\,\,\underset{k\in ch'(1)}{\text{min}}\left\{\frac{1}{\beta_{k1}'}\left(Y_k - \beta_{k0}' - \underset{j\in pa'(k)\setminus\{1\}}{\sum}Y_k \beta_{kj}'\right)\right\}\Bigg],
\end{equation*} 
\normalsize
we have:
\begin{align}
\label{case_2}
    &g_k'(\tau) =\tau-1\text{, for }k\in A,\nonumber\\
    &g_k'(\tau) =\tau\text{, for }k\in B,\\
    &g_1(\tau) =g_1'(\tau)=\tau-1.\nonumber
\end{align}
\normalsize
Therefore, \eqref{first_eq_iden} becomes, 
\begin{align*}
    0 & = \underset{k\in ch'(1)}{\sum} g_k'(\tau)\beta_{k1}'=\underset{k\in A}{\sum} g_k'(\tau)\beta_{k1}'+\underset{k\in B}{\sum} g_k'(\tau)\beta_{k1}'
\end{align*}    
which leads to
\begin{align*}
    0=\underset{k\in A}{\sum}\tau\beta_{k1}'+\underset{k\in B}{\sum} (\tau-1)\beta_{k1}'\text{(from~\eqref{case_1})} &\text{ and } 0= \underset{k\in A}{\sum}(\tau-1)\beta_{k1}'+\underset{k\in B}{\sum} \tau\beta_{k1}'\text{(from~\eqref{case_2})}.
\end{align*}
Hence, 
\begin{align*}
    \underset{k\in A}{\sum}\beta_{k1}' & =\underset{k\in B}{\sum} \beta_{k1}'.
\end{align*}

As $\beta_{k1}'\leq 0$ for $k\in A$ and $\beta_{k1}'> 0$ for $k\in A$, $\underset{k\in A}{\sum}\beta_{k1}' =\underset{k\in B}{\sum} \beta_{k1}'$ holds if and only if $\beta_{k1}'=0\,\forall\,k\in ch'(1)$. This further implies that $ch'(1)=\emptyset$. With this~\eqref{first_eq_iden} becomes, $g_1(\tau)=g_1'(\tau)$, which implies $pa(1)=pa'(1),\, \beta_{10}=\beta_{10}'\text{ and }\beta_{1j}=\beta'_{1j}\,\forall\,j\in pa(1)$. Finally, we can marginalize out $Y_1$ and we are left with quantile-DAGs $\mathcal{Q}\mathcal{G}^{(\tau)}\text{ and }\mathcal{Q}\mathcal{G}^{'(\tau)}$ with $p-1$ nodes and the likelihood $\pi(\Yvec_{-1} \mid \Xvec,\,\tau,\,\betavec^{(\tau)})$ becomes,
\small
\begin{equation*}
 \pi(\Yvec_{-1} \mid \Xvec,\,\tau,\,\betavec^{(\tau)}) = \prod_{h=2}^{p}\tau(1-\tau)\exp\left\{-\psi_\tau\left( Y_{h} -  \beta_{h0}^{(\tau)}(\Xvec)  - \sum_{j\in pa(h)} Y_{j}\beta_{hj}^{(\tau)}(\Xvec)\right)\right\}.
\end{equation*}
\normalsize
It is possible to marginalize out $Y_1$ while retaining a factorized form of the likelihood of the remaining variables $\Yvec_{-1}$ because, $ch(1)=ch'(1)=\emptyset$ and $\pi(\Yvec \mid \Xvec,\,\tau,\,\betavec^{(\tau)}) = \pi(Y_1 \mid \Yvec_{-1},\,\Xvec,\,\tau,\,\betavec^{(\tau)})\pi(\Yvec_{-1} \mid \Xvec,\,\tau,\,\betavec^{(\tau)})$. Similarly, the likelihood parameterized by $\betavec^{(\tau)'}$ can be factorized. Recursively starting with the node with no children e.g., node $Y_2$ in case of above likelihood, we get $pa(h)=pa'(h)\,\forall h$. Hence $\betavec^{(\tau)'}= \betavec^{(\tau)}\implies \mathcal{Q}\mathcal{G}^{(\tau)}\equiv\mathcal{Q}\mathcal{G}^{'(\tau)}$.
\section{Proof of Theorem~\ref{predective_consistency}}
\label{predictive_consistency_proof}
For a fixed $h$, we denote $\beta_{hj}^{(\tau)}$ as $\beta_j$, $\theta_{hj}^{(\tau)}$ as $\theta_{j}$, $\gamma_{hj}$ as $\gamma_j$, $\bgamma_h=\{\gamma_{hj}\}=\{\gamma_j\}$ as $\bgamma$ and $t_h$ as $t$.  From the marginal prior of $\beta_{hj}^{(\tau)}$ computed in Section~\ref{sec_predictive_consistency}, the prior on $(\beta_j\mid\gamma_j = 1)$ is,
\begin{equation*}
    \pi(\beta_j\mid\gamma_j = 1) = \mathcal{P}_\theta(\beta_j)\frac{\Pr(t<|\beta_j|\mid \beta_j)}{E_t\{\Pr(|\theta_j|> t\mid t)\}},
\end{equation*}
where $\mathcal{P}_\theta(\beta_j)$ is horseshoe density evaluated at $\beta_j$ and $t\sim\mathrm{Gamma}(a,b)$. As the horseshoe prior does not have a closed form density, we work with the proper density $(2\pi\sqrt{c})^{-1}\log(1+c/\beta_j^2)$ for $\mathcal{P}_\theta(\beta_j)$, referred to as the `horseshoe-like' prior by \citet{bhadra2021horseshoe}, and show the relevant result follows from the tight upper and lower bounds on the horseshoe density established by \citet{carvalho2010horseshoe}. We will establish the posterior consistency of fitted density as per the framework laid out by \citet{ghosal2000convergence}. The three required conditions from which the desired consistency result follows are:
\begin{itemize}
\setlength\itemsep{-0.1em}
    \item[(a)] The prior concentration rate of Kullback--Leibler (KL) $\epsilon_n^2$ neighborhood is at least $ \exp(-Cn\epsilon_n^2)$, for some $C>0$.
    \item[(b)] For a suitable sieve i.e., set of constraints on the parameters in the space of prior densities, the logarithm of the covering number (metric entropy) of the sieve is at most $ n\epsilon_n^2$.
    \item[(c)] The prior probability of the complement of the sieve is upper bounded by $ \exp(-c'n\epsilon_n^2)$, for some $c'>0$.
\end{itemize}
\vskip 0.25cm
\textbf{\underline{(a) Prior concentration rate of KL $\epsilon_n^2$ neighborhoods:}}
\vskip 0.25cm
Let $\pi_1 = \pi^*(\bY_h\mid \bY_{j\neq h};\,\betavec^*_h)\pi^*(\bY_{j\neq h})$ and $\pi_2 =\pi(\bY_h\mid \bY_{j\neq h};\,\bgamma,\,\betavec_{\bgamma_{h}})\pi^*(\bY_{j\neq h})$. For $K(\pi_1,\,\pi_2) = \int \pi_1\log(\pi_1/\pi_2)d\Ymat,\,V(\pi_1,\,\pi_2)= \int \pi_1\log^2(\pi_1/\pi_2)d\Ymat$, define $\mathcal{B}(\pi_1,\epsilon_n)= \{\pi_2:\,K(\pi_1,\,\pi_2)\leq \epsilon_n^2,\,V(\pi_1,\,\pi_2)\leq \epsilon_n^2\}$, the KL $\epsilon_n^2$ neighborhoods of the true density. To prove prior concentration rate, we need to prove, $\Pi_n(\mathcal{B}(\pi_1,\,\epsilon_n))\geq \exp(-Cn\epsilon_n^2)$, for some $C>0$; where $\Pi_n$ denotes the prior density on the coefficients and the subscript $n$ is to indicate that the parameters of the prior are functions of the sample size $n$. \citet[Proof of Theorem 3.1]{guha2020quantile} have proved that, for asymmetric Laplace likelihoods, the coefficients in the model $\betavec_{\bgamma_{h}}$, or simply $\betavec_{\bgamma}$, to be within $ \left(\beta_j^* \pm \frac{\eta\epsilon_n^2}{r_n}\right)$ for some $\eta>0$ (chosen appropriately by obeying Condition A6), when $\mathcal{B}(\pi_1,\,\epsilon_n)$ holds true. So, for the required prior concentration rate, we need to prove,
$$
\Pi_n\left\{\betavec_{\bgamma}=\{\beta_j\}:\,\beta_j\in\left(\beta_j^* \pm \frac{\eta\epsilon_n^2}{r_n}\right)\right\} = \underset{\beta_j \in \left(\beta_j^* \pm \frac{\eta\epsilon_n^2}{r_n}\right)}{\int \prod}\pi(\beta_j\mid\gamma_j = 1)d\beta_j > \exp(-Cn\epsilon_n^2),
$$ 
for some constant $C>0$. The product $\prod \pi(\beta_j\mid\gamma_j = 1)d\beta_j$,  is the product of the prior densities of the coefficients $\beta_j$ in the model $\bgamma$, and the notation remains same for the rest of the proof. Plugging in the chosen closed form density for $\mathcal{P}_\theta(\beta_j)$ and expanding the gamma cumulative distribution function $\Pr(t<|\beta_j|\mid\beta_j)$, we have:
\begin{equation}
\label{full_prior_expression}
    \pi(\beta_j\mid\gamma_j = 1) = \frac{1}{2\pi\sqrt{c}}\log\left(1+\frac{c}{\beta_j^2}\right)\exp(-b|\beta_j|)\sum_{k=a}^{\infty}\frac{(b|\beta_j|)^k}{k!}\frac{1}{E_t\{\Pr(|\theta_j|> t\mid t)\}}.
\end{equation}
For the denominator, we have:
\begin{align*}
    Pr(|\theta_j|> t\mid t) & = 2\int_{t}^\infty \frac{1}{2\pi\sqrt{c}}\log\left(1+\frac{c}{\theta_j^2}\right)d\theta_j\nonumber\\
    & \leq \frac{1}{\pi\sqrt{c}}\int_{t}^\infty \frac{c}{\theta_{j}^2}d\theta_j = \frac{\sqrt{c}}{\pi}\frac{1}{t}.\,\;(\text{using }\log(1+1/x^2)<1/x^2,\, x>0).
    \end{align*}
    This yields,
    \begin{align}
    \label{UB_of_horseshoe_required}
    E_t\{\Pr(|\theta_j|> t\mid t)\} & \leq \frac{\sqrt{c}}{\pi}\frac{b}{a-1}.
\end{align}
Using the bound on $E_t\{Pr(|\theta_j|> t\mid t)\}$ from \eqref{UB_of_horseshoe_required} and the prior density in Equation~\eqref{full_prior_expression}, we have the required integral
\begin{align}
    \underset{\beta_j \in \left(\beta_j^* \pm \frac{\eta\epsilon_n^2}{r_n}\right)}{\int \prod}\pi(\beta_j\mid\gamma_j = 1)d\beta_j & \geq \left(\frac{a-1}{2b}\right)^{r_n}\underset{\beta_j \in \left(\beta_j^* \pm \frac{\eta\epsilon_n^2}{r_n}\right)}{\int \prod}\frac{1}{c}\log\left(1+\frac{c}{\beta_j^2}\right)\exp(-b|\beta_j|)\sum_{k=a}^{\infty}\frac{(b|\beta_j|)^k}{k!}d\beta_j \label{step_1_integral}\\
    & \geq C_1^{r_n}\underset{\beta_j \in \left(\beta_j^* \pm \frac{\eta\epsilon_n^2}{r_n}\right)}{\int \prod}\frac{1}{c}\log\left(1+\frac{c}{\beta_j^2}\right)\exp\{-b|\beta_j|(1-K)\}|\beta_j|^a d\beta_j\nonumber\\
    &\quad\text{(using~\eqref{our_simplified_NLP}, for some }K\in(0,1)\text{ and absorbing constants }a,\,b\text{ into }C_1>0).\nonumber
\end{align}
It is easy to establish that the functions, $(1/c)\log(1+c/\beta_j^2)$ and $\exp\{-b|\beta_{j}|(1-K)\}$ are convex functions separately for $\beta_j>0$ and $\beta_j<0$. Hence, in the domain of integration, $\beta_j\in(\beta_j^*\pm \eta\epsilon_n^2/r_n)$, we can lower bound these functions, with their tangents at $\beta_j^*$, provided $\eta$ is chosen sufficiently small, such that $0\not\in(\beta_j^*\pm \eta\epsilon_n^2/r_n)$. Let $c_{j1},\,m_{j1}$ denote the intercept and slope of the tangent for the function $(1/c)\log(1+c/\beta_j^2)$ at $\beta_j^*$. Similarly, let  $c_{j2},\,m_{j2}$ denote the intercept and slope of the tangent for the function $\exp\{-b|\beta_{j}|(1-K)\}$ at $\beta_j^*$. We note that, $c_{j1},\,c_{j2}>0$ and $m_{j1}m_{j2}>0$ for any $\beta_j$. Therefore, we can lower bound the integral in the above equations as,
\begin{align*}
    \underset{\beta_j \in \left(\beta_j^* \pm \frac{\eta\epsilon_n^2}{r_n}\right)}{\int \prod}\pi(\beta_j\mid\gamma_j = 1)d\beta_j & \geq C_1^{r_n}\underset{\beta_j \in \left(\beta_j^* \pm \frac{\eta\epsilon_n^2}{r_n}\right)}{\int \prod}(c_{j1}+m_{j1}(\beta_j-\beta_j^*))(c_{j2}+m_{j2}(\beta_j-\beta_j^*))|\beta_j|^ad\beta_j\\
    & \geq C_1^{r_n}\underset{\beta_j \in \left(\beta_j^* \pm \frac{\eta\epsilon_n^2}{r_n}\right)}{\int \prod}(c_{j1}+m_{j1}(\beta_j-\beta_j^*))(c_{j2}+m_{j2}(\beta_j-\beta_j^*))|\beta_j-\beta_j^*|^a d\beta_j\\
    &\quad(\text{as }|\beta_j^*|>0\text{ and }\eta\epsilon_n^2/r_n\rightarrow 0)\\
    &=  C_1^{r_n}\underset{\beta_j \in \left(\beta_j^* \pm \frac{\eta\epsilon_n^2}{r_n}\right)}{\int \prod} (c_{j1}c_{j2} + m_{j1}m_{j2}(\beta_j-\beta_j^*)^2)|\beta_j-\beta_j^*|^a d\beta_j\\
    & =C_1^{r_n}\underset{j}{\prod}\left\{\frac{2c_{j1}c_{j2}}{a+1}\left(\frac{\eta\epsilon_n^2}{r_n}\right)^{a+1} + \frac{2m_{j1}m_{j2}}{a+3}\left(\frac{\eta\epsilon_n^2}{r_n}\right)^{a+3} \right\}\\
    & \geq C_1^{r_n}\underset{j}{\prod}\left(\frac{2c_{j1}c_{j2}}{a+1}+ \frac{2m_{j1}m_{j2}}{a+3} \right)\left(\frac{\eta\epsilon_n^2}{r_n}\right)^{a+3}\quad(\text{as }\eta\epsilon_n^2/r_n\rightarrow 0).
\end{align*}
From condition A4, we have $\underset{j}{\sum}|\beta_j^*|<\infty$ and $|\beta_j^*|>0$, which implies $\beta_j^*\,\forall\, j$ is bounded. Hence the constants $c_{j1}$ and $c_{j2}$ can be lower bounded by their respective functional evaluations, at $\text{max}(|\beta_j^*|)$. Similarly, $m_{j1}m_{j2}$ can be lower bounded by the product of slopes of the respective functions, at $\text{max}(|\beta_j^*|)$. Hence, using the lower bounds of $c_{j1},\,c_{j2}$ and $m_{j1}m_{j2}$, which are finite positive constants, and absorbing them into the constant $C_2>0$, we get:
\begin{align*}
    \underset{\beta_j \in \left(\beta_j^* \pm \frac{\eta\epsilon_n^2}{r_n}\right)}{\int \prod}\pi(\beta_j\mid\gamma_j = 1)d\beta_j & \geq C_2^{r_n}\left(\frac{\eta\epsilon_n^2}{r_n}\right)^{r_n(a+3)}\\
    & =\exp\left\{-C_3r_n-C_4 r_n\log\left(\frac{r_n}{\eta\epsilon_n^2}\right)\right\},\,C_4>0\text{ and }\text{for some constant }C_3.
\end{align*}
We know that $r_n<n\epsilon_n^2$ and so is $r_n\log r_n\leq r_n\log p_n\leq \bar{r}_n\log p_n \prec n\epsilon_n^2$ (condition A1), and $r_n\log(1/\epsilon_n^2)\leq \bar{r}_n\log(1/\epsilon_n^2)\prec n\epsilon_n^2$ (Condition A2). Hence, 
\begin{equation*}
     \underset{\beta_j \in \left(\beta_j^* \pm \frac{\eta\epsilon_n^2}{r_n}\right)}{\int \prod}\pi(\beta_j\mid\gamma_j = 1)d\beta_j \geq \exp(-Cn\epsilon_n^2),\,\text{ for some }C>0.
\end{equation*}
\vskip 0.25cm
\noindent\textbf{\underline{(b) Metric entropy of the chosen sieve:}}
\vskip 0.25cm
First, we state the definition of our chosen sieve as follows:
\begin{equation*}
    \underset{j}{\text{max }}|\beta_j|<D,\,D\sim n\epsilon_n^2\text{ and model size is upper bounded by }\bar{r}_n.
\end{equation*}
The proof for the bound on metric entropy of the sieve follows from~\citet[Checking condition (a)]{jiang2007bayesian}, by plugging in the size of coefficients ($C_n$ in \citet{jiang2007bayesian}) as $D$; radius of the $\ell^\infty$ balls covering the coefficients ($\delta$) as $\eta\epsilon_n^2/\bar{r}_n$; number of covariates $(K_n)$ as $p_n$ and from the Conditions A1, A2.
\vskip 0.25cm
\noindent\textbf{\underline{(c) Prior probability of the complement of the sieve:}}
\vskip 0.25cm
For this step, we need to get bounds for two probabilities which correspond to the prior probabilities of the complement of the conditions in the sieve. First is to prove $\int_{\underset{j}{\text{max }}|\beta_j|>D} \pi(\beta_j\mid\gamma_j = 1)<\exp(-c''n\epsilon_n^2)$ for some $c''>0$ and the next is to prove that the probability of model size is larger than $\bar{r}_n$ and smaller than $\exp(-c'''n\epsilon_n^2)$ for some $c'''>0$. We will prove the first bound and the second follows from~\citet[Proof of Theorem 1]{jiang2007bayesian}. To compute the first bound, we need a lower bound of the expectation $E_t\{Pr(|\theta_j|> t\mid t)\}$. So, starting with $Pr(|\theta_j|> t\mid t)$, it can be obtained as follows:
\begin{align*}
    Pr(|\theta_j|> t\mid t) & = 2\int_{t}^\infty \frac{1}{2\pi\sqrt{c}}\log\left(1+\frac{c}{\theta_j^2}\right)d\theta_j\\
    & \geq \frac{\sqrt{c}}{\pi}\int_{t}^\infty \frac{1}{\theta_{j}^2 + c}d\theta_j= \frac{\sqrt{c}}{\pi}\frac{1}{\sqrt{c}}\left(\frac{\pi}{2} - \mathrm{arctan}\frac{t}{\sqrt{c}}\right)\,\\
    & (\text{using }\log(1+x)>x/(1+x),\,x>0).
\end{align*}
Using the fact that for $x>0$, $\frac{\pi}{2} - \mathrm{arctan} x > \frac{1}{x} - \frac{1}{3x^3}$, followed by the required expectation over $t\sim\mathrm{Gamma}(a,\,b)$, we get:
\begin{equation*}
     E_t\{Pr(|\theta_j|> t\mid t)\} > \frac{\sqrt{c}}{\pi}\left(\frac{b}{a-1} - \frac{cb^3}{3(a-1)(a-2)(a-3)}\right)=\frac{b\sqrt{c}}{\pi(a-1)}\left(1-\frac{cb^2}{3(a-2)(a-3)}\right).
\end{equation*}
From condition A7 we have $a\in\mathrm{Z}^+,\,a>3$ and $c^2b<3/2$. So, we have the bound from the above Equation as, 
\begin{equation}
    \label{LB_of_horseshoe_required}
    \frac{1}{ E_t\{Pr(|\theta_j|> t\mid t)\}}<\xi\pi \frac{(a-1)}{b\sqrt{c}},\, \text{ for some }\xi>1.
\end{equation}
Now, rewriting the prior density from~\eqref{full_prior_expression}  by taking in the bounded expectation into proportionality, we have:
\begin{equation*}
\begin{split}
     \pi(\beta_j\mid\gamma_j = 1) & \propto \frac{1}{\sqrt{c}}\log\left(1+\frac{c}{\beta_j^2}\right)\exp(-b|\beta_j|)\sum_{k=a}^{\infty}\frac{(b|\beta_j|)^k}{k!}\\
     & \propto \frac{1}{\sqrt{c}}\log\left(1+\frac{c}{\beta_j^2}\right)\exp(-b|\beta_j|)\left(\exp(b|\beta_j|)-\sum_{k=0}^{a-1}\frac{(b|\beta_j|)^k}{k!}\right).
\end{split}
\end{equation*}
From the remainder theorem of Taylor's series, we can write, 
\begin{equation*}
    \exp(b|\beta_j|) = \sum_{k=0}^{a-1}\frac{(b|\beta_j|)^k}{k!} + \exp(Kb|\beta_j|)\frac{(b|\beta_j|)^a}{a!},\text{ where } K\in(0, 1).
\end{equation*}
Using the above result, we get:
\begin{equation}
\label{our_simplified_NLP}
    \pi(\beta_j\mid\gamma_j = 1) \propto \frac{1}{\sqrt{c}}\log\left(1+\frac{c}{\beta_j^2}\right)\exp\{-b|\beta_j|(1-K)\}|\beta_j|^a.
\end{equation}
With the simplified prior density in Equation~\eqref{our_simplified_NLP} and the lower bound for the expectation obtained in Equation~\eqref{LB_of_horseshoe_required}, we have: 
\begin{align}
    \int_{|\beta_j|>D} \pi(\beta_j\mid\gamma_j = 1)d\beta_j &< c_1 \xi\pi \frac{(a-1)}{b\sqrt{c}}\int_{|\beta_j|>D}\frac{1}{\sqrt{c}} \log\left(1+\frac{c}{\beta_j^2}\right)\exp\{-b|\beta_j|(1-K)\}|\beta_j|^a d\beta_j\label{step_3_integral}\\
    & < c_1 \xi\pi \frac{(a-1)}{b\sqrt{c}}\sqrt{c}\int_{|\beta_j|>D} \exp\{-b|\beta_j|(1-K)\}|\beta_j|^{a-2} d\beta_j\nonumber\\
    &\quad (\text{using }\log(1+x)<x,\, x>0).\nonumber
\end{align}
Using that fact that $|\beta_j|^{a-2}<\exp(c_2|\beta_j|)$, for sufficiently large $D$ and for an appropriately chosen constant $c_2$ such that, $0<c_2<b(1-K)$, we get 
\begin{align}
\label{complement_prob_sieve}
    \int_{|\beta_j|>D} \pi(\beta_j\mid\gamma_j = 1)d\beta_j &< \xi\pi \frac{(a-1)}{b} \int_D^\infty\exp\{-|\beta_j|(b(1-K) - c_2)\}d|\beta_j|\nonumber\\
    & = \xi\pi \frac{(a-1)}{b} \Bigg[\frac{\exp\{-|\beta_j|(b(1-K) - c_2)\}}{-(b(1-K) - c_2)}\Bigg]_{D}^{\infty}\nonumber\\
    & =  \xi\pi \frac{(a-1)}{b(b(1-K)-c_2)} \exp\{-D(b(1-K) - c_2)\}\nonumber\\
    & <\exp(-c_3 D) \sim \exp(-c_4 n\epsilon_n^2)\,(\because\, D \sim n\epsilon_n^2).
\end{align}
Using the elementary rules of order statistics, we can write:
\begin{align*}
    \int_{\underset{j}{\text{max }}|\beta_j|>D} \pi(\beta_j\mid\gamma_j = 1)d\beta_j& = 1- \left(1-\int_{|\beta_j|>D} \pi(\beta_j\mid\gamma_j = 1)d\beta_j\right)^{r_n}\\
    &\leq 2r_n\int_{|\beta_j|>D} \pi(\beta_j\mid\gamma_j = 1)d\beta_j\\
    &\leq  \exp(-c'' n\epsilon_n^2), \text{ for some }c''>0\,(\text{using A1 and A3}).
\end{align*}
This completes the step 3 and hence the proof of consistency of the fitted density under the chosen density $(2\pi\sqrt{c})^{-1}\log(1+c/\beta_j^2)$ for $\mathcal{P}_\theta(\beta_j)$. \citet{carvalho2010horseshoe} established the following tight bounds for the horseshoe prior:
\begin{equation}
\label{hs_bounds}
    \frac{1}{\sqrt{c}(2\pi)^{3/2}}\log\left(1+\frac{4c}{\theta_j^2}\right) <\pi_{HS}(\theta_j\mid c)< \frac{2}{\sqrt{c}(2\pi)^{3/2}}\log\left(1+\frac{2c}{\theta_j^2}\right). 
\end{equation}
Using the above bounds, we get the required proof with horseshoe prior on $\theta_j$ as follows:
\begin{enumerate}
    \item[(a)] For prior concentration rate of KL $\epsilon_n^2$ neighborhoods:
    \begin{align*}
        \pi(\beta_j\mid\gamma_j = 1) & = \mathcal{P}_\theta(\beta_j)\frac{\Pr(t<|\beta_j|\mid \beta_j)}{E_t\{\Pr(|\theta_j|> t\mid t)\}}\\
        & > \frac{1}{\sqrt{c}(2\pi)^{3/2}}\log\left(1+\frac{4c}{\theta_j^2}\right)\frac{\Pr(t<|\beta_j|\mid \beta_j)}{E_t\{\Pr(|\theta_j|> t\mid t)\}}.\\
        E_t\{\Pr(|\theta_j|> t\mid t)\} & = E_t\left(2\int_t^\infty\pi_{HS}(\theta_j\mid c)d\theta_j\right)\\
        & < E_t\left\{2\int_t^\infty \frac{2}{\sqrt{c}(2\pi)^{3/2}}\log\left(1+\frac{2c}{\theta_j^2}\right)d\theta_j\right\}\\
        & < E_t\left(\frac{2\sqrt{2c}}{\pi\sqrt{\pi}}\int_t^\infty \frac{1}{\theta_j^2}d\theta_j\right) = \frac{2\sqrt{2c}}{\pi\sqrt{\pi}}\frac{b}{a-1}.\\
        \implies  \underset{\beta_j \in \left(\beta_j^* \pm \frac{\eta\epsilon_n^2}{r_n}\right)}{\int \prod}\pi(\beta_j\mid\gamma_j = 1)d\beta_j & \geq \mathrm{C}_1^{r_n}\underset{\beta_j \in \left(\beta_j^* \pm \frac{\eta\epsilon_n^2}{r_n}\right)}{\int \prod}\frac{1}{4c}\log\left(1+\frac{4c}{\beta_j^2}\right)\exp(-b|\beta_j|)\sum_{k=a}^{\infty}\frac{(b|\beta_j|)^k}{k!}d\beta_j\\
        & \quad (\text{for some constant }\mathrm{C}_1>0).
    \end{align*}
    The integral above is analogous to the integral in~\eqref{step_1_integral}. So, the next steps in establishing the prior concentration rate of KL $\epsilon_n^2$ neighborhoods follow.
    \item[(b)] As metric entropy is the property of the chosen sieve, with the current assumptions on $D$ and $\bar{r}_n$, the metric entropy under the horseshoe prior also follows. 
    \item[(c)] For the prior probability of the complement of the sieve: 
    \begin{align*}
        \pi(\beta_j\mid\gamma_j = 1) & = \mathcal{P}_\theta(\beta_j)\frac{\Pr(t<|\beta_j|\mid \beta_j)}{E_t\{\Pr(|\theta_j|> t\mid t)\}}\\
        & < \frac{2}{\sqrt{c}(2\pi)^{3/2}}\log\left(1+\frac{2c}{\theta_j^2}\right)\frac{\Pr(t<|\beta_j|\mid \beta_j)}{E_t\{\Pr(|\theta_j|> t\mid t)\}}.\\
        E_t\{\Pr(|\theta_j|> t\mid t)\} & = E_t\left(2\int_t^\infty\pi_{HS}(\theta_j\mid c)d\theta_j\right)\\
        & > E_t\left\{2\int_t^\infty  \frac{1}{\sqrt{c}(2\pi)^{3/2}}\log\left(1+\frac{4c}{\theta_j^2}\right)d\theta_j\right\} \\
        & > E_t\left(\frac{1}{\pi\sqrt{2c\pi}}\int_t^\infty \frac{4c}{\theta_j^2 + 4c}d\theta_j\right) = E_t\left\{\frac{2}{\pi\sqrt{2\pi}}\left(\frac{\pi}{2} - \text{arctan}\frac{t}{2\sqrt{c}}\right)\right\}\\
        & > E_t\left\{\frac{2}{\pi\sqrt{2\pi}}\left(\frac{2\sqrt{c}}{t} - \frac{8c\sqrt{c}}{3t^3}\right)\right\} = \frac{2b\sqrt{2c}}{\pi\sqrt{\pi}(a-1)}\left(1-\frac{4cb^2}{3(a-2)(a-3)}\right).\\
 \implies \frac{1}{E_t\{\Pr(|\theta_j|> t\mid t)\}}  & < \xi'\frac{a-1}{b\sqrt{c}},\,\text{for some }\xi'>1\,(\text{from condition }A7).\\
        \implies   \int_{|\beta_j|>D} \pi(\beta_j\mid\gamma_j = 1)d\beta_j &< c_1' \xi' \frac{(a-1)}{b\sqrt{c}}\int_{|\beta_j|>D}\frac{1}{\sqrt{c}} \log\left(1+\frac{2c}{\beta_j^2}\right)\exp\{-b|\beta_j|(1-K)\}|\beta_j|^a d\beta_j\\
        &\quad(\text{using}~\eqref{our_simplified_NLP}).
    \end{align*}
    The integral above is analogous to the integral in~\eqref{step_3_integral}. So, the next steps in establishing the prior probability of the complement of the sieve follow.
\end{enumerate}

\section{Performance Metrics and Additional Numerical Results}
\label{extra_simulation_results}
An enumerated list of performance metrics which we use to compare results in simulations (Section~\ref{simulation_results}, Fig.~\ref{all_3_models_comparison}) is as follows:
\begin{enumerate}
    \item[(a)] $\mathrm{TPR}_{\mathrm{Y}}^\tau,\,\mathrm{TPR}_{\mathrm{X}}^\tau$: Average true positive rate of variable selection of response variables and covariates respectively, across all nodes $\Yvec_h,\,h\in\{1,\ldots,p\}$, at the given quantile level $\tau\in\{0.1,\ldots,0.9\}$. Analogously defined are $\mathrm{FPR}_{\mathrm{Y}}^\tau,\,\mathrm{FPR}_{\mathrm{X}}^\tau$ and $\mathrm{AUC}_{\mathrm{Y}}^\tau,\,\mathrm{AUC}_{\mathrm{X}}^\tau$.
    
    \item[(b)] $\Delta_F\betavec^\tau,\,\Delta_F\thetavec^\tau$: Scaled estimation norms~\eqref{estimation_norms}. The matrices $\betavec_{h}^{\tau,\,\text{est}},\, \thetavec_{h}^{\tau,\,\text{est}}$ are the collection of estimates of all QCIFs $\beta_{hj}^{(\tau)}(\cdot),\,\theta_{hj}^{(\tau)}(\cdot)$ , whose functional forms are evaluated at the posterior means of their respective parameters. In the case of unknown ordering, these estimation norms are scaled by a factor of $1/\sqrt{2}$ to adjust for twice the number of QCIFs estimated, when compared with the cases of known and misspecified ordering. 
    \begin{equation}
    \label{estimation_norms}
    \Delta_F\betavec^\tau = \sqrt{\frac{1}{n}\sum_{h=1}^{p}\Big\|\betavec_{h}^{\tau, \text{ est}} - \betavec_{h}^{\tau, \text{ true}}\Big\|^2_F} \text{ and }  \Delta_F\thetavec^\tau = \sqrt{\frac{1}{n}\sum_{h=1}^{p}\Big\|\thetavec_{h}^{\tau, \text{ est}} - \thetavec_{h}^{\tau, \text{ true}}\Big\|^2_F}.
\end{equation}

    \item[(c)] $\text{MSE}^\tau$: Adjusted mean squared error in quantile estimation~\eqref{MSE_known_ordering}. $\Qmat_{\Yvec_{h},\tau}^{\text{est}}$ is a $n\times 1$ dimensional vector, which contains the $\tau^\mathrm{th}$ quantile estimate of $\Yvec_h$, obtained by plugging in $\betavec_{h}^{\tau,\,\text{est}}$ in \eqref{patient_specific_model}, for all observations. The scale factor adjustment is done to normalize the effect of different number of parent nodes of $\Yvec_h,\, h\in\{1,\ldots,p\}$ (including the intercept term).
    \begin{equation}
    \label{MSE_known_ordering}
    \text{MSE}^\tau = \frac{1}{n}\sum_{h=1}^{p}\frac{\Big\|\Qmat_{\Yvec_{h},\tau}^{\text{true}} - \Qmat_{\Yvec_{h},\tau}^{\text{est}}\Big\|^2_2 }{\text{max}\Big\{1, \lfloor\frac{p-h}{5}\rfloor\Big\}+1}.
\end{equation}
\end{enumerate}

Additional numerical results for the experiments in Section~\ref{simulation_results} are presented in the Figures~\ref{supp_1_all_3_models_comparison} and~\ref{supp_2_all_3_models_comparison}. From both the figures we can see that qDAGx performs the best in estimation norms, which was also observed previously. For $p=25,\,50$ at $n=100$, we see that the performance of qDAGx in variable selection of response variables is close to that of the oracle and slightly worse, when it comes to covariates(Fig.~\ref{supp_1_all_3_models_comparison}). Whereas, when $p=100,\,q=2$ and $n=100$, qDAGx out-performs the oracle in variable selection of response variables. (Fig.~\ref{supp_2_all_3_models_comparison}\phantom{ }\ref{supp_2_all_3_models_comparison_a}). For covariates, variable selection results are relatively poorer at $p=50\text{ and }100$, when compared to the case of $p=25$ (Fig.~\ref{supp_1_all_3_models_comparison}~\ref{supp_1_all_3_models_comparison_b} and Fig.~\ref{supp_2_all_3_models_comparison}). And as observed in Section~\ref{simulation_results}, qDAGx with misspecified ordering performs the worst in most of the performance indicators. 

\begin{figure}[!bp]
\centering
    \begin{minipage}{\textwidth}
        \includegraphics[width=0.97\textwidth]{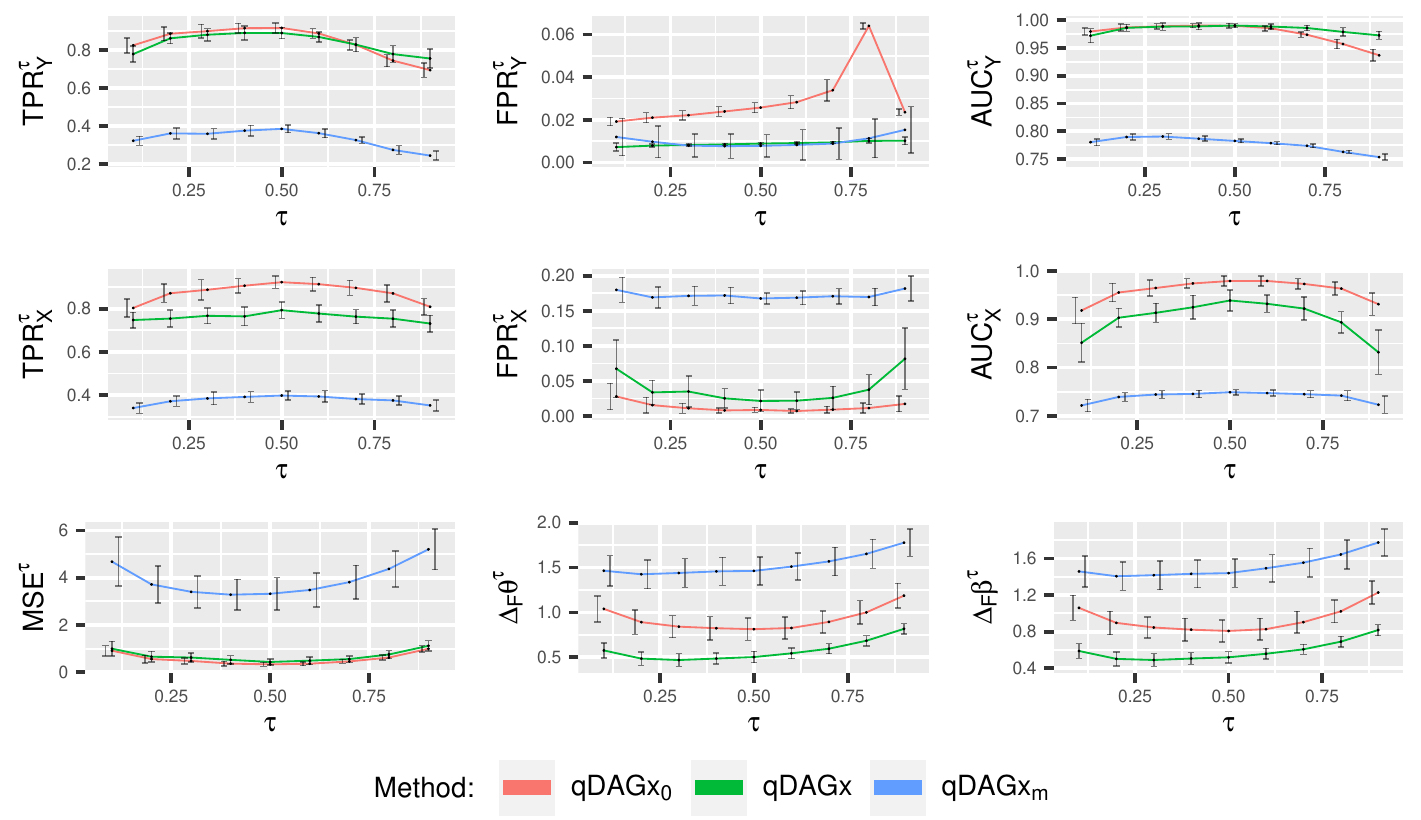}
        \Scaption{\label{supp_1_all_3_models_comparison_a} $p=25, q=2, n=100$. Kendall's' $\mathrm{T}$ for the misspecified sequence is 0.25}
    \end{minipage}\hfill
    \medskip
    \begin{minipage}{\textwidth}
        \includegraphics[width=0.97\textwidth]{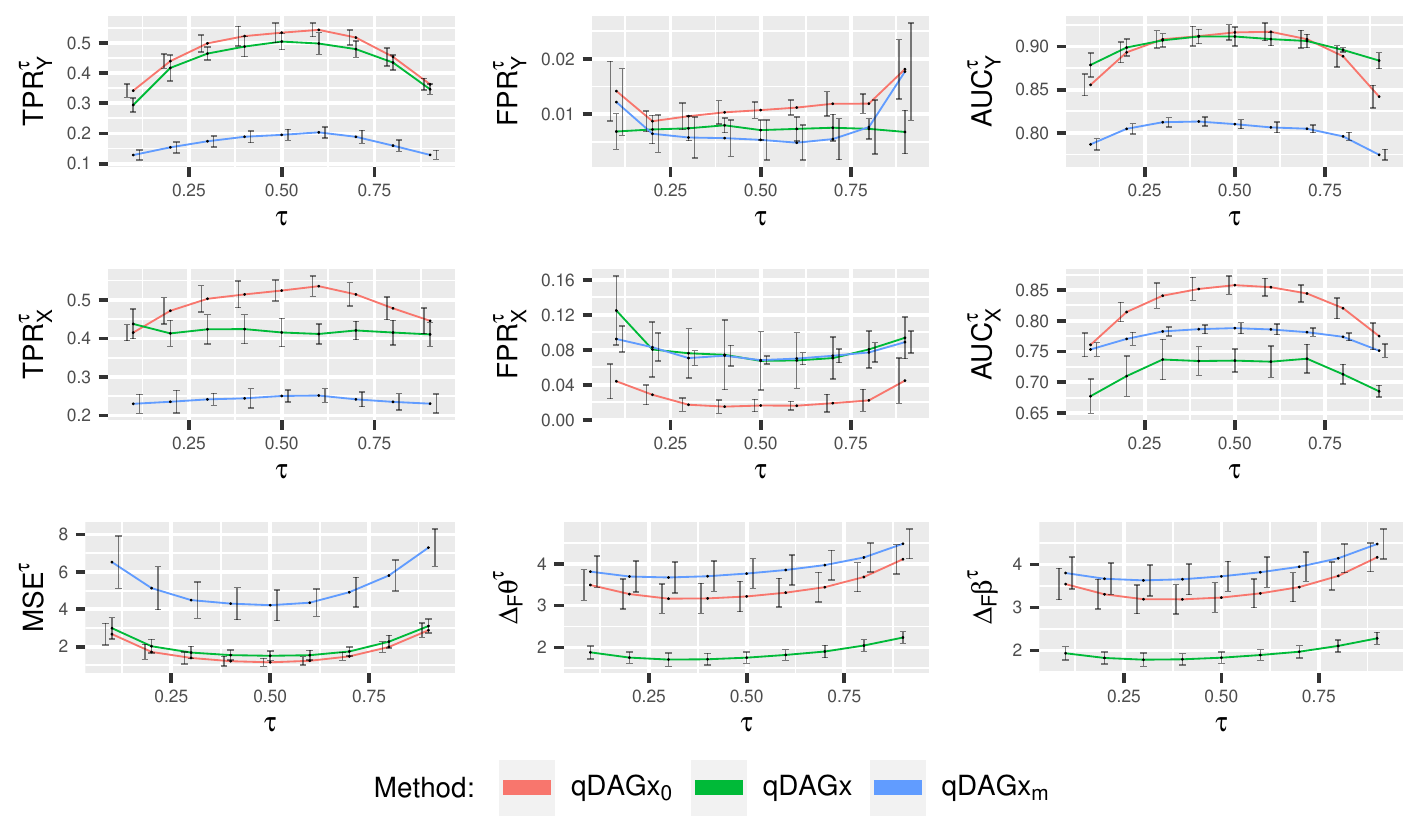}
        \Scaption{\label{supp_1_all_3_models_comparison_b} $p=50, q=5, n=100$. Kendall's' $\mathrm{T}$ for the misspecified sequence is 0.5}
    \end{minipage}\hfill
    \caption{\label{supp_1_all_3_models_comparison}Simulation results for two representative settings comparing the nine performance metrics between the quantile-DAG estimates of $\mathrm{qDAGx}_{0}$, qDAGx and $\mathrm{qDAGx}_\mathrm{m}$.}
\end{figure}

\begin{figure}[!bp]
\centering
    \begin{minipage}{\textwidth}
        \includegraphics[width=0.97\textwidth]{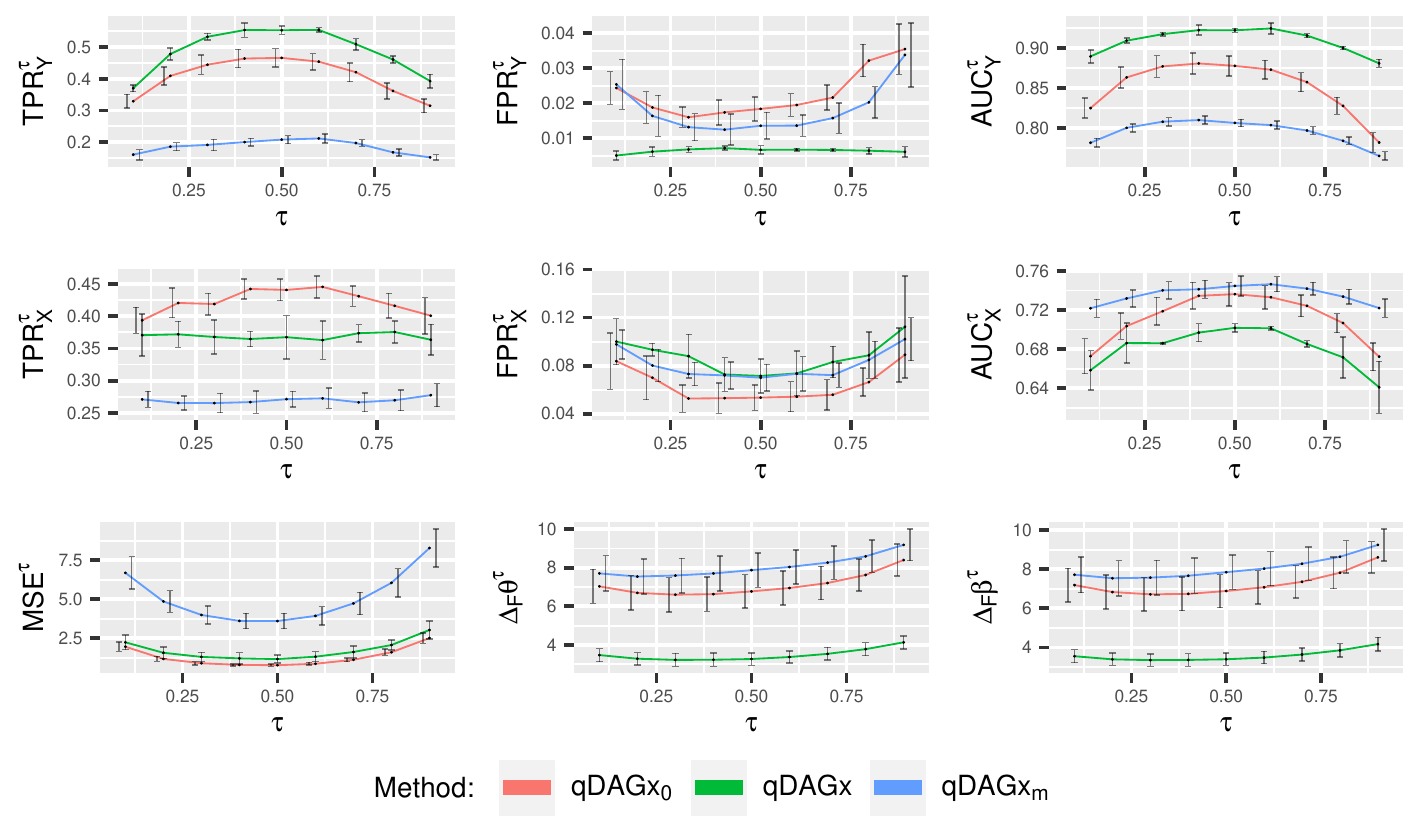}
        \Scaption{\label{supp_2_all_3_models_comparison_a} $p=100, q=2, n=100$. Kendall's' $\mathrm{T}$ for the misspecified sequence is 0.25}
    \end{minipage}\hfill
    \medskip
    \begin{minipage}{\textwidth}
        \includegraphics[width=0.97\textwidth]{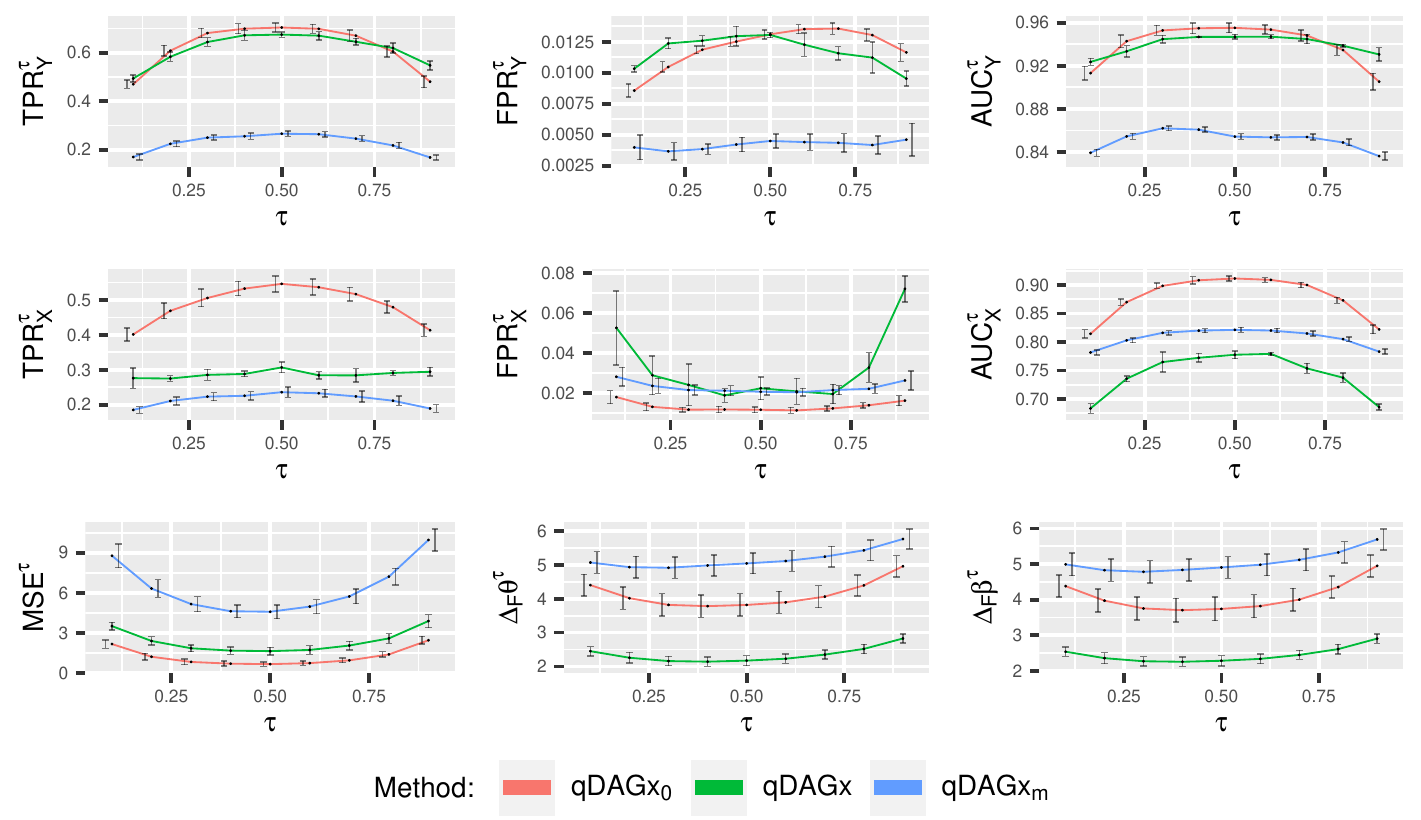}
        \Scaption{\label{supp_2_all_3_models_comparison_b} $p=100, q=5, n=250$. Kendall's' $\mathrm{T}$ for the misspecified sequence is 0.5}
    \end{minipage}\hfill
    \caption{\label{supp_2_all_3_models_comparison}Simulation results for two representative settings comparing the nine performance metrics between the quantile-DAG estimates of $\mathrm{qDAGx}_{0}$, qDAGx and $\mathrm{qDAGx}_\mathrm{m}$.}
\end{figure}

\section{Supplemental Comparisons with Lasso Penalized Quantile Regression}
\label{supplement_comparision_Lasso_QR}

The lasso penalized quantile regression or `lasso-QR'~\citep{wu2009variable} models the conditional quantiles as, 
\begin{equation}
\label{lasso_QR}
     Q_{Y_{ih}}(\tau\mid Y_{ij}) =  \beta_{h0}^{(\tau)} + \sum_{j=h+1}^{p} Y_{ij}\beta_{hj}^{(\tau)},
\end{equation}
where $\beta_{h0}^{(\tau)},\, \beta_{hj}^{(\tau)}$ are scalars without any functional form. In its simplest form, \eqref{lasso_QR} is not dependent on covariates. In order to accommodate them, we expand the model in \eqref{lasso_QR} as follows:
\begin{equation}
\label{lasso_QR_with_interaction}
     Q_{Y_{ih}}(\tau\mid T_{ij},\, \Xvec_{i\cdot}) =  \beta_{h0}^{(\tau)} + \sum_{j=h+1}^{p} Y_{ij}\beta_{hj}^{(\tau)} + \sum_{j=h+1}^{p}\sum_{k=1}^{q} Y_{ij}X_{ik}\beta_{hjk}^{(\tau)}.
\end{equation}
We incorporate interaction terms between response variables  and the covariates, in  \eqref{lasso_QR_with_interaction}. With lasso penalty on the coefficients, the likelihood for \eqref{lasso_QR_with_interaction} is
\begin{equation*}
\begin{split}
    \mathcal{L} & \propto  \sum_{i=1}^n-\psi_\tau\left( Y_{ih} -   \beta_{h0}^{(\tau)} - \sum_{j=h+1}^{p} Y_{ij}\beta_{hj}^{(\tau)} - \sum_{j=h+1}^{p}\sum_{k=1}^{q} Y_{ij}X_{ik}\beta_{hjk}^{(\tau)} \right) \\
   & \quad\quad -\lambda\left(|\beta_{h0}^{(\tau)}| + \sum_{j=h+1}^{p} |\beta_{hj}^{(\tau)}| +  \sum_{j=h+1}^{p}\sum_{k=1}^{q}|\beta_{hjk}^{(\tau)}| \right),
\end{split}
\end{equation*}
where $\lambda$ is the tuning parameter. It is important to note that $\beta_{hj}^{(\tau)}\text{ and }\beta_{hjk}^{(\tau)}$ are same for all observations. Hence what we get is a population level quantile-DAG that is not individualized. Regarding variable selection, we say that there is an edge between $\Yvec_h\text{ and }\Yvec_j$ if the estimated value of $\beta_{hj}^{(\tau)}\neq 0 \text{ or for at least one }k\text{, estimate of }\beta_{hjk}^{(\tau)}\neq 0$. Similarly we say that the $k^\text{th}$ covariate effects the edge between  $\Yvec_h\leftarrow \Yvec_j$ if the estimate of $\beta_{hjk}^{(\tau)}\neq 0$. With these two rules for variable selection, computing true and false positive rates is straightforward. Area under the ROC curves can be computed by tuning $\lambda$, getting true, false positive rates at different values of $\lambda$ and hence the AUC. Mean squared error of quantile estimation is computed as in \eqref{MSE_known_ordering} where the estimated quantiles come from plugging in the penalized estimates of $\beta_{h0}^{(\tau)},\, \beta_{hj}^{(\tau)}\text{ and }\beta_{hjk}^{(\tau)}$ in \eqref{lasso_QR_with_interaction}. It is important to note that $\Delta_F\betavec^\tau$ and $\Delta_F\thetavec^\tau$ cannot be computed for lasso-QR because estimating $\beta_{hj}^{(\tau)}(\Xvec_{i\cdot})$ is not in the scope of the model. 
\begin{figure}[!t]
    \centering
    \includegraphics[width=\textwidth]{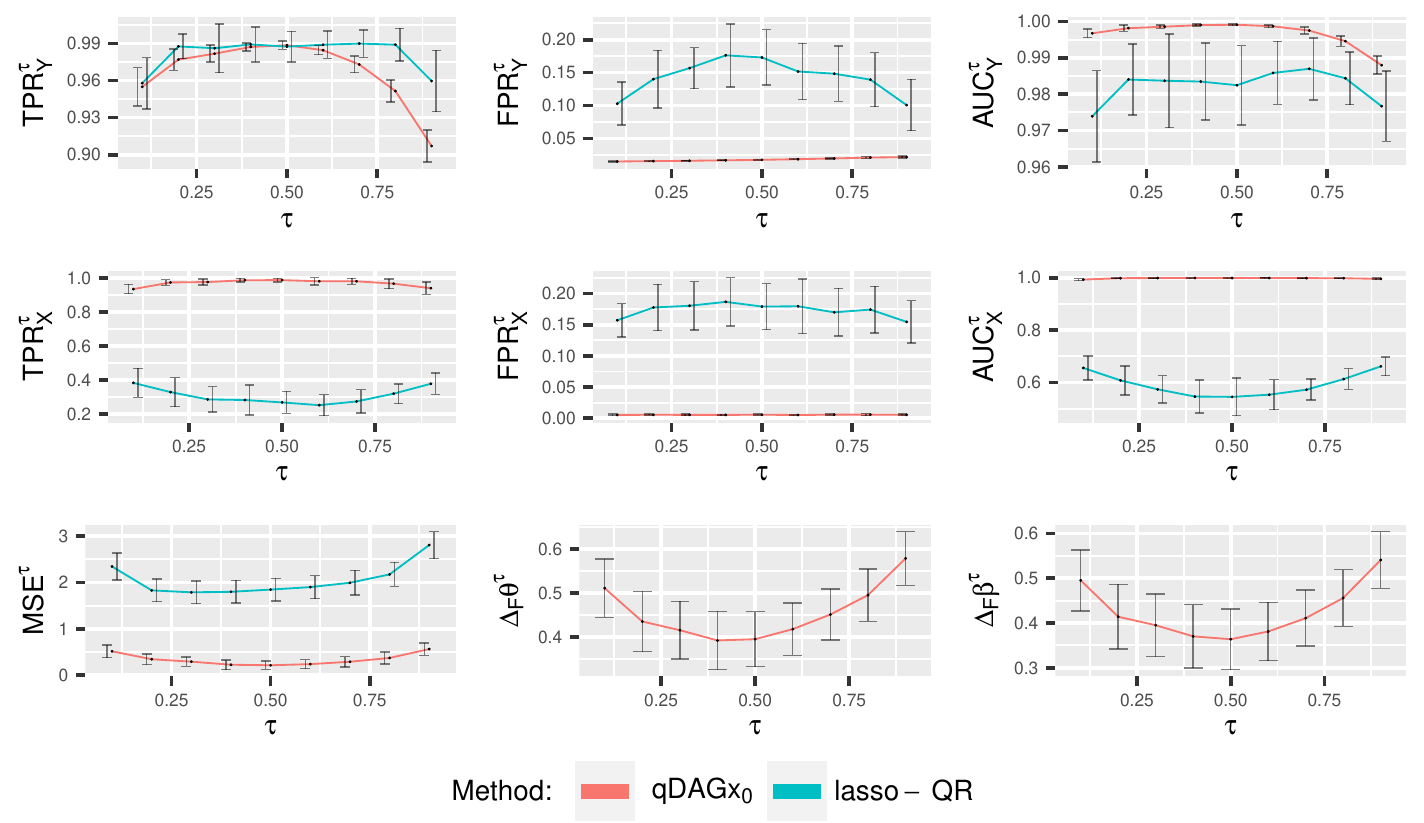}
        \caption{\label{lasso_qr_comparison}Comparison of performance measures of quantile DAG estimates by $\mathrm{qDAGx}_0$ and lasso-QR, when $p=25,\,q=5,\text{ and }n=250$.}
\end{figure}

With a representative simulation, we compare the results of $\mathrm{qDAGx}_0$ and lasso-QR in Fig. \ref{lasso_qr_comparison} when $p=25,\,q=5\text{ and }n=250$. A two level variable selection model (of response variables and covariates), with DAG discovery, using conditions analogous to ~\eqref{union_DAG_Condition}, under the lasso prior, is unexplored in the literature. Hence, we do not compare the results from the ad hoc model of lasso-QR in \eqref{lasso_QR_with_interaction} with that of qDAGx. In Fig.~\ref{lasso_qr_comparison}, one can see that though lasso-QR has slightly better true positive rates in variable selection of response variables, it performs poorly in terms of false positive rates and area under ROC curve. The performance in variable selection of covariates, and the mean squared error in quantile estimation, is extremely poor when compared with $\mathrm{qDAGx}_0$.
\section{Additional Results on the Lung Cancer Data}
\label{extra_real_data_analysis}
We present the names of the 67 proteins considered in Table~\ref{protein_number_mapping}, mapping between protein pathways and colors in Table~\ref{pathway_color_mapping} and representative visualizations of aggregated quantile-DAGs for LUSC in Fig.~\ref{LUSC_tau_1_5_9_results}.

    \begin{table}[!tbh]
    \caption{\label{protein_number_mapping}Map between node numbers and protein names. Node numbers are used instead of the protein names in Fig.~\ref{rand_patient_LUAD_tau_159}, \ref{LUAD_tau_1_5_9_results} and  \ref{LUSC_tau_1_5_9_results}.\\}
        \centering
        \resizebox{\textwidth}{!}{
        \begin{tabular}{|cc|cc|cc|cc|cc|cc|cc|}
         \hline
    			1  & BAK1    & 11 & MYH11 & 21 & PCNA    & 31 & TP53 & 41 & ATK1S1 & 51 & MAPK14 & 61 & MTOR   \\ 
    			2  & BAX     & 12 & \begin{tabular}[c]{@{}c@{}}RAB11A, \\ RAB11B\end{tabular} & 22 & FOXM1   & 32 & RAD50 & 42 & TSC2 & 52 & RPS6KA1  & 62 & RPS6   \\ 
    			3  & BID     & 13 & CTNNB1 & 23 & CDH1    & 33 & RAD51 & 43 & INPP4B & 53 & YBX1 & 63 & RB1    \\ 
    			4  & BCL2L11 & 14 & GADPH & 24 & CLDN7   & 34 & XRCC1 & 44 & PTEN & 54 & EGFR     & 64 & ESR1   \\ 
    			5  & CASP7   & 15 & RBM15 & 25 & TP53BP1 & 35 & FN1 & 45 & ARAF & 55 & ERBB2    & 65 & PGR    \\ 
    			6  & BAD     & 16 & CDK1 & 26 & ATM     & 36 & CDH2 & 46 & JUN & 56 & ERBB3    & 66 & AR     \\ 
    			7  & BCL2    & 17 & CCNB1 & 27 & CHEK1   & 37 & COL6A1 & 47 & RAF1 & 57 & SHC1     & 67 & GATA3  \\ 
    			8  & BCL2L1  & 18 & CCNE1 & 28 & CHEK2   & 38 & SERPINE1 & 48 & MAPK8 & 58 & SRC &    & \\ 
    			9  & BIRC2   & 19 & CCNE2 & 29 & XRCC5   & 39 & \begin{tabular}[c]{@{}c@{}}ATK1, ATK2, \\ ATK3\end{tabular} & 49 & \begin{tabular}[c]{@{}c@{}}MAPK1, \\ MAPK3\end{tabular} & 59 & EIF4EBP1 &    &        \\ 
    			10 & CAV1    & 20 & CDKN1B & 30 & MRE11A  & 40 & \begin{tabular}[c]{@{}c@{}}GKS3A, \\ GKS3B\end{tabular}     & 50 & MAP2K1 & 60 & RPS6KB1  &    &    \\
       \hline
        \end{tabular}
        }
    \end{table}

\begin{table}
\caption{\label{pathway_color_mapping}Map between pathways and colors. Colors are used instead of pathway names in  Fig.~\ref{rand_patient_LUAD_tau_159}, \ref{LUAD_tau_1_5_9_results} and  \ref{LUSC_tau_1_5_9_results}.\\}
    \centering
    \resizebox{\textwidth}{!}{
    \begin{tabular}{|cc | cc | cc | cc | cc | cc |}
    \hline
    Apoptosis & \cellcolor{pathway1} & Breast reactive  & \cellcolor{pathway2} & Cell cycle    & \cellcolor{pathway3} & Core reactive   & \cellcolor{pathway4} & \begin{tabular}[c]{@{}c@{}}DNA\\ damage response\end{tabular}    & \cellcolor{pathway5} & EMT & \cellcolor{pathway6} \\\hline
     PI3K/AKT & \cellcolor{pathway7} & RAS/MAPK  & \cellcolor{pathway8} & RTK   & \cellcolor{pathway9} &   TSC/mTOR  & \cellcolor{pathway10} &  Hormone receptor   & \cellcolor{pathway11} &  \begin{tabular}[c]{@{}c@{}}Hormone signaling\\ (Breast)\end{tabular}  & \cellcolor{pathway12} \\
     \hline
    \end{tabular}
    }
\end{table}

\begin{figure}[bp!]
    \centering
    \begin{minipage}{0.5\textwidth}
        \centering
        \includegraphics[scale=1]{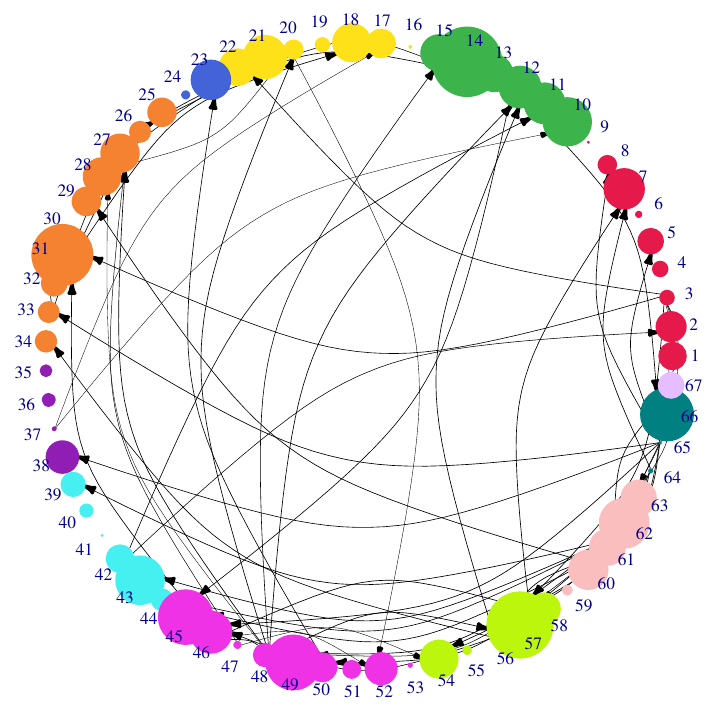}
        \Scaption{\label{LUSC_tau_1}}
    \end{minipage}\hfill
    \begin{minipage}{0.5\textwidth}
        \centering
         \includegraphics[scale =1]{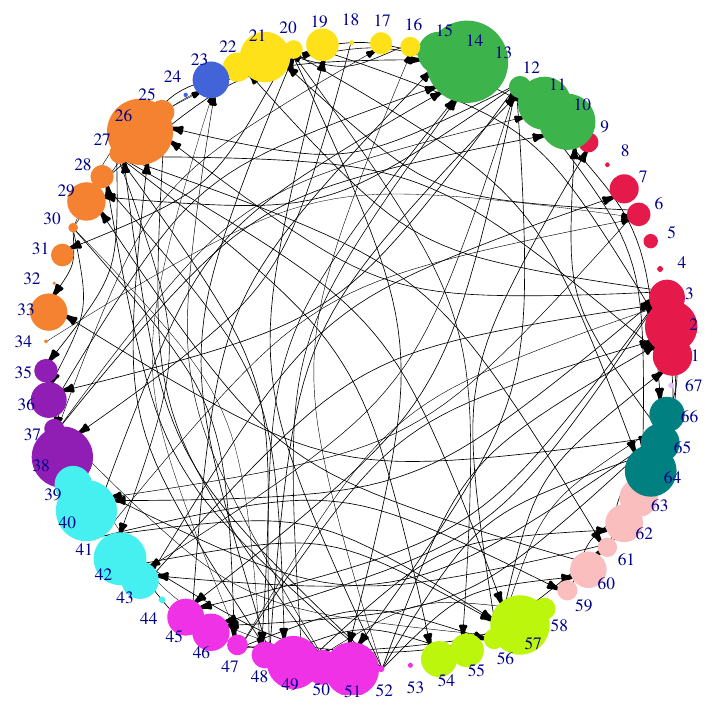}
        \Scaption{\label{LUSC_tau_5}}
    \end{minipage}
        \begin{minipage}{0.5\textwidth}
        \centering
        \includegraphics[scale=1]{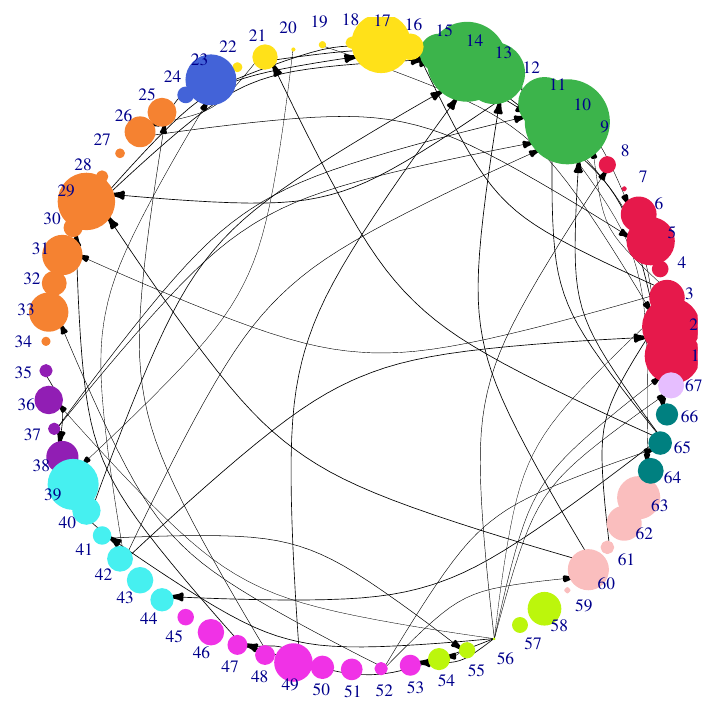}
        \Scaption{\label{LUSC_tau_9}}
    \end{minipage}\hfill
     \begin{minipage}{0.5\textwidth}
        \centering
        \includegraphics[scale=1]{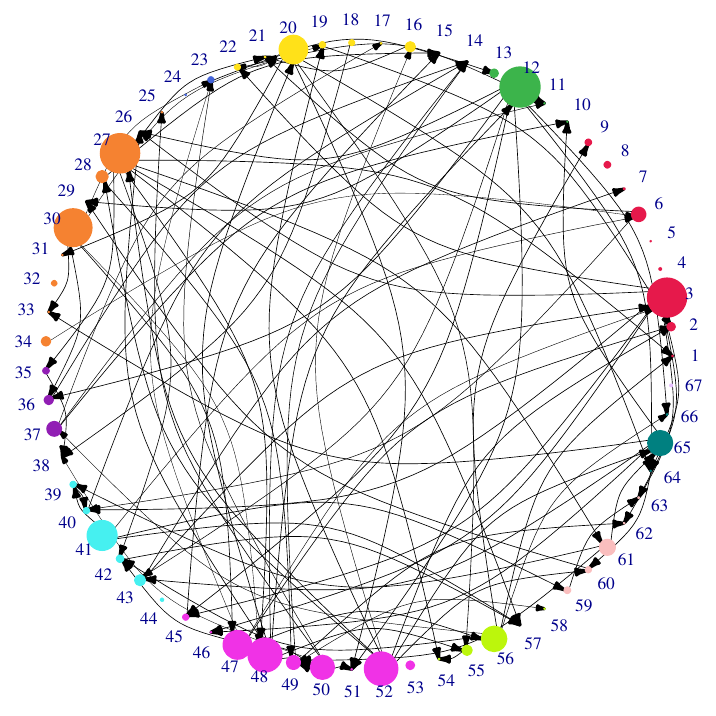}
        \Scaption{\label{LUSC_out_deg_tau_5}}
    \end{minipage}
    \caption{\label{LUSC_tau_1_5_9_results} Panels (a), (b), (c) show aggregated quantile-DAGs, $\Evec^{(\tau)}_{\text{LUSC}}$, for $\tau=0.1,\,0.5\text{ and }0.9$ respectively. Panel (d) shows $\Evec^{(0.5)}_{\text{LUSC}}\,$,  when the node size is proportional to out-degree of nodes. In all panels, nodes are colored according to the pathway to which they belong. The map between node colors and pathway names is given in Supplementary Table~\ref{pathway_color_mapping}. Note that there are some proteins which belong to multiple pathways; such proteins are just assigned to one of the pathways for the sake of clear visualization.}
\end{figure}

\end{document}